\newtheorem{prop}{Proposition}
\newtheorem{thm}{Theorem}
\newtheorem{cor}{Corollary}
\theoremstyle{remark}
\newtheorem{rmk}{Remark}
\newtheorem{example}{Example}
\newtheorem{assumption}{Assumption}
\providecommand{\norm}[1]{\lVert{#1}\rVert}
\providecommand{\abs}[1]{\left\lvert{#1}\right\rvert}
\providecommand{\reals}{\mathbb{R}}
\providecommand{\Rd}{\reals^d}
\providecommand{\eps}{\varepsilon}
\providecommand{\diff}{\mathrm{d}}
\providecommand{\oh}{\operatorname{o}}
\providecommand{\Oh}{\operatorname{O}}
\providecommand{\expec}{\operatorname{\mathbb{E}}}
\providecommand{\1}{\mathds{1}}
\providecommand{\tr}{\operatorname{tr}}
\providecommand{\normal}{\mathcal{N}}
\providecommand{\model}{\mathcal{M}}
\providecommand{\Prob}{\mathcal{P}} 
\providecommand{\cmp}{\mathcal{K}} 
\providecommand{\elliptical}{\mathcal{E}}
\providecommand{\prob}{\mathrm{P}}  
\providecommand{\probQ}{\mathrm{Q}} 
\providecommand{\probR}{\mathrm{R}} 
\providecommand{\probU}{\mathrm{U}} 
\providecommand{\emprob}{\widehat{\prob}_n} 
\providecommand{\cPna}{c_{n}(\alpha, \prob_0)}
\providecommand{\cMna}{c_{\model,n}(\alpha)}
\providecommand{\cMnta}{c_{n,\theta}(\alpha)}
\providecommand{\cMntha}{c_{n,\hat{\theta}_n}(\alpha)}
\definecolor{darkraspberry}{rgb}{0.53, 0.15, 0.34}
\definecolor{britishracinggreen}{rgb}{0.0, 0.26, 0.15}
\definecolor{burntumber}{rgb}{0.54, 0.2, 0.14}
\newcommand{\js}[1]{\textcolor{red}{\sffamily\small [JS: {#1}]}}
\newcommand{\gm}[1]{\textcolor{britishracinggreen}{\sffamily\small [GM: {#1}]}}
\begin{document}
	
\begin{frontmatter}	
\title{Multivariate Goodness-of-Fit Tests Based on Wasserstein Distance}
\runtitle{Wasserstein Goodness-of-Fit Tests}
	
\begin{aug}
\author{\fnms{Marc} \snm{Hallin}}

\address{ECARES and D\'epartement de Math\'ematique, Universit\'e libre de Bruxelles\\
Avenue F.D.\ Roosevelt 50, 1050 Brussels, Belgium\\
e-mail: \href{mailto:mhallin@ulb.ac.be}{\tt mhallin@ulb.ac.be}}

\author{\fnms{Gilles} \snm{Mordant}\corref{}}
\and
\author{\fnms{Johan} \snm{Segers}}

\address{LIDAM/ISBA, UCLouvain\\
Voie du Roman Pays 20/L1.04.01, B-1348 Louvain-la-Neuve, Belgium\\
e-mail: \href{mailto:gilles.mordant@uclouvain.be}{\tt gilles.mordant@uclouvain.be}; \href{mailto:johan.segers@uclouvain.be}{\tt johan.segers@uclouvain.be}}

\runauthor{M. Hallin, G. Mordant and J. Segers}

\affiliation{Universit\'e libre de Bruxelles and UCLouvain}
\end{aug}
	
\begin{abstract}
Goodness-of-fit tests based on the empirical Wasserstein distance are proposed for simple and composite null hypotheses involving general multivariate distributions. 
For group families, the procedure is to be implemented after preliminary reduction of the data via invariance.
This property allows for calculation of exact critical values and $p$-values at finite sample sizes. 
Applications include testing for location--scale families and testing for families arising from affine transformations, such as elliptical distributions with given standard radial density and unspecified location vector and scatter matrix. 
A novel test for multivariate normality with unspecified mean vector and covariance matrix arises as a special case.
For more general parametric families, we propose a parametric bootstrap procedure to calculate critical values.
The lack of asymptotic distribution theory for the empirical Wasserstein distance means that the validity of the parametric bootstrap under the null hypothesis remains a conjecture. 
Nevertheless, we show that the test is consistent against fixed alternatives.
To this end, we prove a uniform law of large numbers for the empirical distribution in Wasserstein distance, where the uniformity is over any class of underlying distributions satisfying a uniform integrability condition but no additional moment assumptions.
The calculation of test statistics boils down to solving the well-studied semi-discrete optimal transport problem. 
Extensive numerical experiments demonstrate the practical feasibility and the excellent performance of the proposed tests for the Wasserstein distance of order $p = 1$ and $p = 2$ and for dimensions at least up to $d = 5$. The simulations also lend support to the conjecture of the asymptotic validity of the parametric bootstrap.
%
%
%
%
\end{abstract}

\begin{keyword}
	\kwd{Copula}
	\kwd{Elliptical distribution}
	\kwd{Goodness-of-fit}
	\kwd{Group families}
	\kwd{Multivariate normality}
	\kwd{Optimal transport}
	\kwd{Semi-discrete problem}
	\kwd{Skew-t distribution}
	\kwd{Wasserstein distance}
\end{keyword}
\tableofcontents
\end{frontmatter}

\section{Introduction}

Wasserstein distances are metrics on spaces of probability measures with certain finite moments. They measure the distance between two such distributions by the minimal cost of moving probability mass in order to transform one distribution into the other. Wasserstein distances have a long history and continue to attract interest from diverse fields in statistics, machine learning, and computer science, in particular image analysis; see for instance the monographs and reviews by \citet{santambrogio2015optimal}, \citet{peyre+c:2019}, and \citet{panaretos2019statistical}.

A natural application of any meaningful distance between distributions is to the goodness-of-fit (GoF) problem---namely, the problem of testing the null hypothesis that a sample comes from a population with fully specified distribution~$\prob_0$ or with unspecified distribution within some postulated parametric model~$\model$. GoF problems certainly are among the most fundamental and classical ones in statistical inference. Typically, GoF tests are based on some distance between the empirical distribution~$\emprob$ and the null distribution~$\prob_0$ or an estimated distribution in the null model~$\model$.
The most popular ones are the Cram\'{e}r--von Mises \citep{cramer1928, vMises1928} and Kolmogorov--Smirnov \citep{kolmogorov1933, smirnov1939} tests, involving distances between the cumulative distribution function of $\prob_0$ and the empirical one.  Originally defined for univariate distributions only, they have been extended to the multivariate case, for instance in \citet{khmaladze2016unitary}, who proposes a test that has nearly all properties one could wish for, including asymptotic distribution-freeness, but whose implementation is computationally heavy and quickly gets intractable. 

Many other distances have been considered in this context, though. Among them, distances between densities (after kernel smoothing) have attracted much interest, starting with \citet{bickel1973some} in the univariate case. \citet{bakvsajev2015multivariate} recently proposed a multivariate extension; the choice of a bandwidth matrix, however, dramatically affects the outcome of the resulting testing procedure. 
\citet{fan1997goodness} considers a distance between characteristic functions, which accommodates arbitrary dimensions; the idea is appealing but the estimation of the integrals involved in the distance seems tricky. \citet{mcassey2013empirical} proposes a heuristic test that relies on a comparison of the empirical Mahalanobis distance with a simulated one under the null. Still in a multivariate setting, \citet{ebner2018multivariate} define a distance based on sums of powers of weighted volumes of $k$th nearest neighbour spheres. 

The use of the Wasserstein distance for GoF testing has been considered mostly for univariate distributions \citep{munk1998nonparametric, delbarrio1999, del2000contributions, delbarrio2005}. For the multivariate case, available methods are restricted to discrete distributions \citep{sommerfeld2018inference} and Gaussian ones \citep{rippl2016limit}. Indeed, serious difficulties,  both computational and theoretical,  hinder the development of Wasserstein GoF tests for general multivariate continuous distributions, particularly in the case of  composite null hypotheses. 
Such hypotheses
are generally more realistic than simple ones. 
    Of particular practical importance is the case of location--scale and location--scatter families: tests of multivariate Gaussianity, tests of elliptical symmetry with given standard radial density, etc., belong to that type. Although the asymptotic null distribution of empirical processes with estimated parameters is well known \citep[Theorem~19.23]{vdv98}, the actual exploitation of that theory in GoF testing remains problematic because of the difficulty of computing critical values.

The aim of this paper is to explore the potential of the Wasserstein distance for GoF tests of simple (consisting of one fully specified distribution)  and composite  (consisting of a parametric family of distributions) null hypotheses involving continuous multivariate distributions. 
The tests we are  proposing are based on the Wasserstein distance between the empirical distribution of the data or estimated residuals on the one hand and a model-based estimate thereof on the other hand.
%
We concentrate on the continuous case, i.e., the distributions under the null hypothesis are absolutely continuous with respect to the $d$-dimensional Lebesgue measure.  
The test statistic involves the Wasserstein distance between a discrete empirical distribution and a continuous distribution specified by the null hypothesis.
Calculating such a distance requires solving the semi-discrete transportation problem, an active area of research in computer science.

In case of a simple null hypothesis, the null distribution of the test statistic does not depend on unknown parameters. Exact critical values can be calculated with arbitrary precision via a Monte Carlo procedure, by simulating from the null distribution and computing empirical quantiles.

Exact critical values can also be computed for Wasserstein tests for the GoF of a group family, that is, a model that arises by applying a transformation group to some specified distribution \citep[pp.~16--23]{lehmann:1998}.
If the parameter estimate is equivariant, the data can be reduced in such a way that their distribution no longer depends on the unknown parameter.
The Wasserstein distance between this parameter-free distribution and the empirical distribution of the reduced data then provides a test statistic whose null distribution does not depend on the unknown parameter either.
Important special cases include elliptical distributions with known radial distribution and unknown location vector and scatter matrix.
In particular, our approach yields a novel test for multivariate normality with unknown mean vector and covariance matrix. 

For general parametric models, the test statistic measures the Wasserstein distance between the empirical distribution and the model-based one with estimated parameter.
A reduction via invariance is no longer possible and we rely on the parametric bootstrap to calculate critical values.
Still, some parameters, such as location-scale parameters, can be factored out, again by relying on transformation groups.
The question whether the parametric bootstrap has the correct size under the null hypothesis remains open.
A proof of that property would require asymptotic distribution theory for the empirical Wasserstein distance---a hard and long-standing open problem, which we  briefly review in Section~\ref{sec:asy}, the solution of which is beyond the scope of this paper. 
Monte Carlo experiments, however, support our conjecture that the parametric bootstrap has the correct size at least asymptotically.


In all cases, even in the general parametric case, we show that our Wasserstein GoF tests are consistent against fixed alternatives, that is, the null hypothesis under such alternatives is rejected with probability tending to one as the sample size tends to infinity.
For the general parametric case, the proof relies on the uniform consistency in probability of the empirical distribution with respect to the Wasserstein distance, uniformly over families of distributions that satisfy a uniform integrability condition.
To the best of our knowledge, this result is new. 

We conduct an extensive simulation study to assess the finite-sample performance of the Wasserstein tests of order $p \in \{1, 2\}$ in comparison to other GoF tests. The set-up involves both simple and composite null hypotheses as well as a wide variety of alternatives.
The experiments lend support to the conjecture that the parametric bootstrap is valid asymptotically.
In comparison to other GoF tests available in the literature, the Wasserstein test demonstrates good power.
This is especially true for the test of multivariate normality, where, out of the many available tests in the literature, we select the ones of \citet{royston1983}, \citet{henze1990class} and \citet{rizzo2016energy} as benchmarks.

In a recent strand of literature, measure transportation serves to link a multivariate probability measure to a standard reference distribution, yielding novel concepts of multivariate ranks, signs, and quantiles \citep{carlier2016vector, chernozhukov2017monge, del2018center}.
Here we do not make this step, as the Wasserstein distances we are considering are between distributions defined on the sample space.

The outline of the paper is as follows. 
In the remainder of this introduction, we introduce the Wasserstein distance (Section~\ref{sec:Wasserstein}), review the asymptotic theory of empirical Wasserstein distance (Section~\ref{sec:asy}), and provide some information on the computational methods for the semi-discrete transportation problem underlying the implementation of the Wasserstein GoF tests (Section~\ref{sec: computations}). 
In Section~\ref{sec: Test}, we give a formal description of the GoF test procedure for simple null hypotheses.
Section~\ref{sec:group} addresses the composite null hypothesis that the unknown distribution belongs to some group family.
Composite null hypotheses covering general parametric models are treated in Section~\ref{sec:param}. In Section~\ref{sec:param:group} we mention a hybrid approach, where some components of the parameter vector are factored out by relying on a transformation group.
In Section~\ref{sec:simu}, finally, we report on the results of our numerical experiments. In Appendix~\ref{app:empWassUnif}, the convergence of the empirical Wasserstein distance uniformly over certain classes of underlying distributions is stated and proved. 
Appendix~\ref{sec:boot:cons} is related to the consistency of the parametric bootstrap.
The other appendices contain further details on the simulation study. 

\subsection{Wasserstein distance}
\label{sec:Wasserstein}

Let $\Prob(\Rd)$ be the set of Borel probability measures on $\Rd$ and let $\Prob_p(\Rd)$ be the subset of such measures with a finite moment of order $p \in [1, \infty)$. For~$\prob, \probQ \in \Prob(\Rd)$, let $\Gamma(\prob,\probQ)$ be the set of probability measures $\gamma$ on $\Rd\times \Rd$ with marginals $\prob$ and $\probQ$, i.e., such that $\gamma(B \times \Rd) = \prob(B)$ and $\gamma(\Rd \times B) = \probQ(B)$ for Borel sets $B \subseteq \Rd$. The $p$-Wasserstein distance between $\prob, \probQ \in \Prob_p(\Rd)$ is
\[
	W_p(\prob,\probQ) := 
	\left(
		\inf_{\gamma\in\Gamma(\prob,\probQ)} 
		\int_{\Rd\times \Rd} \norm{x-y}^p \, \diff\gamma(x,y)
	\right)^{1/p},
\]
with $\norm{\,\cdot\,}$ the Euclidean norm. In terms of random variables $X$ and $Y$ with laws $\prob$ and $\probQ$, respectively, the $p$-Wasserstein distance is the smallest value of $\{\expec(\norm{X-Y}^p)\}^{1/p}$ over all possible joint distributions $\gamma \in \Gamma(\prob, \probQ)$ of $(X, Y)$.

The $p$-Wasserstein distance $W_p$ defines a metric on $\Prob_p(\Rd)$, which thereby becomes a complete separable metric space \citep[Theorem~6.18 and the bibliographical notes]{villani2008optimal}.
Convergence in the $W_p$ metric is equivalent to weak convergence plus convergence of moments of order $p$; see for instance \citet[Lemmas~8.1 and~8.3]{bickel+f:1981} and  \citet[Theorem~6.9]{villani2008optimal}.

For univariate distributions $\prob$ and $\probQ$ with distribution functions $F$ and $G$, the $p$-Wasserstein distance boils down to the~$L^p$-distance 
\begin{equation}
\label{eq:Wp:d1}
W_p(\prob, \probQ) = \left(\int_0^1 \big\vert F^{-1}(u) - G^{-1}(u)\big\vert ^p \, \diff u\right)^{1/p}
\end{equation}
 between the respective quantile functions $F^{-1}$ and $G^{-1}$. This representation considerably facilitates both the computation of the distance and the asymptotic theory of its empirical versions. Also, the optimal transport plan mapping $X \sim \prob$ to $Y \sim \probQ$ is immediate: if $F$ has no atoms, then $Y:=G^{-1} \circ F(X)\sim \probQ$, while monotonicity of $G^{-1} \circ F$ implies the optimality of the coupling $(X,Y)$, see for instance \citet[Section~1.2.3]{panaretos2019statistical}.

\subsection{Asymptotic theory: results and an open problem}
\label{sec:asy}

To construct critical values for Wasserstein GoF tests of general parametric models, we will propose in Section~\ref{sec:param} the use of the parametric bootstrap. 
In general, proving consistency of the parametric bootstrap requires having, under contiguous alternatives, non-degenerate limit distributions of the statistic of interest \citep{beran1997, capanu:2019}. 
For Wasserstein distances involving empirical distributions, such results are still far beyond the horizon, as the following short survey will show.

Let $X_1, \ldots, X_n$ be an i.i.d.\ (independent and identically distributed) sample from $\prob \in \Prob(\Rd)$. 
The empirical distribution of the sample is $\emprob := 
 n^{-1} \sum_{i=1}^n \delta_{X_i}$, with $\delta_{x}$ the Dirac measure at $x$.
Assuming that $\prob$ has a finite moment of order $p \in [1, \infty)$, we are interested in the empirical Wasserstein distance~$W_p(\emprob, \prob)$.


According to \citet[Lemma 8.4]{bickel+f:1981}, the empirical distribution is strongly consistent in the Wasserstein distance: for an i.i.d.\ sequence~$X_1, X_2, \ldots$ with common distribution $\prob$, we have $W_p(\emprob, \prob) \to 0$ almost surely as~$n \to \infty$. 
Bounds and rates for the expectation of the empirical Wasserstein distance have been studied intensively; see \citet[Section~3.3]{panaretos2019statistical} for a review. 
If $\prob$ is non-degenerate, then~$\expec[W_p(\emprob, \prob)]$ is at least of the order $n^{-1/2}$, and if $\prob$ is absolutely continuous, which is the case of interest here, the convergence rate cannot be faster than $n^{-1/d}$. 
Actually, the rate can be arbitrarily slow, even in the one-dimensional case \citep[Theorem~3.3]{bobkov+l:2019}. 
Precise rates under additional moment assumptions are given, for instance, in \citet{fournier2015rate}. 
In Appendix~\ref{app:empWassUnif}, we will show that the convergence in $p$th mean takes place uniformly over families $\model \subset \Prob_p(\Rd)$ of probability measures satisfying a uniform integrability condition.
For distributions on compact metric spaces, \citet{weed2019sharp} provide sharp rates for $\expec[W_p(\emprob, \prob)]$ in terms of what they coin the \emph{Wasserstein dimension} of $\prob$.
For Lebesgue-absolutely continuous measures on $\Rd$, this dimension is just $d$.
Moreover, they exploit McDiarmid's bounded difference inequality to derive a concentration inequality of $W_p^p(\emprob, \prob)$ around its expectation.

Asymptotic results on the distribution of the empirical Wasserstein distance in dimension~$d \ge 2$ are, however, surprisingly scarce. 
The question is whether there exist sequences $a_n > 0$ and $b_n \ge 0$ such that $a_n \{ W_p^p(\emprob, \prob) - b_n \}$ converges in distribution to a non-degenerate limit.
Although this problem has already attracted a lot of attention, a general answer remains elusive. 

The one-dimensional case is well-studied thanks to the link \eqref{eq:Wp:d1} to empirical quantile processes \citep{delbarrio2005, bobkov+l:2019}.
For discrete distributions, large-sample theory for the empirical Wasserstein distance is available too \citep{sommerfeld2018inference, tameling2019}.
For multivariate Gaussian distributions, a central limit theorem for the empirical Wasserstein of order $p = 2$ between the true distribution and the one with estimated parameters is given in \citet{rippl2016limit}.
Although interesting and useful for GoF testing (see Section~\ref{sec:simu:simple:bivGauss}), this result does not cover the empirical distribution $\emprob$.

\citet{ambrosio2018pde} exploit the possibility to linearize the $2$-Wasserstein distance in dimension $d = 2$ in case the optimal transport plan is close to the identity.
The technique requires balancing the errors due to the dual Sobolev norm approximation and a smoothing step. 
\citet{mena2019statistical} derive a limit theorem for the empirical entropic optimal transport cost. 
We refer to the latter for an introduction to optimal transport with entropic regularization.
Recent progress has been booked in \citet{goldfeld+k:2020}, who obtain a central limit theorem for the empirical $1$-Wasserstein distance after smoothing the empirical and the true distributions with a Gaussian kernel. 

Important advances on the limit distribution have been made by \citet{del2017central} who obtained results under fixed alternatives. For general~$\prob, \probQ \in \Prob_{4+\delta}(\Rd)$ for some $\delta > 0$, they establish a central limit theorem for 
\[
	n^{1/2} \bigl[
		W_2^2(\emprob, \probQ) - \expec\{W_2^2(\emprob, \probQ)\}
	\bigr].
\]
The result is proved using the Efron--Stein inequality combined with stability of optimal transport plans.
Unfortunately, if $\probQ = \prob$, the asymptotic variance is zero, meaning that the random fluctuations of $W_2(\emprob, \prob)$ around its mean are of order smaller than~$n^{-1/2}$.
The authors conclude that their proof technique is of little use for the case we are interested in.
The crucial problem of the limiting distribution of the empirical Wasserstein distance thus remains an important and difficult open problem. 

\subsection{Computational issues}
\label{sec: computations}

In the last decade, important numerical developments have taken place in the area of measure transportation. 
The problem to be faced here is the computation of the Wasserstein distance between a discrete and a continuous distribution, the so-called semi-discrete optimal transportation problem. 
Most algorithms to date rely on the dual formulation of the problem, assuming that the source continuous probability measure $\prob$ admits a density $f$ w.r.t.\ the Lebesgue measure on $\reals^d$; see, e.g., \citet[Section~6.4.2]{santambrogio2015optimal} for a didactic exposition.
This formulation is the basis for the multi-scale algorithm for the squared Euclidean distance ($p = 2$) developed in \citet{merigot2011multiscale}, with further improvements in \citet{levy2015numerical} and \citet{kitagawa2016convergence}.
It requires constructing a power diagram or Laguerre--Voronoi diagram, partitioning $\Rd$ into convex polyhedra called power cells.
With the Euclidean distance as cost function ($p = 1$) the edges of the cells involved in the tessellation are no longer linear, making the computation more demanding \citep{hartmann2020semi}. 
\citet{genevay2016stochastic} show that a semi-discrete reformulation of the dual program can be tackled by the stochastic averaged gradient (SAG) method \citep{schmidt:2017}.

In our numerical experiments in Section~\ref{sec:simu}, we assess the finite-sample performance of the test statistic based on the $p$-Wasserstein distance for $p \in \{1, 2\}$ and for $d$-variate distributions for $d \in \{2, 5\}$.
To the best of our knowledge, an implementation of the SAG method is not yet available in \textsf{R} \citep{R}. 
After preliminary tests and running time assessment, we made the following choices of algorithms and implementations:
\begin{itemize}
	\item In case $p = 2$ and $d = 2$, we relied on the \textsf{R} package \textsf{transport} \citep{transport}, which implements the multi-scale algorithm in \citet{merigot2011multiscale}.
	\item In all other cases ($p = 1$ or $d = 5$), we relied on our own \textsf{C} implementation of the SAG method as employed in \citet{genevay2016stochastic}. 
\end{itemize}
A first version of the package making our implementation available is  to be found on https://github.com/gmordant/WassersteinGoF.
\section{Wasserstein GoF tests for  simple null hypotheses}
\label{sec: Test}

Let $\mathbf{X}_n = (X_1, \ldots, X_n)$ be an independent random sample from some unknown distri\-bution $\prob \in \Prob(\Rd)$. For some given fixed  $\prob_0 \in \Prob_p(\Rd)$, consider testing the simple null hypothesis 
\[
	\mathcal{H}_0^n: \prob = \prob_0 \qquad \text{against} \qquad \mathcal{H}_1^n: \prob\neq\prob_0 
\]
based on the observations $\mathbf{X}_n$. Note that $\prob$, under the alternative, is not required to have finite moments of order $p$.

Let $\emprob = n^{-1} \sum_{i=1}^n \delta_{X_i}$ denote the empirical distribution and consider the test statistic 
\begin{equation} 
\label{eq:Tn}
	T_n:= W_p^p(\emprob,\prob_0),
\end{equation}
the $p$th power of the $p$-Wasserstein distance between $\emprob$ and the distribution $\prob_0$ specified by the null hypothesis. Having bounded support, $\emprob$ trivially belongs to $\Prob_p(\Rd)$, so that $T_n$ is well defined. 

Actual computation of $T_n$ amounts to solving the semi-discrete optimal transport problem. In the numerical experiments in Section~\ref{sec:simu}, we show results for $p \in \{1, 2\}$. The theory, however, is developed for general~$p \ge 1$. 

Let $F_n(t) = \prob_0^n[T_n \le t]$ for $t \in [0, \infty)$ denote the distribution function of the test statistic under $\mathcal{H}_0^n$. Here, $\prob_0^n$ stands for the distribution under $\mathcal{H}_0^n$ of the observation $\mathbf{X}_n$, the $n$-fold product measure of $\prob_0$ on $(\Rd)^n$. 
The $p$-value of the test statistic is $1 - F_n(T_n)$, while the critical value for a test of size $\alpha \in (0, 1)$ is
\begin{equation}
\label{eq:crit:simple}
	\cPna
	:= \inf \big\{ t > 0 : F_n(t) \ge 1-\alpha \big\}
\end{equation}

The test we propose is then
\begin{equation}
	\phi^n_{\prob_0}= 
	\begin{cases} 
		1 & \text{if $1-F_n(T_n) \le \alpha$ or, equivalently, $T_n \ge \cPna$,} \\ 
		0 & \text{otherwise.} 
	\end{cases}
	\label{thetest}
\end{equation}
The exact size of the GoF test in \eqref{thetest} is $1-F_n(\cPna) \le \alpha$, with equality if and only if $F_n$ is continuous at $\cPna$. 
The type~I error is thus bounded by the nominal size~$\alpha$, and often equal to it. The null distribution of~$T_n$ depends on $\prob_0$, so that $\cPna$ needs to be calculated for each $\prob_0$ separately. 

Although $p$-values and critical values usually cannot be calculated analytically, they can be approximated with any desired degree of precision via the following simple Monte Carlo algorithm. Draw a large number of independent random samples of size $n$ from $\prob_0$, compute the test statistic for each such sample, and approximate $F_n$ by the empirical distribution function of the simulated test statistics. Critical values and $p$-values then can be calculated from the approximated $F_n$. By the Donsker theorem, any desired accuracy can be achieved by drawing sufficiently many samples. 

Under the alternative hypothesis, the test rejects the null hypothesis with probability tending to one, i.e., is consistent against any fixed alternative $\prob \ne \prob_0$.

\begin{prop}[Consistency]
	\label{prop:cons}
For every $\prob_0 \in \Prob_p(\Rd)$, the test $\phi^n_{\prob_0}$ is consistent against any $\prob \in \Prob(\Rd)$ with $\prob \ne \prob_0$: 
\[
	\lim_{n\to\infty} \prob^n[\phi^n_{\prob_0} = 1] = 1\qquad\text{for any $\alpha > 0$.}
\]
\end{prop}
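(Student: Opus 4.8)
The plan is to play off two opposing facts: the critical value $\cPna$ is driven by the null distribution and collapses to $0$, whereas the statistic $T_n = W_p^p(\emprob, \prob_0)$ computed on data from $\prob \neq \prob_0$ stays eventually bounded away from $0$. Once both are established, $\prob^n[\phi^n_{\prob_0} = 1] = \prob^n[T_n \ge \cPna] \to 1$ follows immediately.

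First I would show that $\cPna \to 0$ as $n \to \infty$. Under $\mathcal{H}_0^n$ the sample is i.i.d.\ from $\prob_0 \in \Prob_p(\Rd)$, so by the Bickel--Freedman consistency recalled in Section~\ref{sec:asy} we have $W_p(\emprob, \prob_0) \to 0$ almost surely, hence $T_n \to 0$ in $\prob_0^n$-probability. Thus for every $\eta > 0$ we have $\prob_0^n[T_n > \eta] \to 0$, so $\prob_0^n[T_n > \eta] \le \alpha$ for all large $n$, which forces $F_n(\eta) \ge 1 - \alpha$ and therefore $\cPna \le \eta$. Since $\eta > 0$ is arbitrary and $\cPna \ge 0$, this gives $\cPna \to 0$.

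Second, I would show that under the alternative $\prob$ there is a constant $c = c(\prob, \prob_0) > 0$ with $\prob[\liminf_n T_n \ge c] = 1$. If $\prob \in \Prob_p(\Rd)$, then Bickel--Freedman again gives $W_p(\emprob, \prob) \to 0$ a.s., and the triangle inequality for the metric $W_p$ yields $W_p(\emprob, \prob_0) \ge W_p(\prob, \prob_0) - W_p(\emprob, \prob) \to W_p(\prob, \prob_0) > 0$, so $T_n \to W_p^p(\prob, \prob_0) > 0$ a.s. If instead $\prob \notin \Prob_p(\Rd)$, i.e.\ $\expec \norm{X_1}^p = \infty$, I would invoke the elementary bound $W_p(\mu, \nu) \ge \bigl( \int \norm{x}^p \, \mu(\diff x) \bigr)^{1/p} - \bigl( \int \norm{x}^p \, \nu(\diff x) \bigr)^{1/p}$, which follows from Minkowski's inequality applied to any coupling of $\mu$ and $\nu$. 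Taking $\mu = \emprob$ and $\nu = \prob_0 \in \Prob_p(\Rd)$ gives $W_p(\emprob, \prob_0) \ge \bigl( n^{-1} \sum_{i=1}^n \norm{X_i}^p \bigr)^{1/p} - \bigl( \int \norm{x}^p \, \prob_0(\diff x) \bigr)^{1/p}$, and since the $\norm{X_i}^p$ are i.i.d.\ nonnegative with infinite mean, a truncation argument in the strong law gives $n^{-1} \sum_{i=1}^n \norm{X_i}^p \to \infty$ a.s., so $T_n \to \infty$ a.s. In both cases a suitable $c > 0$ exists.

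Finally I would combine the two parts: for $n$ large enough $\cPna < c/2$, while under $\prob$ we have $T_n \ge c/2$ for all large $n$ almost surely; hence $\1\{T_n \ge \cPna\} \to 1$ a.s.\ under $\prob$, and bounded convergence yields $\prob^n[\phi^n_{\prob_0} = 1] \to 1$. The only step needing genuine care is the case $\prob \notin \Prob_p(\Rd)$, where Wasserstein consistency of the empirical measure is unavailable and must be replaced by the moment lower bound together with the a.s.\ divergence of the empirical $p$th moment; everything else is routine.
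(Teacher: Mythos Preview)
Your proposal is correct and follows essentially the same route as the paper's proof: both show $\cPna \to 0$ via Bickel--Freedman under the null, then split the alternative into the cases $\prob \in \Prob_p(\Rd)$ (triangle inequality plus empirical Wasserstein consistency) and $\prob \notin \Prob_p(\Rd)$ (lower-bound $W_p(\emprob,\prob_0)$ by the empirical $p$th moment minus a constant and let the moment diverge). Your Minkowski bound is exactly the paper's triangle inequality with the Dirac mass $\delta_0$, since $W_p(\mu,\delta_0) = \bigl(\int \norm{x}^p\,\mu(\diff x)\bigr)^{1/p}$; the only cosmetic difference is that you work almost surely on the infinite product space whereas the paper stays with convergence in $\prob^n$-probability throughout.
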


\begin{proof}
	Fix $\prob_0 \in \Prob_p(\Rd)$. For any $\alpha > 0$, the critical value $c(\alpha, n, \prob_0)$ tends to zero as $n \to \infty$. Indeed, by \citet[Lemma~8.4]{bickel+f:1981}, we have~$T_n \to 0$ in~$\prob_0^n$-probability and thus $\lim_{n \to \infty} \prob_0^n[T_n > \eps] = 0$ for any $\eps > 0$. It follows that, for every $\alpha > 0$ and every $\eps > 0$, we have $\cPna \le \eps$ for all sufficiently large $n$.
	
	Let $\prob \in \Prob(\Rd)$ with $\prob \ne \prob_0$. We consider two cases according as $\prob$ has  finite moments of order $p$ or not. 
	
	First, suppose that $\prob \in \Prob_p(\Rd)$. Still by \citet[Lem\-ma~8.4]{bickel+f:1981}, we have $W_p(\emprob, \prob) \to 0$ in $\prob^n$-probability as $n \to \infty$. The triangle inequality for the metric $W_p$ yields
	\[
		\bigl| W_p(\emprob, \prob_0) - W_p(\prob, \prob_0) \bigr|
		\le
		W_p(\emprob, \prob)
		\to 0, \qquad n \to \infty
	\]
	in $\prob^n$-probability. Hence $T_n = W_p^p(\emprob, \prob_0) \to W_p^p(\prob, \prob_0)$ in $\prob^n$-probability as~$n \to \infty$. But $W_p^p(\prob, \prob_0) > 0$ since $\prob, \prob_0 \in \Prob_p(\Rd)$ and $\prob \ne \prob_0$ by assumption. It follows that $\lim_{n \to \infty} \prob^n[T_n > \cPna] = 1$, as required.
	
	Second, suppose that $\prob \in \Prob(\Rd) \setminus \Prob_p(\Rd)$. Let $\delta_0$ denote the Dirac measure at~$0 \in \Rd$. Since~$W_p$ is a metric, the triangle inequality implies 
	\[
		W_p(\emprob, \prob_0) \ge W_p(\emprob, \delta_0) - W_p(\prob_0, \delta_0).
	\]
	Now, $W_p(\prob_0, \delta_0)$ is a constant and $W_p^p(\emprob, \delta_0) = n^{-1} \sum_{i=1}^n \| X_i \|^p$. As the expecta\-tion of $\| X_1 \|^p$ under $\prob$ is infinite, the law of large numbers implies that~$W_p^p(\emprob, \delta_0) \to \infty$ in $\prob^n$-probability as $n \to \infty$. The same then holds for $T_n$ and thus 
	\[ 
		\lim_{n \to \infty} \prob^n[T_n > \cPna] = 1. 
		\qedhere 
	\]
\end{proof}

\section{Wasserstein GoF tests for group families}
\label{sec:group}

Let $\probQ_0 \in \Prob(\Rd)$ and let $G$ be a group of measurable transformations $g : \reals^d \to \reals^d$. That is, $G$ should be closed under composition ($g_1, g_2 \in G$ implies $g_1 \circ g_2 \in G$) and under inversion ($g \in G$ implies $g^{-1} \in G$). If the random variable $Z$ has distribution $\probQ_0$, the random variable $g(Z)$ has distribution $g_\# \probQ_0 := \probQ_0 \circ g^{-1}$, where the subscripted symbol~$\#$ denotes the push-forward of a measure by a measurable function. Let $\model = \{g_\# \probQ_0 : g \in G \}$ be the group family generated by $G$ and $\probQ_0$. We assume further that the transformation $g$ is identifiable, that is, the map $g \mapsto g_\# \probQ_0$ is one-to-one, so that $g_1 \ne g_2$ implies that $g_1(Z)$ and $g_2(Z)$ have different distributions, with again $Z \sim \probQ_0$.
Note that for any element $\prob$ of $\model$ we have $\model = \{ g_\# \prob : g \in G \}$, so that the choice of $\probQ_0$ in $\model$ is in some sense arbitrary.

Group families form one of the two principal classes of models covered in \citet{lehmann:1998}. Here are some prominent examples of transformation groups $G$ on $\Rd$ and some models $\model$ that they generate.

\begin{example}[Location--scale families]
	\label{ex:locscale}
	For $(a, b) \in \Rd \times (0, \infty)^d$, define $g_{a,b} : \Rd \to \Rd$ by $g(x) = (a_j + b_j x)_{j=1}^d$ for $x \in \Rd$. The model $\model$ is the location-scale family generated by $\prob_0$. We can also consider just the location family generated by the subgroup $x \mapsto g_{a,1}(x) = (x_j+a_j)_{j=1}^d$ and the scale family generated by the subgroup $x \mapsto g_{0,b}(x) = (b_jx_j)_{j=1}^d$. In dimension $d = 1$, we can generate in this way the normal and exponential families, for instance. 
\end{example}

\begin{example}[Affine transformations and elliptical distributions]
	\label{ex:affine}
	For $a \in \Rd$ and $B \in \Rd \times \Rd$ non-singular, define $g_{a,B} : \Rd \to \Rd$ by $g(x) = a + Bx$ for $x \in \Rd$. If $\probQ_0$ is the $d$-variate standard normal distribution $\normal_d(0, I_d)$, then $\model$ is the family of all $d$-variate Gaussian distributions with positive definite covariance matrix. More generally, if $\probQ_0$ is spherically symmetric around the origin, then $\model$ is the family of elliptical distributions with a given characteristic generator and positive definite scatter matrix \citep{cambanis+h+s:1981, fang+k+n:1990}. Besides the Gaussian family, another common example is the multivariate Student t distribution with a fixed number of degrees of freedom. For elliptical families, the matrix $B$ is not identifiable from the model but only the matrix $B B'$ is. Identifiability can be restored by restricting $B$ to the set of lower triangular matrices with positive elements on the diagonal.\footnote{For every symmetric positive definite matrix $S \in \reals^{d \times d}$, there exists a unique lower triangular matrix $L \in \reals^{d \times d}$ with positive diagonal elements, called Cholesky triangle, producing the Cholesky decomposition $S = L L'$ \citep[Theorem~4.2.5]{golub1996matrix}.} Note that the case of elliptical distributions with possibly degenerate scatter matrices is not covered here, as the corresponding affine transformation is not invertible.
\end{example}

In the examples above, the transformation group $G$ is parametrized by a Euclidean parameter $\theta \in \Theta$ with $\Theta \subseteq \reals^k$ for some dimension $k$, so that $G = \{ g_\theta : \theta \in \Theta \}$. We will assume this to be the case in general and write $\prob_\theta = (g_\theta)_\# \probQ_0$. The model then takes the form $\model = \{ \prob_\theta : \theta \in \Theta \}$. The mappings $\theta \mapsto g_\theta$ and $g \mapsto g_\# \probQ_0$ are assumed to be one-to-one. The parametrization $\theta \mapsto \prob_\theta$ then is also one-to-one, i.e., the model parameter $\theta$ is identifiable. Models generated by infinite-dimensional transformation groups exist as well, but the theory here is intended for the finite-dimensional situation, as the conditions to come seem too restrictive otherwise.

Let $\probQ_0 \in \Prob_p(\Rd)$ for some $p \in [1, \infty)$. Assume that, for all $g \in G$, there exists $c_g > 0$ such that
\begin{equation}
\label{eq:g:bound}
	\forall x \in \Rd, \qquad 
	\norm{g(x)} \le c_g (1+\norm{x}).
\end{equation}
Then it is easy to verify that $g_\# \probQ_0$ belongs to $\Prob_p(\Rd)$ for every $g \in G$ too and, therefore, $\model \subset \Prob_p(\Rd)$. 
This condition on $g$ is fulfilled for the transformations in Examples~\ref{ex:locscale} and~\ref{ex:affine}.

Let $\model = \{ \prob_\theta = (g_\theta)_\# \probQ_0 : \theta \in \Theta \} \subset \Prob_p(\Rd)$ be a group family as just described. Given an i.i.d.\ sample $\mathbf{X}_n = (X_1, \ldots, X_n)$ from some unspecified $\prob \in \Prob_p(\Rd)$, we wish to test the hypothesis
\begin{equation}
\label{eq:hyp:group}
	\mathcal{H}_0^n : \prob \in \model
	\quad \text{against} \quad
	\mathcal{H}_1^n : \prob \not\in \model.
\end{equation}
The parameter $\theta$ of the transformation $g_\theta$ is an unknown nuisance. In contrast to Section~\ref{sec: Test}, the null hypothesis is thus a composite one. An important special case is when $\probQ_0$ is the $d$-variate standard normal distribution and $G$ is the affine group in Example~\ref{ex:affine}: the testing problem~\eqref{eq:hyp:group} then concerns the hypothesis of multivariate normality with unspecified positive definite covariance matrix.

Our testing strategy is to choose some estimator $\hat{\theta}_n$ for $\theta$ and compute ``residuals'' of the form
\begin{equation}
\label{eq:Zni}
	\hat{Z}_{n,i} := g_{\hat{\theta}_n}^{-1}(X_i), \qquad i = 1, \ldots, n,
\end{equation}
yielding an empirical distribution $\emprob^{\hat{Z}} := n^{-1} \sum_{i=1}^n \delta_{\hat{Z}_{n,i}}$. The test statistic we propose is
\begin{equation}
\label{eq:Tgroup}
	T_{\model,n} := W_p^p\bigl(\emprob^{\hat{Z}}, \probQ_0\bigr).
\end{equation}
If the null distribution of $(\hat{Z}_{n,1}, \ldots, \hat{Z}_{n,n})$ does not depend on the unkown parameter $\theta$, then we can compute critical values and $p$-values for $T_{\model, n}$ as if the true distribution is $\probQ_0$. As in Section~\ref{sec: Test}, the null distribution of $T_{\model,n}$ can then be computed up to any desired accuracy via Monte Carlo random sampling from $\probQ_0$, and this prior to having observed the sample. 

For any $g \in G$, let $\bar{g} : \Theta \to \Theta$ denote the mapping $\theta \mapsto \bar{g}(\theta)$ characterized by $g \circ g_\theta = g_{\bar{g}(\theta)}$, so that $g_\# \prob_\theta = \prob_{\bar{g}(\theta)}$.
The estimator $\hat{\theta}_n = \theta_n(\mathbf{X}_n)$ is said to be \emph{equivariant} \citep[Definition~2.5]{lehmann:1998} if for every $g \in G$ and for every $(x_1,\ldots,x_n) \in (\Rd)^n$, we have
\begin{equation}
\label{eq:equivar}
\theta_n\bigl(g(x_1),\ldots,g(x_n)\bigr) 
= \bar{g}\bigl(\theta_n(x_1,\ldots,x_n)\bigr).
\end{equation}
Equivariance is a natural symmetry requirement and is satisfied for many common estimators. For a location parameter, it is satisfied by the mean and the median, or in fact any weighted average of the order statistics. For a scale parameter, it is satisfied by the standard deviation and by the mean or median absolute deviation. For the affine group in Example~\ref{ex:affine} with $B$ restricted to be lower triangular and with positive diagonal elements, it is satisfied by the lower Cholesky triangle of the empirical covariance matrix. The proof of the latter is elementary and follows from the uniqueness of the Cholesky decomposition and the fact that the set of lower triangular matrices with positive diagonal elements forms a multiplicative group.
Equivariance is also satisfied by maximum likelihood estimators provided the transformations $g$ are diffeormorphisms: by the change-of-variables formula, $\hat{\theta}_n$ maximizes the likelihood given the sample $x_1,\ldots,x_n$ if and only if $\bar{g}(\hat{\theta}_n)$ maximizes the likelihood given the sample $g(x_1),\ldots,g(x_n)$.

In the group model $\model = \{ \prob_\theta = (g_\theta)_\# \probQ_0 : \theta \in \Theta \}$, if the estimator $\hat{\theta}_n$ is equivariant, then for an i.i.d.\ sample $X_1,\ldots,X_n$ from $\prob_\theta \in \model$, the joint distribution of $(\hat{Z}_{n,i})_{i=1}^n$ in \eqref{eq:Zni} does not depend on $\theta \in \Theta$ and is the same as if $X_1,\ldots,X_n$ were an i.i.d.\ sample from $\probQ_0$.
%
As a consequence, 
the distribution of $T_{\model,n}$ in~\eqref{eq:Tgroup} under any $\prob_\theta \in \model$ is the same as under $\probQ_0$. Let $F_{\model,n}$ denote its cumulative distribution function. The $p$-value of the observed test statistic is
\[
	1 - F_{\model,n}(T_{\model,n})
\]
while the critical value at level $\alpha \in (0, 1)$ is
\[
	\cMna = \inf \{ c > 0 : F_{\model,n}(c) \ge 1-\alpha \}.
\]
For the testing problem \eqref{eq:hyp:group}, we propose the test
\begin{equation}
\label{eq:phi:group}
	\phi_{\model}^n :=
	\begin{cases}
		1 & \text{if $1 - F_{\model,n}(T_{\model,n}) \le \alpha$, or equivalently, $T_{\model,n} \ge \cMna$,} \\
		0 & \text{otherwise.}
	\end{cases}
\end{equation}
The actual size of the test is $1 - F_{\model,n}(\cMna) \le \alpha$, with equality if and only if $F_{\model,n}$ is continuous in $\cMna$. 
Formally, the case of a single null hypothesis in Section~\ref{sec: Test} can be seen as a special case by letting $\prob_0 = \probQ_0$ and $G$ the trivial group containing only the identity mapping.

Since the null distribution $F_{\model,n}$ does not depend on any unknown parameter, critical values and $p$-values values can be computed with arbitrary precision by a Monte Carlo algorithm as we did in Section~\ref{sec: Test}. The difference is now that we generate samples from $\probQ_0$. Note that the critical values can be computed prior to having seen the data.

To show that the test is consistent, we need an extra assumption on $G$: for every $\theta \in \Theta$ there exists $m_\theta > 0$ such that for all $\theta' \in \Theta$ in some neighbourhood of $\theta$, we have
\begin{equation}
\label{eq:g:lip}
	\sup_{x \in \Rd} \frac{\norm{g_{\theta'}^{-1} \circ g_\theta(x) - x}}{1 + \norm{x}}
	\le
	m_\theta \norm{\theta' - \theta}.
\end{equation}
The condition is fulfilled for the transformation groups and parametrizations in Examples~\ref{ex:locscale} and~\ref{ex:affine}. For the affine group in Example~\ref{ex:affine}, the property~\eqref{eq:g:lip} follows from continuity of matrix inversion with respect to the matrix norm induced by the Euclidean norm. We will also need weak consistency of the estimator: for every $\theta \in \Theta$, we have $\hat{\theta}_n \to \theta$ as $n \to \infty$ in $\prob_\theta^n$-probability. To prove this for the affine group in Example~\ref{ex:affine}, it is helpful to know that the map that sends a positive definite symmetric matrix to its Cholesky triangle is differentiable \citep{smith1995differentation} and thus continuous.

\begin{prop}[Consistency against fixed alternatives]
	\label{prop:group}
	Let the group family $\model = \{ \prob_\theta = (g_\theta)_\# \probQ_0 : \theta \in \Theta \} \subset \Prob_p(\Rd)$ be such that $\theta$ is identifiable as above and such that~\eqref{eq:g:bound} and~\eqref{eq:g:lip} are satisfied. Let $\hat{\theta}_n$ be an equivariant and weakly consistent estimator sequence of $\theta \in \Theta$.
	\begin{enumerate}[(i)]
		\item We have $T_{\model,n} \to 0$ as $n \to \infty$ in $\prob_\theta^n$-probability for any $\theta \in \Theta$.
		\item Let $\prob \in \Prob_p(\Rd) \setminus \model$. If $\hat{\theta}_n$ converges weakly to some $\theta \in \Theta$ under $\prob^n$, then $\prob^n[ \phi_{\model}^n = 1 ] \to 1$ as $n \to \infty$.
	\end{enumerate}
\end{prop}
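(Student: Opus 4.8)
The overall strategy is to control $T_{\model,n}=W_p^p(\emprob^{\hat Z},\probQ_0)$ via the triangle inequality for the metric $W_p$, reducing each term either to the strong consistency of the empirical measure in $W_p$ \citep[Lemma~8.4]{bickel+f:1981} or to an $L^p$-type bound comparing the residuals $\hat Z_{n,i}=g_{\hat\theta_n}^{-1}(X_i)$ with their limiting counterparts; the latter bound will come from the pointwise Lipschitz estimate~\eqref{eq:g:lip} combined with the growth bound~\eqref{eq:g:bound} and the law of large numbers.

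For part~(i), I would first invoke the invariance property stated just before the proposition: under $\prob_\theta^n$ the joint law of $(\hat Z_{n,i})_{i=1}^n$, hence the law of $T_{\model,n}$, does not depend on $\theta$ and is the one obtained from an i.i.d.\ $\probQ_0$-sample, which I relabel $X_1,\dots,X_n$, with $\emprob=n^{-1}\sum_i\delta_{X_i}$. Since $G$ is a group and the maps $\theta\mapsto g_\theta$ and $g\mapsto g_\#\probQ_0$ are one-to-one, there is $\theta_0\in\Theta$ with $g_{\theta_0}=\mathrm{id}$ and $\prob_{\theta_0}=\probQ_0$, and the estimator is weakly consistent for $\theta_0$ under this sample. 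Coupling $\hat Z_{n,i}$ with $X_i$ and applying~\eqref{eq:g:lip} at $\theta=\theta_0$ on the event $\{\hat\theta_n\in N\}$ (for the relevant neighbourhood $N$), whose probability tends to $1$, gives
\[
  W_p^p\bigl(\emprob^{\hat Z},\emprob\bigr)
  \le\frac1n\sum_{i=1}^n\norm{g_{\hat\theta_n}^{-1}(X_i)-X_i}^p
  \le m_{\theta_0}^p\,\norm{\hat\theta_n-\theta_0}^p\cdot\frac1n\sum_{i=1}^n\bigl(1+\norm{X_i}\bigr)^p .
\]
Because $\probQ_0\in\Prob_p(\Rd)$, the last average converges a.s.\ by the law of large numbers while $\norm{\hat\theta_n-\theta_0}\to0$ in probability, so $W_p(\emprob^{\hat Z},\emprob)\to0$ in probability; together with $W_p(\emprob,\probQ_0)\to0$ a.s.\ and the triangle inequality this yields $T_{\model,n}\to0$ in probability, i.e.\ part~(i).

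For part~(ii), part~(i) gives first, exactly as in Proposition~\ref{prop:cons}, that $\cMna\to0$ for every $\alpha>0$, since $F_{\model,n}$ is the (parameter-free) distribution function of $T_{\model,n}$ under $\probQ_0$. It then remains to show that $T_{\model,n}$ stays bounded away from $0$ in $\prob^n$-probability. Put $h:=g_{\theta^\ast}^{-1}\in G$ and $Y_i:=h(X_i)$; by~\eqref{eq:g:bound} applied to $h$ we have $h_\#\prob\in\Prob_p(\Rd)$, so $Y_1,\dots,Y_n$ are i.i.d.\ from $h_\#\prob\in\Prob_p(\Rd)$. Two steps finish: (a) writing $\hat Z_{n,i}=(g_{\hat\theta_n}^{-1}\circ g_{\theta^\ast})(Y_i)$, coupling $\hat Z_{n,i}$ with $Y_i$ and applying~\eqref{eq:g:lip} at $(\theta,\theta')=(\theta^\ast,\hat\theta_n)$ on the high-probability event $\{\hat\theta_n\in N\}$, combined with the law of large numbers for $n^{-1}\sum_i(1+\norm{Y_i})^p$ and $\hat\theta_n\to\theta^\ast$ in probability, yields $W_p(\emprob^{\hat Z},h_\#\emprob)\to0$ in probability; (b) since $h_\#\emprob$ is the empirical measure of the i.i.d.\ $h_\#\prob$-sample $(Y_i)$, \citet[Lemma~8.4]{bickel+f:1981} gives $W_p(h_\#\emprob,h_\#\prob)\to0$ a.s. By the triangle inequality, $T_{\model,n}\to W_p^p(h_\#\prob,\probQ_0)=:\tau$ in $\prob^n$-probability. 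Finally $\tau>0$: both $h_\#\prob$ and $\probQ_0$ lie in $\Prob_p(\Rd)$, and $h_\#\prob=\probQ_0$ would force $\prob=(g_{\theta^\ast})_\#\probQ_0=\prob_{\theta^\ast}\in\model$, contradicting $\prob\notin\model$, so $\tau>0$ because $W_p$ is a metric. Since $\cMna\to0<\tau$, we get $\prob^n[\phi_{\model}^n=1]=\prob^n[T_{\model,n}\ge\cMna]\ge\prob^n[T_{\model,n}\ge\tau/2]\to1$, which is the claim.

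The only genuinely delicate point—and it is a routine one—is converting the pointwise-in-$x$ bound~\eqref{eq:g:lip} into the displayed $L^p$-bound on the residuals: one must let the $(1+\norm{x})$ factor be absorbed by the $p$-th moment of $\probQ_0$ (resp.\ of $h_\#\prob$), and one must handle with care the random event on which $\hat\theta_n$ fails to lie in the neighbourhood where~\eqref{eq:g:lip} is valid, which is dealt with by the usual convergence-in-probability bookkeeping since $\hat\theta_n$ is consistent. Everything else is a standard triangle-inequality/Slutsky argument built on Bickel--Freedman's $W_p$-consistency of the empirical measure.
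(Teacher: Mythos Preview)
Your proof is correct and follows essentially the same route as the paper's: both parts use the triangle inequality for $W_p$ to split off a Bickel--Freedman term $W_p(\emprob^{Z},\probQ_0)\to0$ (resp.\ $W_p(\emprob^{Y},\probQ_1)\to0$) and a residual comparison term, which is then controlled by the diagonal coupling together with~\eqref{eq:g:lip} and the law of large numbers for $(1+\norm{\cdot})^p$. Your $\theta_0$ and $\theta^\ast$ are the paper's $\theta_e$ and $\theta$, and your $h_\#\prob$ is the paper's $\probQ_1$; the arguments are otherwise identical in structure and detail.
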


The pseudo-parameter $\theta$ in Proposition~\ref{prop:group}(ii) depends on the estimator: for instance, in a location-scale model and if $p \ge 2$, if we estimate the location and scale parameters by the empirical mean and standard deviation, respectively, then $\theta$ denotes the vector of population means and standard deviations.

\begin{proof}
	(i) 
	The sample $(X_i)_{i=1}^n$ is equal in distribution to $(g_\theta(Z_i))_{i=1}^n$ for some $\theta \in \Theta$, where $(Z_i)_{i=1}^n$ is an i.i.d.\ sample from $\probQ_0$.
	Since we are interested in convergence in probability, we can then in fact suppose that $X_i = g_\theta(Z_i)$ for all $i = 1, \ldots, n$ and compute probabilities under $\probQ_0^n$.
	
	The empirical distribution of $(Z_i)_{i=1}^n$ is denoted by $\emprob^{Z}$. By the triangle inequality for the Wasserstein distance,
	\begin{equation}
	\label{eq:TMn1p:bound}
		T_{\model,n}^{1/p}
		= W_p\bigl( \emprob^{\hat{Z}}, \probQ_0) \\
		\le W_p\bigl( \emprob^{\hat{Z}}, \emprob^{Z} \bigr)
		+ W_p\bigl( \emprob^{Z}, \probQ_0).
	\end{equation}
	Since $\probQ_0$ has a finite moment of order $p$, the second term on the right-hand side converges to zero in probability by 	 \citet[Lemma~8.4]{bickel+f:1981}. 
	
	To bound the first term on the right-hand side of the previous equation, consider the coupling of $\emprob^{\hat{Z}}$ and $\emprob^{Z}$ via the discrete uniform distribution on the pairs~$(\hat{Z}_{n,i}, Z_i)$ for $i = 1, \ldots, n$. It follows that
	\[
		W_p^p\bigl(\emprob^Z, \probQ_0\bigr)
		\le
		\frac{1}{n} \sum_{i=1}^n \norm{ \hat{Z}_i - Z_i }^p
		= \frac{1}{n} \sum_{i=1}^n 
		\norm{ g_{\hat{\theta}_{n,Z}}^{-1}(Z_i) - Z_i }^p,
	\]
	where $\hat{\theta}_{n,Z} = \theta_n(Z_1,\ldots,Z_n)$ is the estimated parameter from $(Z_i)_{i=1}^n$.
	Let $\theta_e \in \Theta$ denote the parameter that corresponds to the identity transformation: $g_{\theta_e}(x) = x$ for all $x \in \Rd$. Then $\prob_{\theta_e} = \probQ_0$ and, by assumption, $\hat{\theta}_{n,Z} \to \theta_e$ as $n \to \infty$ in probability. Let $\eps > 0$ be small enough so that \eqref{eq:g:lip} holds for all $\theta' \in \Theta$ with $\norm{\theta' - \theta_e} \le \eps$. Then, on the event that $\norm{\hat{\theta}_{n,Z} - \theta_e} \le \eps$, we have
	\[
		\frac{1}{n} \sum_{i=1}^n 
		\norm{ g_{\hat{\theta}_{n,Z}}^{-1}(Z_i) - Z_i }^p
		\le
		m_{\theta_e}^p \norm{\hat{\theta}_{n,Z} - \theta_e}^p \cdot
		\frac{1}{n} \sum_{i=1}^n \left( 1 + \norm{Z}_i \right)^p.
	\]
	As $\probQ_0 \in \Prob_p(\Rd)$, the weak consistency of $\hat{\theta}_{n,Z}$ and the law of large numbers imply that, in probability, 
	\[ 
		\frac{1}{n} \sum_{i=1}^n \norm{ g_{\hat{\theta}_{n,Z}}^{-1}(Z_i) - Z_i }^p \to 0, \qquad n \to \infty.
	\]
	
	We conclude that both terms in the bound \eqref{eq:TMn1p:bound} for $T_{\model,n}^{1/p}$ converge to zero in probability. Hence the same is true for $T_{\model,n}$.
	
	(ii) By (i), it follows that $\lim_{n \to \infty} F_{\model,n}(\eps) = 1$ for every $\eps > 0$. It is thus sufficient to show that, under the alternative hypothesis, there exists $\eps > 0$ such that $\lim_{n \to \infty} \prob^n[T_{\model,n} > \eps] = 1$. 
	
	Let $\probQ_1 = (g_\theta^{-1})_{\#} \prob$, that is, $\probQ_1$ is the law of $Y = g_\theta^{-1}(X)$, where $g_\theta^{-1} \in G$ is the inverse transformation of $g_\theta$ and where $X$ has law $\prob$. By assumption, $\probQ_1 \ne \probQ_0$, for otherwise $\prob \in \model$. Also, $\probQ_1 \in \Prob_p(\Rd)$, since $\prob \in \Prob_p(\Rd)$ and since each $g$ in $G$ satisfies \eqref{eq:g:bound}. 
	
	Put $Y_i = g_\theta^{-1}(X_i)$ for $i = 1, \ldots, n$, an i.i.d.\ sample from $\probQ_1$. Let $\emprob^{Y} = n^{-1} \sum_{i=1}^n \delta_{Y_i}$ be its empirical distribution. The estimated residuals are $\hat{Z}_i = g_{\hat{\theta}_n}^{-1}(X_i) = g_{\hat{\theta}_n}^{-1} \circ g_\theta(Y_i)$. By the same argument as in (i), we have
	\[
		W_p\bigl( \emprob^{\hat{Z}}, \probQ_1\bigr)
		\le
		\left[
		\frac{1}{n} \sum_{i=1}^n 
		\norm{ g_{\hat{\theta}_n}^{-1} \circ g_\theta(Y_i) - Y_i }^p 
		\right]^{1/p}
		+
		W_p\bigl( \emprob^{Y}, \probQ_1 \bigr)
		\to 0,
	\]
	as $n \to \infty$ in probability. By the continuous mapping theorem, it follows that $T_{\model,n} \to W_p^p(\probQ_1, \probQ_0) > 0$ as $n \to \infty$ in probability. But then $1-F_{\model,n}(T_{\model,n}) \to 0$ as $n \to \infty$ in probability, and the null hypothesis is rejected with probability tending to one.
\end{proof}

\section{Wasserstein GoF tests for  general parametric families}
\label{sec:param}

Extending the scope of Section~\ref{sec:group}, consider the problem of testing whether the unknown common distribution $\prob$ of a sample of observations belongs to some parametric family~$\model := \big\{ \prob_{\theta} : \theta \in \Theta\big\}$ of distributions on $\Rd$. 
The parameter space $\Theta$ is some metric space and the map $\theta \mapsto \prob_{\theta}$ is assumed to be one-to-one and continuous in a sense to be specified. 
Given an independent random sample~$\mathbf{X}_n = (X_1, \ldots, X_n)$ from some unknown distribution $\prob \in \Prob(\Rd)$, the goodness-of-fit problem consists of testing
\begin{equation}
\label{eq:hyp:comp}
	\mathcal{H}_0^n: \prob\in\model
  	\quad \text{against} \quad \mathcal{H}_1^n: \prob \notin \model.
\end{equation}

Assume that every $\prob_\theta\in\model$ has a finite moment of order $p \in [1, \infty)$, that is,~$\model \subseteq \Prob_p(\Rd)$. Recall that $\emprob$ denotes the empirical distribution of the sample. The test statistic we propose is
\begin{equation}
\label{eq:TMntheta}
	T_{\model,n} := W_p^p(\emprob, \prob_{\hat\theta_n})
\end{equation}
where $\hat\theta_n = \theta_n(\mathbf{X}_n)$ is some consistent (under $\mathcal{H}_0^n$) estimator sequence of the true parameter $\theta$. The distribution of $\mathbf{X}_n$ under $\mathcal{H}_0^n$ in \eqref{eq:hyp:comp} being $\prob_\theta^n$ for some~$\theta \in \Theta$, let $F_{n,\theta}(t) = \prob_\theta^n[T_{\model,n} \le t]$ for $t \in \reals$ denote the null distribution function of the test statistic. As $p$-value and critical value, we would like to take
\begin{equation}
\label{eq:crit:parametric}
	1 - F_{n,\theta}(T_{\model,n})
	\text{ and }
	\cMnta
	=
	\inf \{ t \ge 0 : F_{n,\theta}(t) \ge 1-\alpha \},
\end{equation}
respectively, for some $\alpha \in (0, 1)$.
This choice is infeasible, however, since the true parameter $\theta$ is unknown. Therefore, we propose to replace $\cMnta$ by the bootstrapped quantity $\cMntha$, yielding the test
\begin{equation}
	\phi_{\model}^n
	:= 
	\begin{cases} 
		1 & \text{if $1 - F_{n,\hat{\theta}_n}(T_{\model,n}) \le \alpha$ or, equivalently, $T_{\model,n} \ge \cMntha$,} \\ 
		0 & \text{otherwise.} 
	\end{cases}
	\label{eq: thetest2}
\end{equation}
We reject $\mathcal{H}_0^n$ as soon as $T_{\model,n}$ exceeds the critical value at the estimated parameter. The substitution of $\theta$ by $\hat{\theta}_n$ qualifies as a parametric bootstrap.

To compute the critical value $\cMntha$ in practice, we rely, as before, on a Monte Carlo approximation: resample from $\prob_{\hat{\theta}_n}$, compute the test statistic, and approximate $F_{n,\hat{\theta}_n}$ by the empirical distribution function of the resampled test statistics. 
By the Dvoretzky--Kiefer--Wolfowitz inequality \citep{massart1990}, the difference between $F_{n,\hat{\theta}_n}(t)$ and its Monte Carlo approximation can be controlled explicitly and uniformly in $t \ge 0$ and in the unknown parameter, and this in terms of the Monte Carlo sample size only. To speed up the calculations in case of a low-dimensional parameter space, we pre-compute an approximation of the critical value function $\theta \mapsto \cMnta$ in this way for $\theta$ in a finite grid $\Theta' \subset \Theta$ and then compute $\cMntha$ by interpolation and/or smoothing.

Under the null hypothesis and if the true parameter is $\theta$, the size of the test is now the random quantity
\[
	1 - F_{n,\theta}\bigl(\cMntha\bigr).
\]
In contrast to Sections~\ref{sec: Test} and~\ref{sec:group}, it is no longer guaranteed that this risk is bounded by $\alpha$. The question remains open whether under the null hypothesis the actual size of the test indeed converges to $\alpha$. To prove this conjecture would require non-degenerate limit distribution theory for $W_p^p(\emprob, \prob_{\theta})$, not only for fixed $\theta \in \Theta$, but even for sequences $\theta_n$ converging to $\theta$ at certain rates which depend on the model $\model$ under study \citep{beran1997, capanu:2019}. 
As discussed in Section~\ref{sec:asy}, such asymptotic distribution theory is still far beyond the horizon. 
Our numerical experiments in Section~\ref{sec:simu}, however, support the conjecture that the parametric bootstrap produces a test with the right asymptotic size. 
\bgroup
For any $\theta \in \Theta$, any $\eps > 0$, and a sufficiently regular parametric model $\model$ and estimator sequence $\hat{\theta}_n$, we conjecture that $\prob_\theta^n[ 1 - F_{n,\theta}(\cMntha) \le \alpha + \eps ]$ converges to one as $n \to \infty$.
In Appendix~\ref{sec:boot:cons}, we provide a theoretical justification of the consistency of the parametric bootstrap in the univariate case, for which the asymptotic distribution theory of the empirical Wasserstein distance is well developed.
\egroup

Nevertheless, against a fixed alternative, the consistency of the test \eqref{eq: thetest2} based on the parametric bootstrap can be established theoretically. 
The key is a law of large numbers for the empirical distribution in Wasserstein distance uniformly over classes of distributions that satisfy a uniform integrability condition, see Appendix~\ref{app:empWassUnif}.
For the parameter estimator $\hat{\theta}_n$, we assume weak consistency locally uniformly in $\theta$: if $\rho$ denotes the metric on $\Theta$ and if $\cmp(\Theta)$ denotes the collection of compact subsets of $\Theta$, we will require that
\begin{equation}
\label{eq:weakConsUnif}
	\forall \eps > 0, \, \forall K \in \cmp(\Theta), \qquad
	\lim_{n \to \infty} \sup_{\theta \in K}
	\prob_\theta^n\big[ \rho(\hat{\theta}_n, \theta) > \eps\big] = 0.
\end{equation}
As illustrated in Remark~\ref{rmk:uniformity} below, this condition is satisfied, for instance, for moment estimators of a Euclidean parameter under a uniform integrability condition.


\begin{prop}[Consistency]
	\label{prop:param:cons}
	Let $\model = \{ \prob_\theta : \theta \in \Theta \} \subseteq \Prob_p(\reals^d)$, for $p \in [1, \infty)$,  be a model indexed by a metric space  $(\Theta, \rho)$. 
 Assume the following conditions: 
 
	\begin{enumerate}[(a)]
		\item the map $\Theta \to \Prob_p(\Rd) : \theta \mapsto \prob_\theta$ is one-to-one and $W_p$-continuous;
		\item $\hat{\theta}_n$ is weakly consistent locally uniformly in $\theta \in \Theta$, i.e., \eqref{eq:weakConsUnif} holds.
	\end{enumerate}
	
	Then, the following properties hold:
	\begin{enumerate}[(i)]
	\item $T_{\model,n} \to 0$ in $\prob_\theta^n$-probability locally uniformly in $\theta \in \Theta$, i.e., 
	\[
		\forall \eps > 0, \, \forall K \in \cmp(\Theta), \qquad 
		\lim_{n \to \infty} \sup_{\theta \in K} 
		\prob_\theta^n\big[T_{\model,n} > \eps\big] = 0;
	\]
	\item the critical values $\cMnta$ tend to zero uniformly in $\theta$, i.e.,
	\begin{equation*}
		\forall \alpha > 0, \, \forall K \in \cmp(\Theta),
		\qquad
		\lim_{n \to \infty} \sup_{\theta \in K} 
		\cMnta = 0;
	\end{equation*}
	\item 
	for every $\prob \in \Prob(\Rd) \setminus \model$ such that there exists $K \in \cmp(\Theta)$ with 
	$$\prob^n\big[\hat{\theta}_n \in K\big] \to 1\qquad \text{as~$n \to \infty$,}$$
	 we have $\lim_{n \to \infty} \prob^n\big[\phi_{\model}^n = 1\big] = 1$.
	\end{enumerate}
\end{prop}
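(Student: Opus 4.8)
The plan is to mirror the structure of Proposition~\ref{prop:group}, but everywhere replacing pointwise convergence statements by their locally uniform counterparts, so that the random critical value $\cMntha$ can be controlled along an alternative under which $\hat{\theta}_n$ eventually lands in a compact set. First I would establish part~(i). Fix $\theta \in \Theta$ and $K \in \cmp(\Theta)$. By the triangle inequality for $W_p$,
\[
	T_{\model,n}^{1/p}
	= W_p(\emprob, \prob_{\hat{\theta}_n})
	\le W_p(\emprob, \prob_\theta) + W_p(\prob_\theta, \prob_{\hat{\theta}_n}).
\]
The first term goes to zero in $\prob_\theta^n$-probability by \citet[Lemma~8.4]{bickel+f:1981}; the point is to make this uniform over $\theta \in K$, which is exactly what the uniform law of large numbers in Appendix~\ref{app:empWassUnif} delivers, since $\{\prob_\theta : \theta \in K\}$ is a $W_p$-continuous image of a compact set, hence $W_p$-compact, hence uniformly $p$-integrable. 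For the second term, I would combine assumption~(a), namely $W_p$-continuity of $\theta \mapsto \prob_\theta$, with the locally uniform weak consistency~\eqref{eq:weakConsUnif}: on the event $\{\rho(\hat{\theta}_n, \theta) \le \delta\}$, continuity gives $W_p(\prob_\theta, \prob_{\hat{\theta}_n})$ small provided $\delta$ is small, uniformly over $\theta \in K$ by a compactness/uniform-continuity argument (the modulus of $W_p$-continuity can be taken uniform on $K$). Taking $n \to \infty$ and then $\delta \to 0$ yields~(i).

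Part~(ii) is a purely deterministic consequence of~(i). Fix $\alpha > 0$ and $K \in \cmp(\Theta)$. By definition~\eqref{eq:crit:parametric}, $\cMnta \le \eps$ as soon as $F_{n,\theta}(\eps) \ge 1 - \alpha$, i.e.\ as soon as $\prob_\theta^n[T_{\model,n} > \eps] \le \alpha$. By~(i), for every $\eps > 0$ there is $n_0$ such that $\sup_{\theta \in K} \prob_\theta^n[T_{\model,n} > \eps] \le \alpha$ for all $n \ge n_0$; hence $\sup_{\theta \in K} \cMnta \le \eps$ for all $n \ge n_0$, which is the claim.

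For part~(iii), let $\prob \in \Prob(\Rd) \setminus \model$ with $K \in \cmp(\Theta)$ such that $\prob^n[\hat{\theta}_n \in K] \to 1$. On the one hand, I would show that along the alternative $T_{\model,n}$ stays bounded away from zero with probability tending to one: since $\hat{\theta}_n \in K$ eventually and $\{\prob_\theta : \theta \in K\}$ is $W_p$-compact with $\prob \notin \model$, the distance $\inf_{\theta \in K} W_p(\prob, \prob_\theta) =: 2\eta$ is attained and strictly positive; combining $W_p(\emprob, \prob) \to 0$ (here the sample is genuinely i.i.d.\ from $\prob$, and we may need $\prob \in \Prob_p(\Rd)$, otherwise argue via $W_p(\emprob,\delta_0) \to \infty$ as in Proposition~\ref{prop:cons}) with the reverse triangle inequality gives $T_{\model,n}^{1/p} = W_p(\emprob, \prob_{\hat{\theta}_n}) \ge W_p(\prob, \prob_{\hat{\theta}_n}) - W_p(\emprob, \prob) \ge 2\eta - o_{\prob^n}(1)$, so $\prob^n[T_{\model,n} > \eta^p] \to 1$. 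On the other hand, on the event $\{\hat{\theta}_n \in K\}$ we have $\cMntha \le \sup_{\theta \in K} \cMnta$, which tends to $0$ by~(ii); so for $n$ large, $\cMntha < \eta^p$ on an event of probability tending to one. Putting the two together, $\prob^n[T_{\model,n} \ge \cMntha] \to 1$, i.e.\ $\prob^n[\phi_{\model}^n = 1] \to 1$.

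The main obstacle is the uniform-in-$\theta$ control needed in~(i): one has to upgrade the pointwise consistency $W_p(\emprob, \prob_\theta) \to 0$ to uniformity over compacta, and this is precisely why the uniform law of large numbers of Appendix~\ref{app:empWassUnif} is invoked — verifying its uniform integrability hypothesis for the $W_p$-compact family $\{\prob_\theta : \theta \in K\}$ is the crux. The handling of the alternative in~(iii) when $\prob$ lacks a $p$th moment also requires a small separate argument, but it is routine given Proposition~\ref{prop:cons}.
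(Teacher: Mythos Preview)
Your proposal is correct and follows essentially the same approach as the paper's proof: the same triangle-inequality decomposition in~(i), with the first term handled via $W_p$-compactness of $\{\prob_\theta : \theta \in K\}$ yielding uniform $p$-integrability and hence the uniform law of large numbers of Appendix~\ref{app:empWassUnif}, and the second via a uniform-continuity-on-compacta argument combined with~\eqref{eq:weakConsUnif}; part~(ii) is deduced from~(i) exactly as you describe; and part~(iii) proceeds by bounding $\cMntha \le \sup_{\theta \in K} \cMnta \to 0$ on $\{\hat{\theta}_n \in K\}$ and splitting into the cases $\prob \in \Prob_p(\Rd)$ (reverse triangle inequality with $\eta = \inf_{\theta \in K} W_p(\prob, \prob_\theta) > 0$) and $\prob \notin \Prob_p(\Rd)$ (argument of Proposition~\ref{prop:cons}).
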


\begin{proof}
	\emph{(i)}
	By the triangle inequality, it follows that
	\begin{equation}\label{trineq}
		T_{\model,n}^{1/p}
		= W_p(\emprob, \prob_{\hat{\theta}_n})
		\le W_p(\emprob, \prob_\theta) 
		+ W_p(\prob_\theta, \prob_{\hat{\theta}_n})
	\end{equation}
	for all $\theta \in \Theta$. 
	It is then sufficient to show that, for any compact $K \subseteq \Theta$, each of the~$W_p$-distances on the right-hand side of \eqref{trineq} converges to $0$ in $\prob_\theta^n$-probability uniformly in~$\theta \in K$.
	
	First, since $K$ is compact and   $\theta \mapsto \prob_\theta$ is $W_p$-continuous, the set $$\model_K := \{\prob_\theta : \theta \in K\}$$ is compact in $\Prob_p(\Rd)$ equipped with the $W_p$-distance. By \citet[Lemma~8.3(b)]{bickel+f:1981} or \citet[Definition~6.8(b) and Theorem~6.9]{villani2008optimal} and a subsequence argument, it follows that $x \mapsto \norm{x}^p$ is uniformly integrable with respect to $\model_K$, i.e.,
	\[
		\lim_{r \to \infty} \sup_{\theta \in K}
		\int_{\norm{x} > r} \norm{x}^p \, \diff \prob_\theta(x)
		= 0.
	\]
	Corollary~\ref{cor:empWassUnif} then implies that $W_p(\emprob, \prob_\theta) \to 0$ in $\prob_\theta^n$-probability as $n \to \infty$, uniformly in $\theta \in K$. 
	
	Second, as $K$ is compact and $\theta \to \prob_\theta$ is $W_p$-continuous, there exists, for every scalar~$\eps > 0$, a scalar $\delta = \delta(\eps) > 0$ such that\footnote{This is a slight generalization of the well-known property that a continuous function on a compact set is uniformly continuous. As a proof, fix $\eps > 0$ and consider for each $\theta \in K$ a scalar~$\delta(\theta) > 0$ such that for all~$\theta' \in \Theta$ with $\rho(\theta, \theta') \le \delta(\theta)$ we have $W_p(\prob_{\theta}, \prob_{\theta'}) \le \eps/2$. Cover~$K$ by open balls with centers $\theta \in K$ and radii $\delta(\theta)/2$. By compactness, extract a finite cover with centers $\theta_1, \ldots, \theta_m \in K$. Put $\delta = \min_j \delta(\theta_j)/2$. For every $\theta \in K$ and $\theta' \in \Theta$ with $\rho(\theta, \theta') \le \delta$, there exists $j = 1, \ldots, m$ such that $\rho(\theta, \theta_j) < \delta(\theta_j)/2$ and then also $\rho(\theta', \theta_j) < \delta(\theta_j)$. By the triangle inequality, $W_p(\prob_{\theta}, \prob_{\theta'}) \le W_p(\prob_{\theta_j}, \prob_{\theta}) + W_p(\prob_{\theta_j}, \prob_{\theta'}) \le \eps$.}
	\[
		\forall \theta \in K, \; \forall \theta' \in \Theta, \qquad
		\rho(\theta, \theta') \le \delta \implies W_p(\prob_{\theta}, \prob_{\theta'}) \le \eps.	
	\]
	It follows that
	\[
		\forall \theta \in K, \qquad
		\prob_\theta^n\big[W_p(\prob_{\theta}, \prob_{\hat{\theta}_n}) > \eps\big]
		\le \prob_\theta^n\big[\rho(\theta, \hat{\theta}_n) > \delta\big].
	\]
	By condition~(b), the latter probability converges to $0$ as $n \to \infty$ uniformly in~$\theta \in K$.

	\emph{(ii)} Fix $\alpha > 0$, $\eps > 0$, and $K \in \cmp(\Theta)$. By (i), there exists an integer $n(\eps) \ge 1$ such that
	\[
		\forall n \ge n(\eps), \; \forall \theta \in K, \qquad
		\prob_\theta^n\big[T_{\model,n} > \eps\big] \le \alpha.
	\]
	By definition of the critical values, also $\cMnta \le \eps$ for all $n \ge n(\eps)$ and~$\theta \in K$.

	\emph{(iii)} Let $\prob$ and $K$ be as in the statement. Put $c_n = \sup_{\theta \in K} \cMnta$. We have
	\begin{align*}
		\prob^n\big[\phi_{\model}^n = 1\big] 
		&\ge \prob\big[T_{\model,n} > \cMntha, \, \hat{\theta}_n \in K\big] \\
		&\ge \prob\big[T_{\model,n} > c_n, \, \hat{\theta}_n \in K\big].
	\end{align*}
	In view of {\it (ii)}, we have $c_n \to 0$ as $n \to \infty$, so that it is sufficient to show that there exists~$\eps > 0$, depending on $\prob$ and $\model$, such that $\lim_{n \to \infty} \prob^n\big[T_{\model,n} > \eps\big] = 1$. Consider two cases,  $\prob \in \Prob_p(\Rd) \setminus \model$ and $\prob \in \Prob(\Rd) \setminus \Prob_p(\Rd)$, according as $\prob$ has a finite moment of order $p$ or not.
	
	First, suppose that $\prob \in \Prob_p(\Rd) \setminus \model$. We have $W_p(\prob, \prob_\theta) > 0$ for every~$\theta \in \Theta$ while the map $\theta \mapsto W_p(\prob, \prob_\theta)$ is continuous. As $K$ is compact,~$\eta := \inf \big\{ W_p(\prob, \prob_\theta) : \theta \in K \big\} > 0$. On the event $\{\hat{\theta}_n \in K\}$, the triangle inequality implies
	\begin{align*}
		T_{\model,n}^{1/p}
		= W_p(\emprob, \prob_{\hat{\theta}_n}) 
		&\ge W_p(\prob, \prob_{\hat{\theta}_n}) - W_p(\emprob, \prob) \\
		&\ge \eta - W_p(\emprob, \prob).
	\end{align*}
	We obtain that
	\begin{align*}
		\prob^n\big[\phi_{\model}^n = 1\big] 
		&\ge \prob\big[T_{\model,n}^{1/p} > c_n^{1/p}, \, \hat{\theta}_n \in K\big] \\
		&\ge \prob\big[W_p(\emprob, \prob) < \eta - c_n^{1/p}, \, \hat{\theta}_n \in K\big].
	\end{align*}
	As $\eta > 0$ and $\lim_{n \to \infty} c_n = 0$, the latter probability converges to one by the assumption made on~$K$ and the fact that~$W_p(\emprob, \prob) \to 0$ in $\prob^n$-probability as~$n \to \infty$.

	Second, suppose that $\prob \in \Prob(\Rd) \setminus \Prob_p(\Rd)$. Since $\theta \mapsto W_p(\prob_\theta, \delta_0)$ is continuous, $\sup_{\theta \in K} W_p(\prob_\theta, \delta_0)$ is finite, with $K$ as in~(iii) and $\delta_0$ the Dirac measure at $0 \in \Rd$. By an argument similar to the second part of the proof of Proposition~\ref{prop:cons}, it follows that $\prob^n[T_{\model,n} > c_n, \, \hat{\theta}_n \in K] \to 1$ as $n \to \infty$.
	 \end{proof}
 
\begin{rmk}[Uniform consistency]
	\label{rmk:uniformity}
	Under a mild moment condition, the uniform consistency condition (b) in Proposition~\ref{prop:param:cons} is satisfied for {\it method of moment estimators}---call them {\it moment estimators}---of a Euclidean parameter~$\theta \in \Theta \subseteq \reals^k$. In the method of moments, an estimator $\hat{\theta}_n$ of $\theta$ is obtained by solving (with respect to $\theta$)  the equations 
	\[
		\frac{1}{n}\sum_{i=1}^n f_j(X_i) = {\rm E}_\theta[ f_j(X) ], \qquad j=1,\dots,k,
	\]
	for some given $k$-tuple  $f:=(f_1,\dots,f_k)$ of functions such that  
	 $m$  : 
	 $\theta \mapsto {\rm E}_\theta[f(X)]$ is a homeomorphism between $\Theta$ and $m(\Theta)$; see, for instance,~\citet[Chapter~4]{vdv98}. The consistency of $\hat{\theta}_n = m^{-1}(n^{-1} \sum_{i=1}^n f(X_i))$ uniformly in~$\theta \in~\!K$ for any compact $K \subseteq \Theta$ then follows from the uniform consistency  over $K$ of~$n^{-1} \sum_{i=1}^n f(X_i)$ as an estimator of ${\rm E}_\theta[ f(X) ]$ for such $\theta$. By \citet[Proposition~A.5.1]{vdvw96}, a sufficient condition for the latter is that the functions $f_j$ are  $\prob_\theta$-uniformly integrable  for $\theta \in K$, i.e.,
	\[
		\lim_{M \to \infty} \sup_{\theta \in K} {\rm E}_\theta\big[|f_j(X)| \, I\{|f_j(X)| > M \}\big] = 0,
		\qquad j = 1, \ldots, k.
	\]
	Since $I\{|f_j(X)| > M\} \le |f_j(X)|^\eta/M^\eta$ for $\eta > 0$, a further sufficient condition is that there exists $\eta > 0$ such that $\sup_{\theta \in K} {\rm E}_\theta[|f_j(X)|^{1+\eta}] < \infty$ for $j = 1, \ldots, k$.
\end{rmk}

\begin{rmk}[Parameter estimate under the alternative]
	In Proposition~\ref{prop:param:cons}(iii), the condition that there exists a compact $K \subseteq \Theta$ such that $\lim_{n \to \infty} \prob^n[\hat{\theta}_n \in K] = 1$ holds, for instance, when $\Theta$ is locally compact and $\hat{\theta}_n$ is consistent for a pseudo-parameter $\theta(\prob) \in \Theta$. This is the case for the moment estimators of Remark~\ref{rmk:uniformity} when $\Theta \subseteq \reals^k$ is open  and $f$ is~$\prob$-integrable with $\int f(x) \, \diff \prob(x) \in m(\Theta)$.
\end{rmk}

\begin{rmk}[Non locally compact parameter spaces]
	\label{rmk:nonlocK}
	Proposition~\ref{prop:param:cons} allows for infinite-dimensional parameter spaces $\Theta$. An example would be the space of all copulas of given dimension equipped with a metric that metrizes weak convergence, a space that is still compact thanks in view of Prohorov's theorem. If $\Theta$ is not locally compact, however, then condition~(iii) is too severe and the compact set $K$ should be replaced by its enlargement $K^\delta = \{\theta \in \Theta : \exists \theta' \in K, \rho(\theta, \theta') < \delta \}$ for some sufficiently small $\delta > 0$ \citep[Definition~1.3.7]{vdvw96}. The conditions on the model $\theta \mapsto \prob_\theta$ and on the estimator $\hat{\theta}_n$ should then be modified accordingly. We are grateful to an anonymous Referee for pointing this out.
\end{rmk}

\subsection{Parametric models with group subfamilies}
\label{sec:param:group}

Consider again the testing problem \eqref{eq:hyp:comp}.
Sometimes the unknown parameter can be decomposed as $\theta = (\psi, \eta) \in \Psi \times H = \Theta$, where, for fixed $\psi$, the subfamily $\model_\psi = \{ \prob_{\psi,\eta} : \eta \in H \}$ is a group family as in Section~\ref{sec:group}, generated by a group $G = \{ g_\eta : \eta \in H \}$ of transformations $g_\eta : \Rd \to \Rd$ independent of $\psi$. 
Think for instance of the case where $\psi$ is a vector of shape parameters and $\eta$ a vector of location--scale parameters, with $G$ the group of Example~\ref{ex:locscale}.

Suppose further that a weakly consistent estimator $\hat{\theta}_n = (\hat{\psi}_n, \hat{\eta}_n)$ exists with the following two properties: 
\begin{enumerate}[(i)]
	\item $\hat{\psi}_n$ is invariant under $G$: writing $\hat{\psi}_n = \psi_n(X_1,\ldots,X_n)$, we have
	\begin{equation} 	
	\label{eq:psin:invariant}
	\psi_n(x_1,\ldots,x_n) = \psi_n\bigl(g_\eta(x_1),\ldots,g_\eta(x_n)\bigr) 
	\end{equation}
	for all $\eta \in H$ and all possible samples $(x_i)_{i=1}^n$. 
	\item $\hat{\eta}_n$ is equivariant as in \eqref{eq:equivar}. 
\end{enumerate}

Then we propose a hybrid approach: compute the estimated residuals 
\begin{equation} 
\label{eq:Zin:eta}
	\hat{Z}_{i,n} = g_{\hat{\eta}_n}^{-1}(X_i), \qquad i = 1, \ldots, n, 
\end{equation}
and form the test statistic 
\begin{equation}
\label{eq:TMn:eta}
	T_{\model,n} = W_p^p\bigl(
		n^{-1} \textstyle{\sum_{i=1}^n} \delta_{\hat{Z}_{i,n}},
		 \prob_{\hat{\psi}_n, \eta_e}
	\bigr), 
\end{equation}
with $\eta_e \in H$ the parameter yielding the identity transformation $g_{\eta_e}(x) = x$. 
For $\theta = (\psi, \eta)$, the distribution function $t \mapsto F_{n,\theta}(t) = \prob_\theta^n[T_{\model,n} \le t]$ of the test statistic depends on $\psi$ but not on $\eta$.
It can thus be computed as if $\eta = \eta_e$, that is, $F_{n,\psi,\eta} = F_{n,\psi,\eta_e}$. 
The proof of this invariance property relies on (i)--(ii) above.

The actual $p$-value of $T_{\model,n}$ under $\mathcal{H}_n^0$ is
\[
	1 - F_{n,\psi,\eta_e}(T_{\model,n})
\] 
while the critical value for a test of size $\alpha \in (0, 1)$ is now
\[
	c_{n,\psi}(\alpha) 
	= \inf\{ t \ge 0 : F_{n,\psi,\eta_e}(t) \ge 1 - \alpha \}.
\]
Both are infeasible, however, since the null distribution of $T_{\model,n}$ depends on the unknown $\psi$. We therefore compute $p$-values and critical values under $(\hat{\psi}_n,\eta_e)$. More precisely, we apply the parametric bootstrap. The test thus takes the form
\begin{equation}
\label{eq:phi:group:param}
	\phi_{\model}^n =
	\begin{cases}
		1 & \text{if $1 - F_{n,\hat{\psi}_n,\eta_e}(T_{\model,n}) \le \alpha$ or, equivalently, $T_{\model,n} \ge c_{n,\hat{\psi}_n}(\alpha)$,} \\
		0 & \text{otherwise.}
	\end{cases}
\end{equation}
In practice, the distribution $F_{n,\hat{\psi}_n,\eta_e}$ and the associated critical values $c_{n,\hat{\psi}_n}(\alpha)$ are computed by Monte Carlo approximation, as described in the paragraph following \eqref{eq: thetest2}.
If the dimension of $\psi$ is sufficiently low, we can pre-compute the critical values $c_{n,\psi}(\alpha)$ for $\psi$ on a finite grid $\Psi' \subset \Psi$ and then reconstruct the critical value function by interpolation or smoothing. The reduction from $\theta$ to $\psi$ thus brings a clear computational benefit.

We conjecture that, asymptotically, the test has the correct size under the null hypothesis. The obstacle for the proof is the same as before: required is the asymptotic distribution of the empirical Wasserstein distance, which is a very hard, long-standing open problem. 
In Section~\ref{sec:simu:param}, we provide numerical support for the conjecture by an application to a five-dimensional distribution with separate location-scale parameters for each margin (ten parameters in total) and a single copula parameter.
By exploiting invariance, the computation of the critical value is facilitated as the copula parameter remains as single argument. 

Since the distribution of $T_{\model,n}$ under $\theta = (\psi,\eta)$ is the same as under $\theta_e = (\psi, \eta_e)$, consistency of the hybrid test can be formulated and shown by combining ideas from Propositions~\ref{prop:group} and~\ref{prop:param:cons}. 

\section{Finite-sample performance of GoF tests}
\label{sec:simu}
 
This section is devoted to a numerical assessment of the finite-sample performance of the Wasserstein-based GoF tests introduced in the previous sections. We compare them, whenever possible, with other tests.
The case of a simple null hypothesis (Section~\ref{sec: Test}) is treated in Section~\ref{sec:simu:simple}. 
The performances of various tests for multivariate normality, which is a special case of the hypothesis of a group model in Section~\ref{sec:group}, are compared in Section~\ref{sec:simu:group:Gauss}, along with an illustration involving a Student \emph{t} distribution with known degrees of freedom in Section~\ref{sec:simu:group:t}. 
Section~\ref{sec:simu:param} considers, in line with Section~\ref{sec:param:group}, the more general composite null hypothesis of a parametric family indexed by marginal location and scale parameters along with a copula parameter.
Numerical results support the conjecture of the (asymptotic) validity of the parametric bootstrap for calculating critical values.
To the best of our knowledge, no GoF test is available in the literature for such cases except for the method described by \citet{khmaladze2016unitary}, the numerical implementation of which, however, remains unsettled.

Throughout, we consider the Wasserstein distances of order $p \in \{1, 2\}$. The level~$\alpha$ of the tests is set to 5\%, the sample size is $n = 200$, and the number of replicates considered in the estimation of power curves is~$1\,000$. As mentioned in Section~\ref{sec: computations} we relied on the \textsf{R} package \textsf{transport} \citep{transport} in case $p = 2$ and $d = 2$ and on our own \textsf{C} implementation of the algorithm proposed by \citet{genevay2016stochastic} in all other cases.
See also Appendix~\ref{app:algo} for some details on the calculation of the critical values.
 
\subsection{Simple null hypotheses}
\label{sec:simu:simple}

The setting is as in Section~\ref{sec: Test}: given an independent random sample $X_1, \ldots, X_n$ from some unknown $\prob \in \Prob(\Rd)$, we consider testing the simple null hypothesis~$\mathcal{H}_0^n : \prob = \prob_0$, where $\prob_0 \in \Prob_p(\Rd)$ is fully specified.

Two other goodness-of-fit tests will be used as benchmarks: the test by \citet*{rippl2016limit}, which is based on the $2$-Wasserstein distance and is specific for multivariate Gaussian distributions, and the adaptation of the Kolmogorov--Smirnov test by \citet{khmaladze2016unitary}, which is based on empirical process theory. Both tests are described in some detail in Appendix~\ref{sec:simu:simple:other}.

\subsubsection{Bivariate Gaussian distribution}
\label{sec:simu:simple:bivGauss}

In Figure~\ref{fig: GausNull}, we assess the performance of the GoF tests of $\mathcal{H}_0^n : \prob = \prob_0$ where~$\prob_0 = \normal_2(0, I_2)$ is a centered bivariate Gaussian with identity covariance matrix. The alternatives $\prob$ in  panels (a)--(f) are as follows:
\begin{enumerate}[(a)]
	\item $\prob = \normal_2\bigl(\bigl(\begin{smallmatrix} \mu \\ \mu \end{smallmatrix}\bigr), I_2\bigr)$ with location shift  $\mu$ along the main diagonal (rejection frequencies plotted against $\mu \in \mathbb{R}$);
	\item $\prob = \normal_2(0, \sigma^2 I_2)$ (rejection frequencies   plotted against  $\sigma^2 > 0$);
	\item $\prob = \normal_2\bigl(0, \bigl(\begin{smallmatrix} 1 & \rho \\ \rho & 1 \end{smallmatrix}\bigr)\bigr)$ with   correlation   $\rho$  (rejection frequencies   plotted against  $\rho \in (-1, 1)$);
	\item $\prob$ has standard normal margins but Gumbel copula with parameter $\theta$   (rejection frequencies   plotted against    $\theta \in [1, \infty)$);
	\item $\prob$ has standard Gaussian margins but a bivariate Student \emph{t} copula with~$\nu = 4$ degrees of freedom and correlation parameter $\rho$   (rejection frequencies   plotted against  $\rho \in (-1, 1)$);\footnote{Note that $\prob$ is not Gaussian, even for $\rho = 0$.}
	\item $\prob$ is the ``boomerang-shaped" Gaussian mixture 
	 described in Appendix~\ref{sec: bana}  (rejection frequencies   plotted against  the mixing weight $p \in (-1, 1)$).\footnote{The mixture is constructed so that the first and second moments of $\prob$ remain close to those of $\prob_0$.}
\end{enumerate}
The Gumbel and Student \emph{t} copula simulations  in (d) and (e) were implemented from the \textsf{R} package \textsf{copula} \citep{copula}.

\begin{figure}
\begin{center}
\begin{tabular}{@{}c@{}c@{}c}
\includegraphics[width=0.33\textwidth, height=65mm]{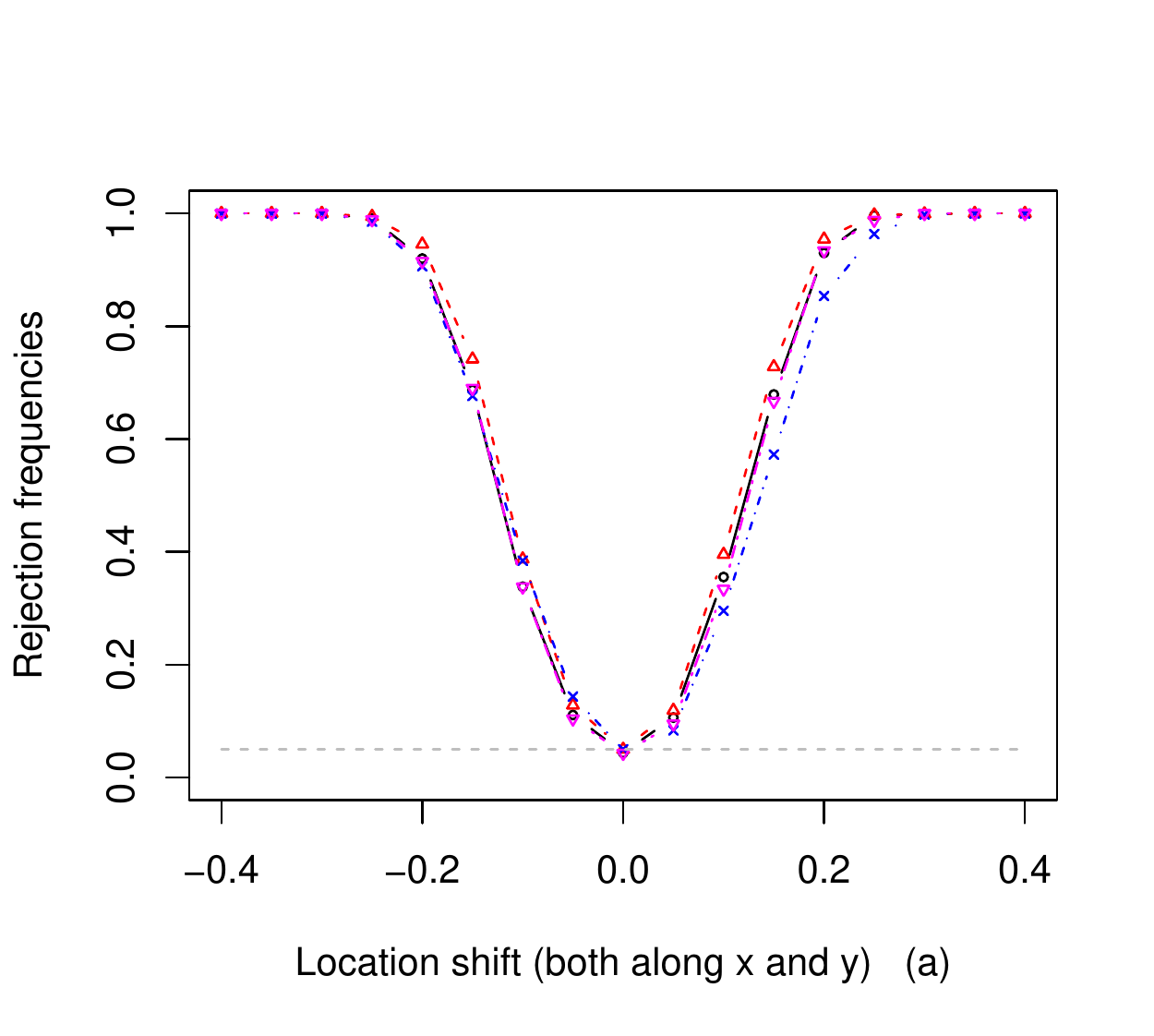}& 
\includegraphics[width=0.33\textwidth, height=65mm]{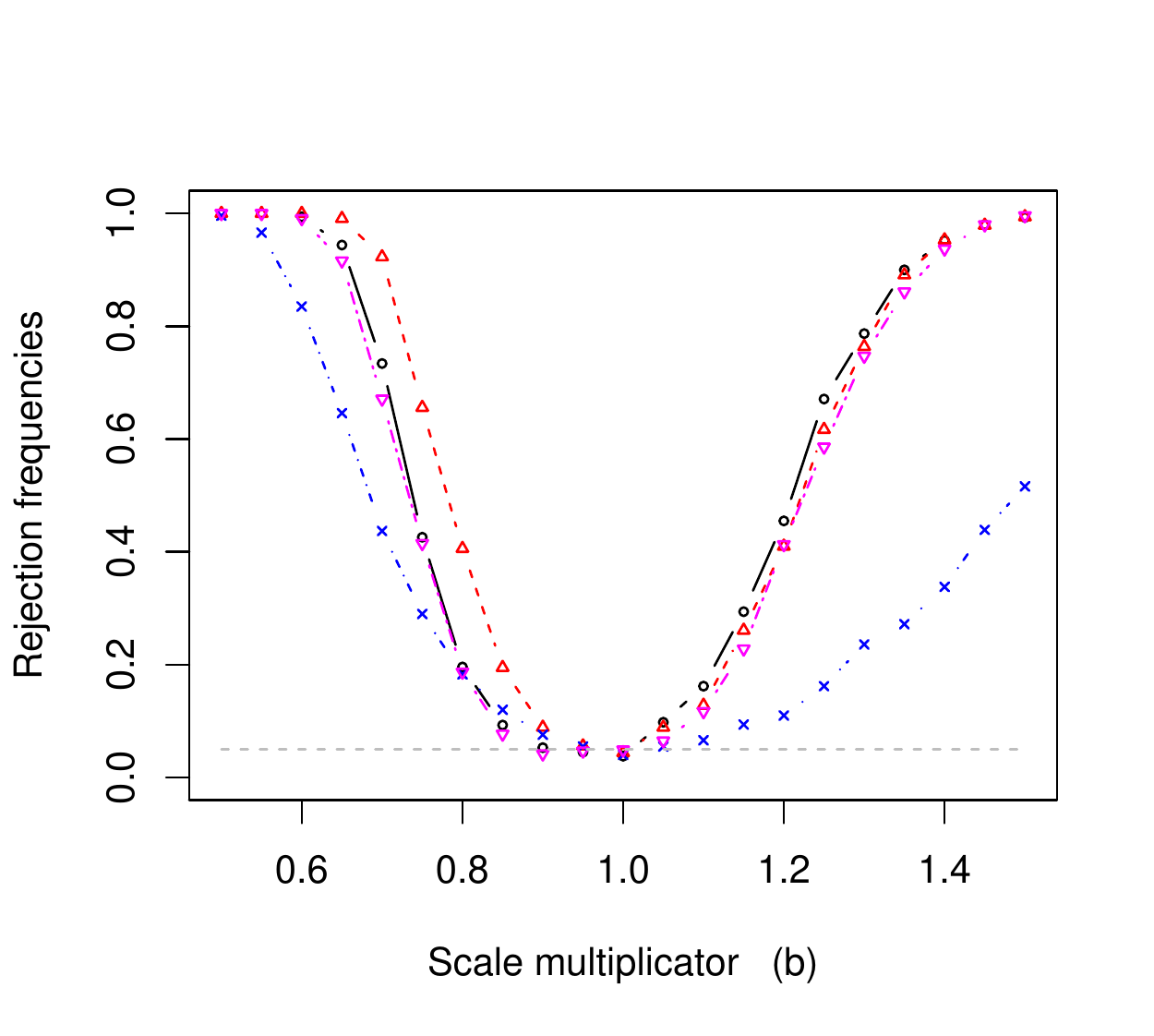}&
\includegraphics[width=0.33\textwidth, height=65mm]{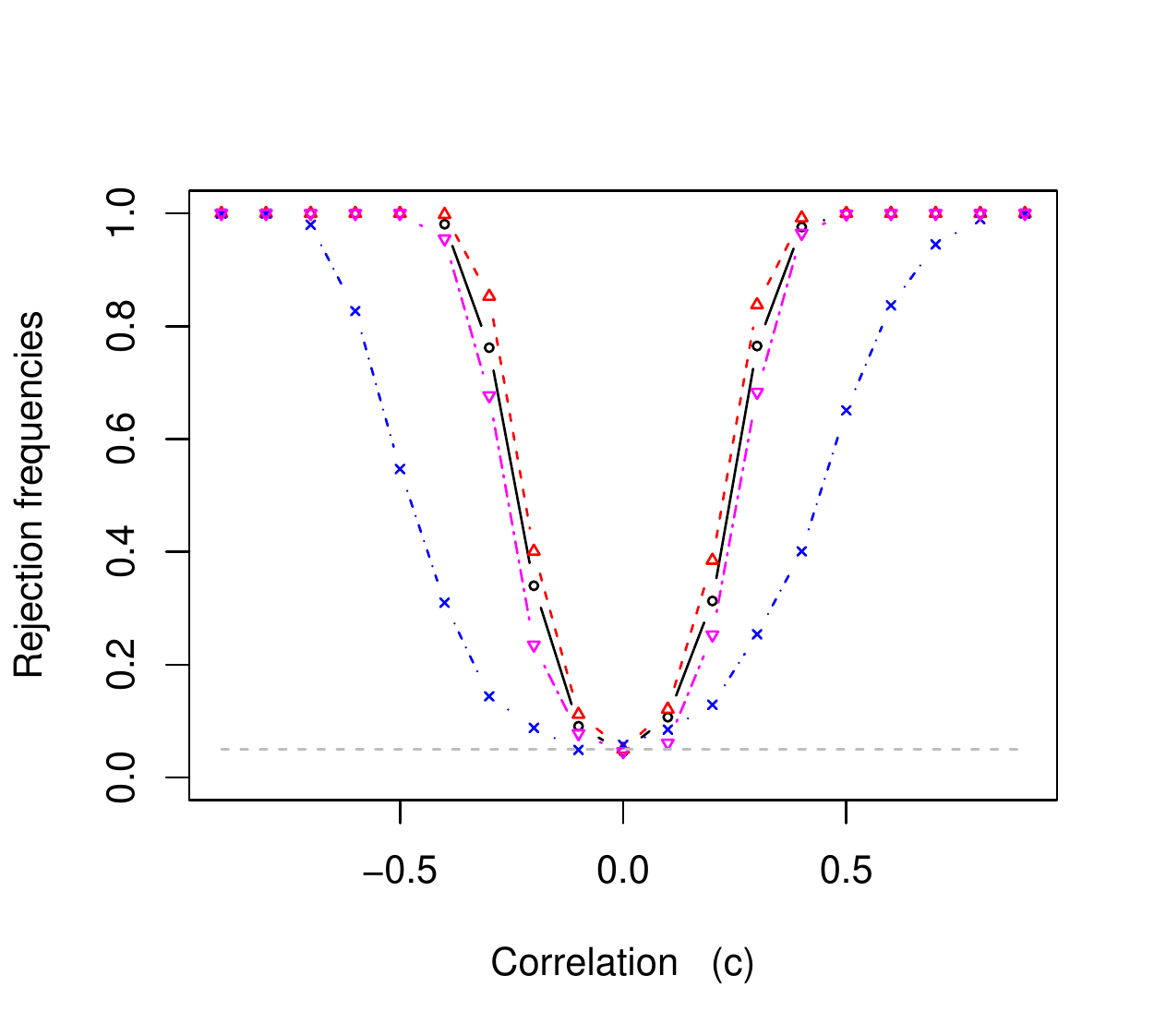}\vspace{-4mm}\\

\includegraphics[width=0.33\textwidth, height=65mm]{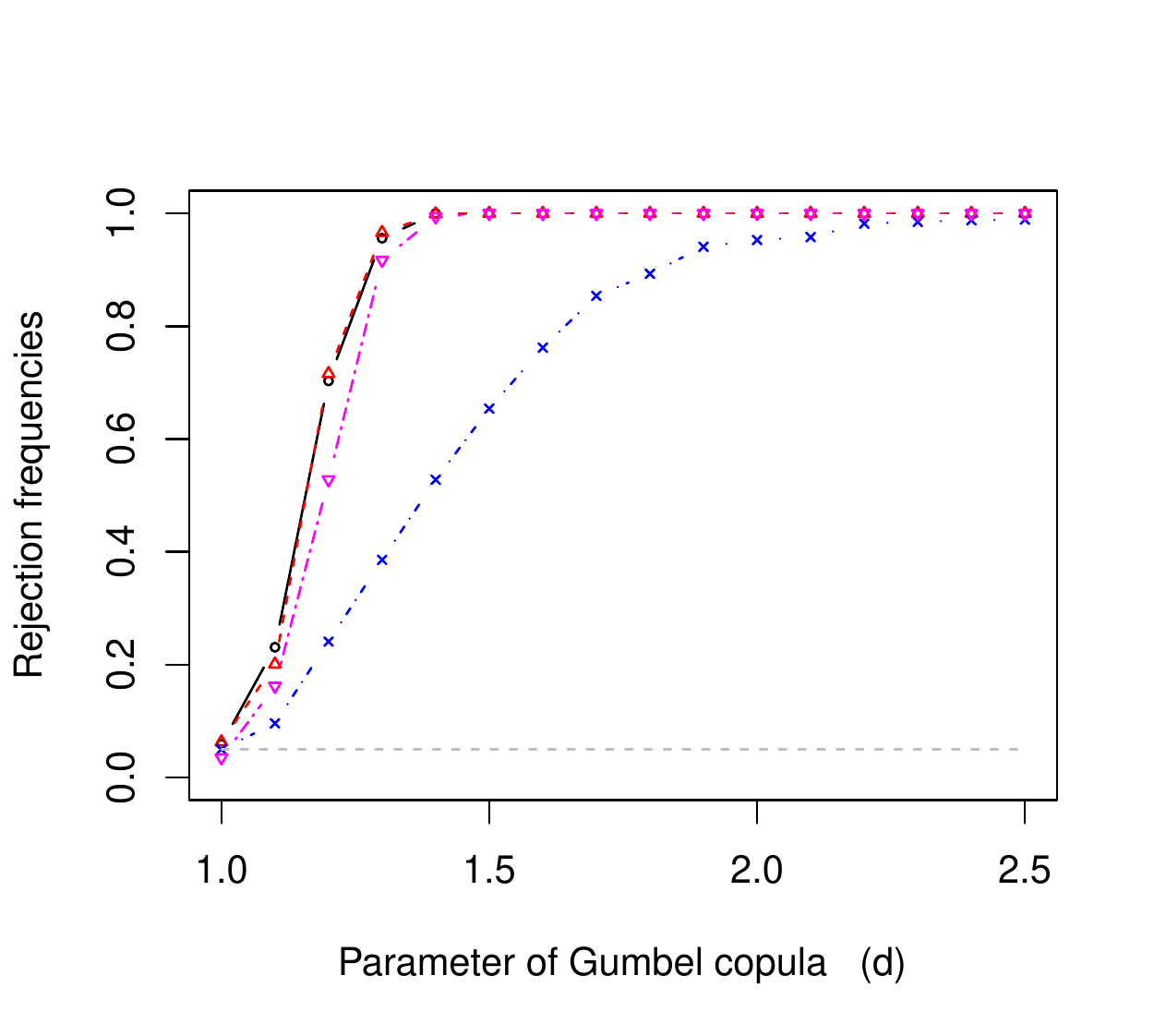}&
\includegraphics[width=0.33\textwidth, height=65mm]{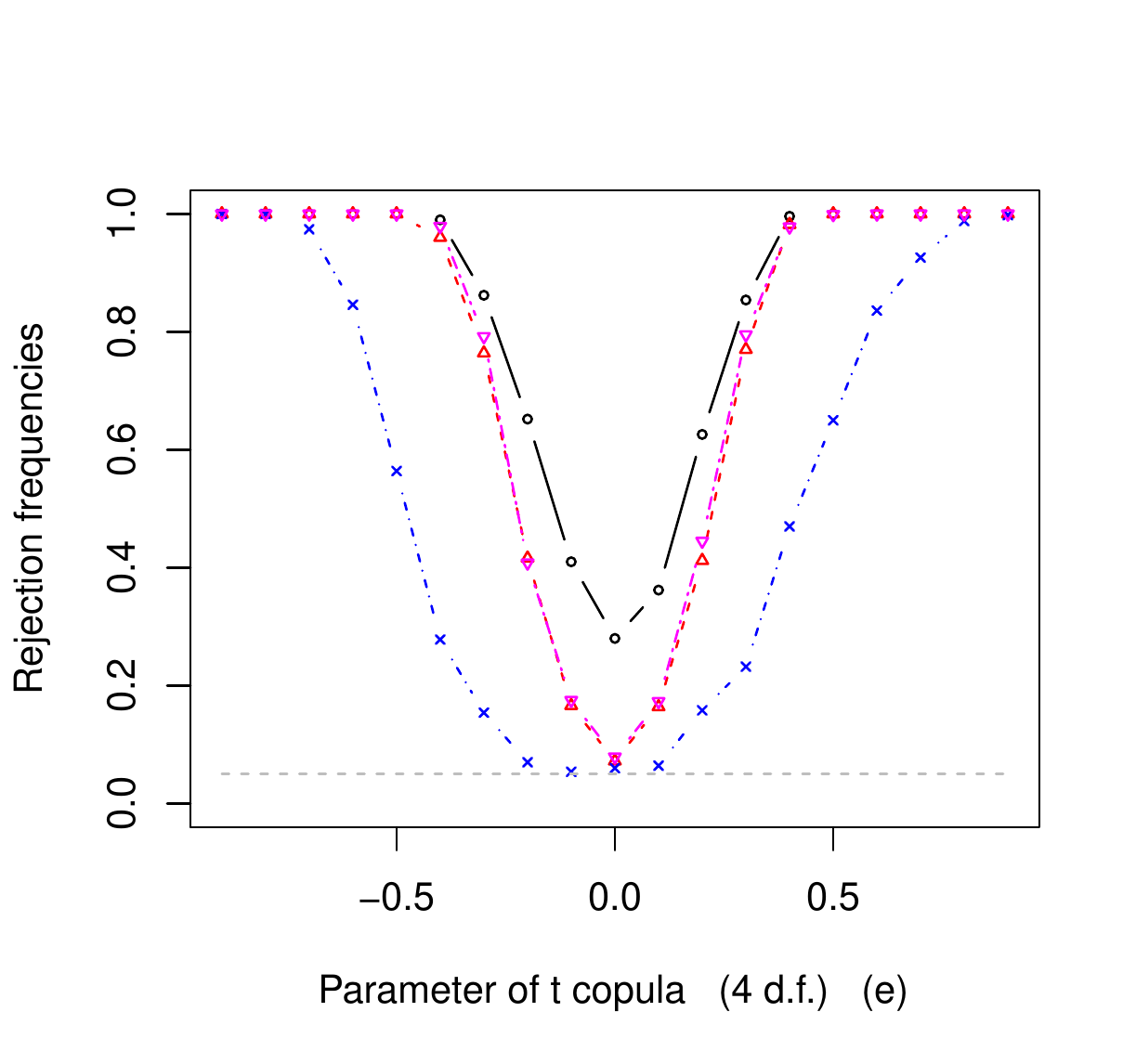}&%
\includegraphics[width=0.33\textwidth, height=65mm]{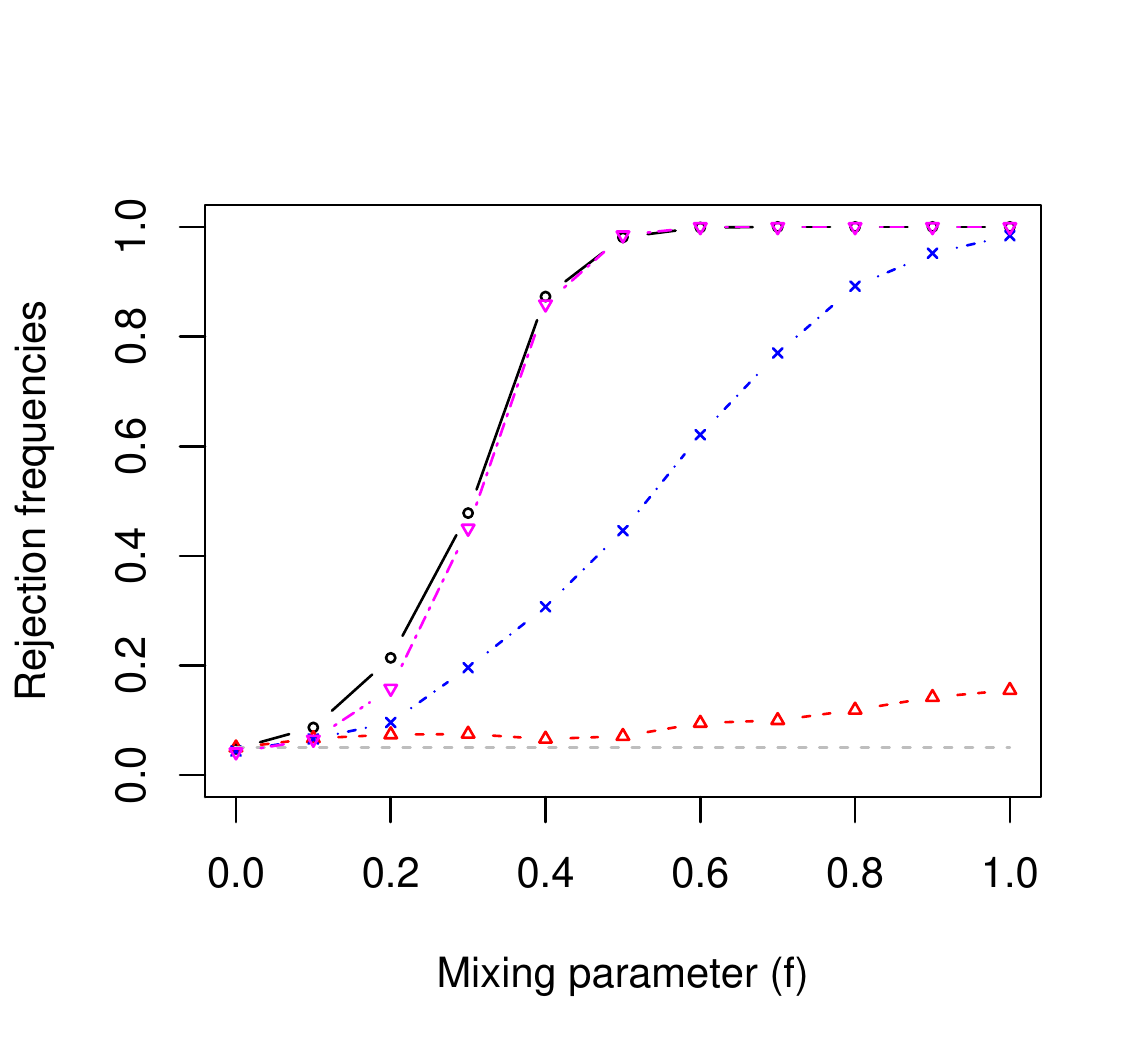}\\ 

\multicolumn{3}{ c }{ 
\includegraphics[width=0.7\textwidth]{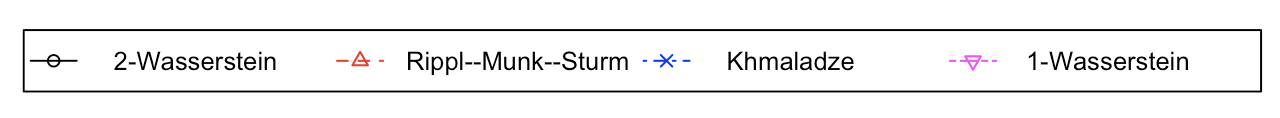}
\vspace{-4mm}}
\end{tabular}
\end{center}
\caption{\label{fig: GausNull} \slshape\small Empirical powers of various GoF tests for the simple Gaussian null hypothesis~$\mathcal{H}_0^n : \prob = \normal_2(0, I_2)$. Four tests are considered: the 2- and 1-Wasserstein distances (Section~\ref{sec: Test}), the Rippl--Munk--Sturm test  \citep{rippl2016limit}, and the Khmaladze Kolmogorov--Smirnov type   test \citep{khmaladze2016unitary}, see Section~\ref{sec:simu:simple:other}. The alternatives~$\prob$ in panels (a)--(f) are described in Section~\ref{sec:simu:simple:bivGauss}. Note that in (e), $\prob$ is not Gaussian even when $\rho = 0$.}
\end{figure}


Inspection of Figure~\ref{fig: GausNull} indicates that the Khmaladze test, as a rule,  is uniformly outperformed by the Rippl--Munk--Sturm and Wasserstein tests. The Rippl--Munk--Sturm test, of course, does relatively well under the Gaussian alternatives of panels (a)--(c) where, however, the Wasserstein test is almost as powerful (while its validity, contrary to that of  the Rippl--Munk--Sturm test, extends largely beyond the Gaussian null hypothesis). Against the non-Gaussian alternatives in panels (d)--(f), the Wasserstein test has  higher power than the Rippl--Munk--Sturm and Khmaladze tests, with the exception of the Gumbel copula alternative in panel (d), where the Rippl--Munk--Sturm and Wasserstein tests perform equally well. For the ``boomerang mixture"  of  panel (f), the Rippl--Munk--Sturm test fails to capture the change in distribution. There is little difference between the Wasserstein tests with $p = 1$ and $p = 2$, except for the $t$-copula, where $p = 2$ yields a more powerful test than $p = 1$.

%

\subsubsection{Mixture of bivariate Gaussian distributions}
\label{sec:simu:simple:mixGauss}

Figures~\ref{fig: MixtNull} to~\ref{fig: D5Tmarg} concern non-Gaussian null distributions $\prob_0$, so that the Rippl--Munk--Sturm test no longer applies. In Figure~\ref{fig: MixtNull}, the null distribution is the Gaussian mixture
 $
	\prob_0 
	= 
	0.5 \, \normal_2(0, I_2) + 
	0.5 \, \normal_2\left(\big(\begin{smallmatrix} 3 \\ 0 \end{smallmatrix}\big), I_2\right). 
$ 
The alternatives in both panels are as follows:
\begin{enumerate}[(a)]
	\item $\prob = 0.5 \, \normal_2(0, I_2) + 0.5 \, \normal_2\left(\big( \begin{smallmatrix}3+\delta \\  0\end{smallmatrix}\big), I\right)$ (rejection frequencies plotted against the   location shift $\delta \in \reals$);
	\item $\prob_0 = \lambda \, \normal_2(0, I_2) + (1-\lambda) \, \normal_2\left(\big(\begin{smallmatrix} 3 \\ 0 \end{smallmatrix}\big), I_2\right)$  (rejection frequencies plotted against the  mixing weight $\lambda \in [0, 1]$). 
\end{enumerate}
Both Wasserstein tests have higher power than the \citet{khmaladze2016unitary} test.

\begin{figure}
	\begin{center}
		\begin{tabular}{@{}c@{}c}
			\includegraphics[width=0.4\textwidth, height=65mm]{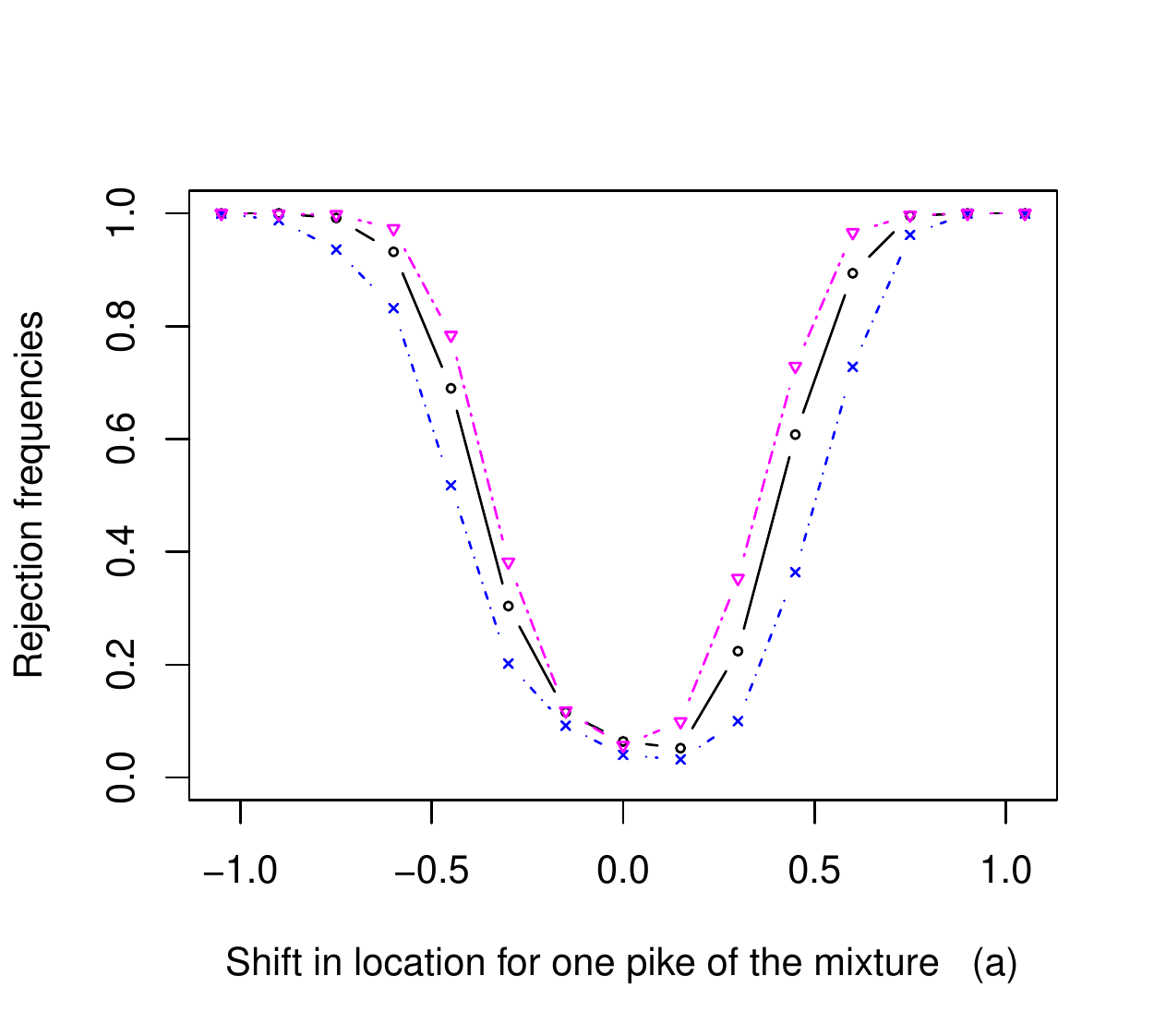}&%
			\includegraphics[width=0.4\textwidth, height=65mm]{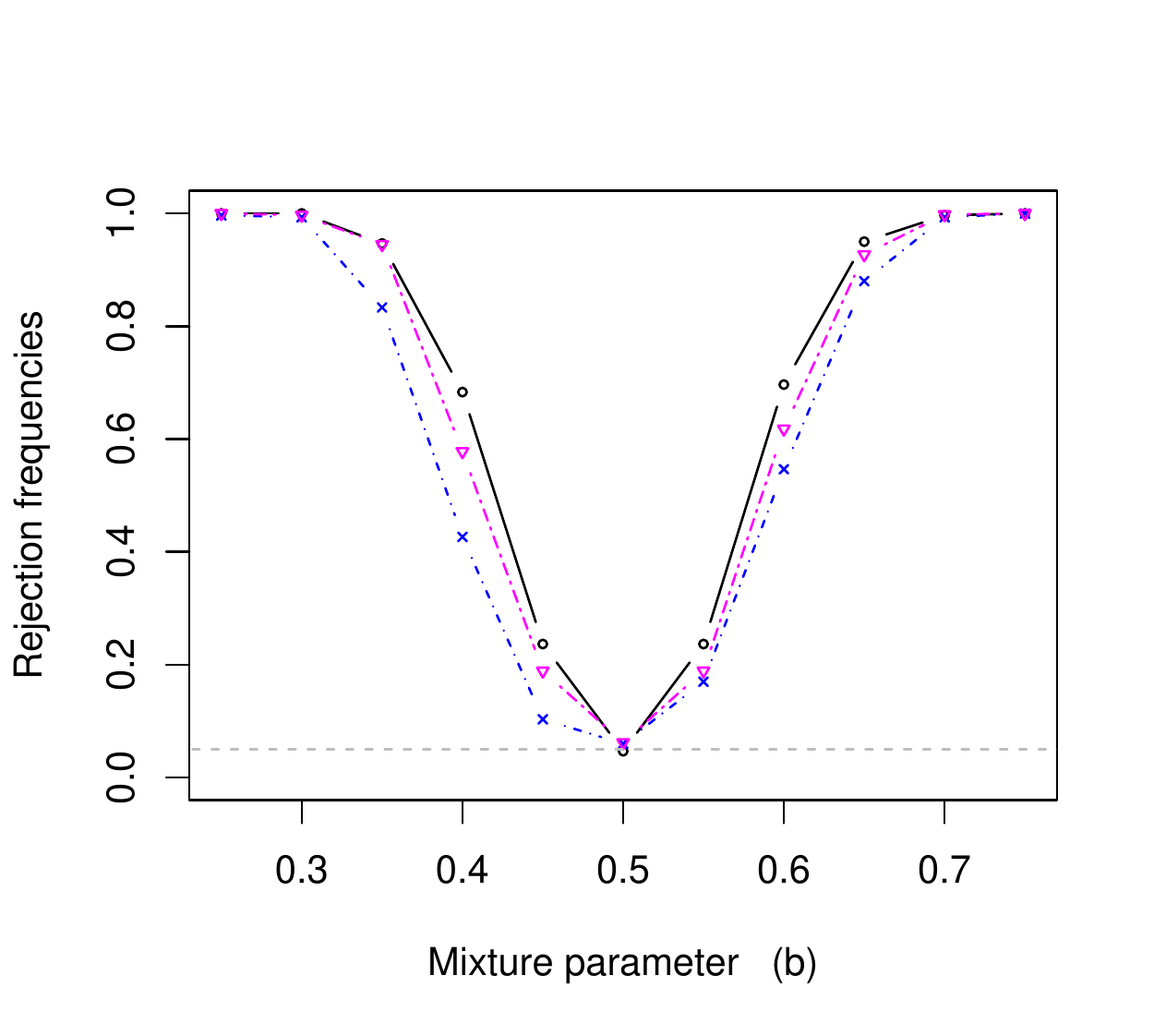}\\
			\multicolumn{2}{ c }{ 
				\includegraphics[width=0.5\textwidth]{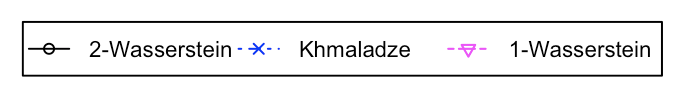}
				\vspace{-4.5mm}}
		\end{tabular}
	\end{center}
	\caption[]{\label{fig: MixtNull} \slshape\small Empirical powers of the Wasserstein and \citet{khmaladze2016unitary}  tests   for the simple null hypothesis $\mathcal{H}_0^n : \prob = \prob_0$ with $\prob_0$ an equal-weights mixture of $\normal_2(0, I_2) $ and~$\normal_2\left(\big(\begin{smallmatrix} 3 \\ 0 \end{smallmatrix}\big), I_2\right)$. The alternatives in (a)--(b) are described in Section~\ref{sec:simu:simple:mixGauss}.}
\end{figure}

\subsubsection{Gumbel copula and Gaussian marginals}
\label{sec:simu:simple:gumbelGauss}

In Figure~\ref{fig: GausGumb}, $\prob_0$ has standard Gaussian margins and a Gumbel copula with parameter~$\theta = 1.7$. 
The alternative $\prob$ is of the same form but with another value~$\theta\neq 1.7$ of the copula parameter $\theta \in [1, \infty)$. 
Again, the Wasserstein tests at $p \in \{1, 2\}$ have quite comparable performance and yield higher empirical powers in most cases. 
  
\begin{figure}
\begin{center}
\includegraphics[width=0.7\textwidth, height=70mm]{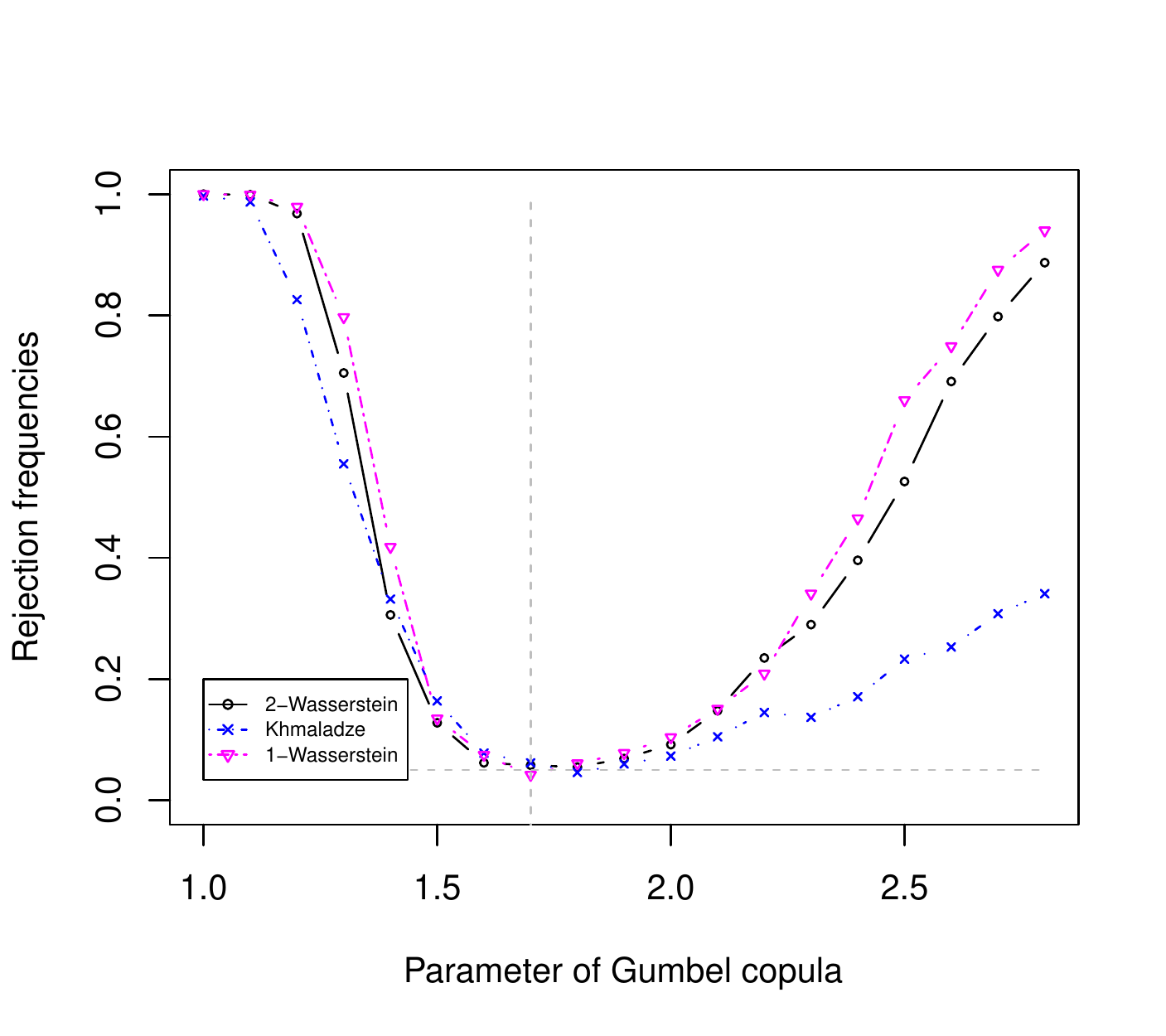}\vspace{-8mm}
\end{center}
\caption{\label{fig: GausGumb} \slshape\small Empirical powers of the Wasserstein and \citet{khmaladze2016unitary} tests for the simple null hypothesis $\mathcal{H}_0^n : \prob = \prob_0$ with $\prob_0$ a bivariate distribution with standard Gaussian margins and Gumbel copula with parameter $\theta = 1.7$. Rejection frequencies are plotted against the alternative copula parameter~$\theta$. 
}
\end{figure}

\subsubsection{A five-dimensional Student $t$ distribution}
\label{sec:simu:simple:5t}

Let us turn now to a higher-dimensional case. 
In Figure~\ref{fig: D5Tmarg}, we test for the null hypothesis $\mathcal{H}_0^n : \prob = \prob_0$ with $\prob_0 = \otimes_{i=1}^5 t_{25}$ a five-dimensional distribution with independence copula and Student $t_{25}$ margins. The following alternatives are considered:
\begin{enumerate}[(a)]
\item A distribution with independence copula and Student $t_\nu$ margins. The rejection frequencies are plotted against $\nu$.
\item A distribution with independence copula and margins equal to the Student $t_{25}$ distribution shifted by $\mu$. The rejection frequencies are plotted against $\mu$.
\item A distribution $t_{25} \otimes t_{25} \otimes t_{25} \otimes t_{25,\delta}$, where $t_{\nu,\delta}$ is the bivariate Student $t$ distribution with $\nu$ degrees of freedom and dependence parameter $\delta$. Note that $\delta = 0$ does not correspond to the null hypothesis. The rejection frequencies are plotted against $\delta$.
\end{enumerate}
The Khmaladze Kolmogorov--Smirnov test is most sensitive to the change in location (b), although the two Wasserstein tests perform quite well too. For the other two alternatives, the Wasserstein tests have much higher power than the Khmaladze test. For the Wasserstein test, there is little difference between $p = 1$ and $p = 2$, except for case~(c), in which the choice of $p = 2$ yields higher power.

\begin{figure}
\begin{center}
\begin{tabular}{@{}c@{}c@{}c}
\includegraphics[width=0.33\textwidth, height=70mm]{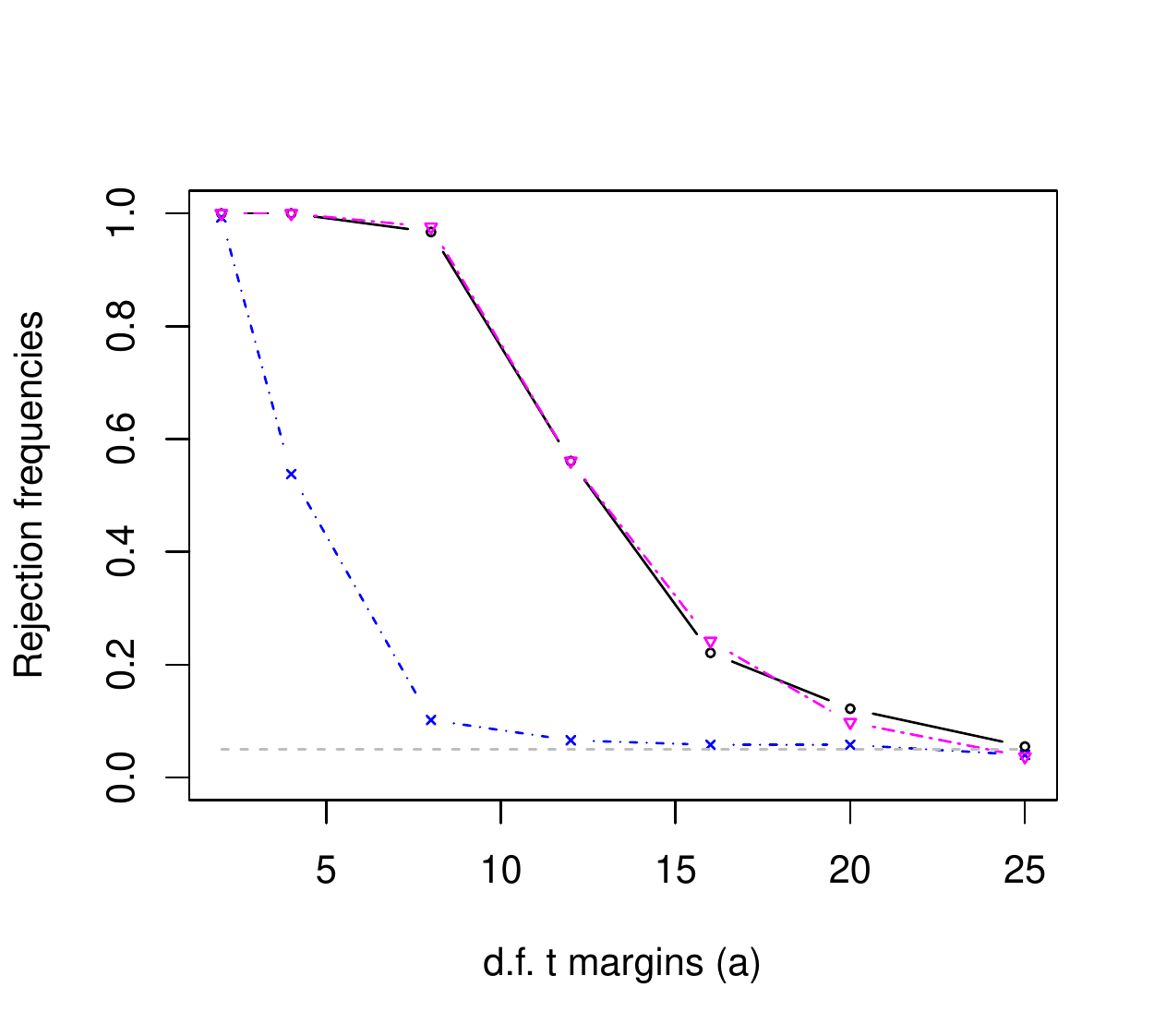}&
\includegraphics[width=0.33\textwidth, height=70mm]{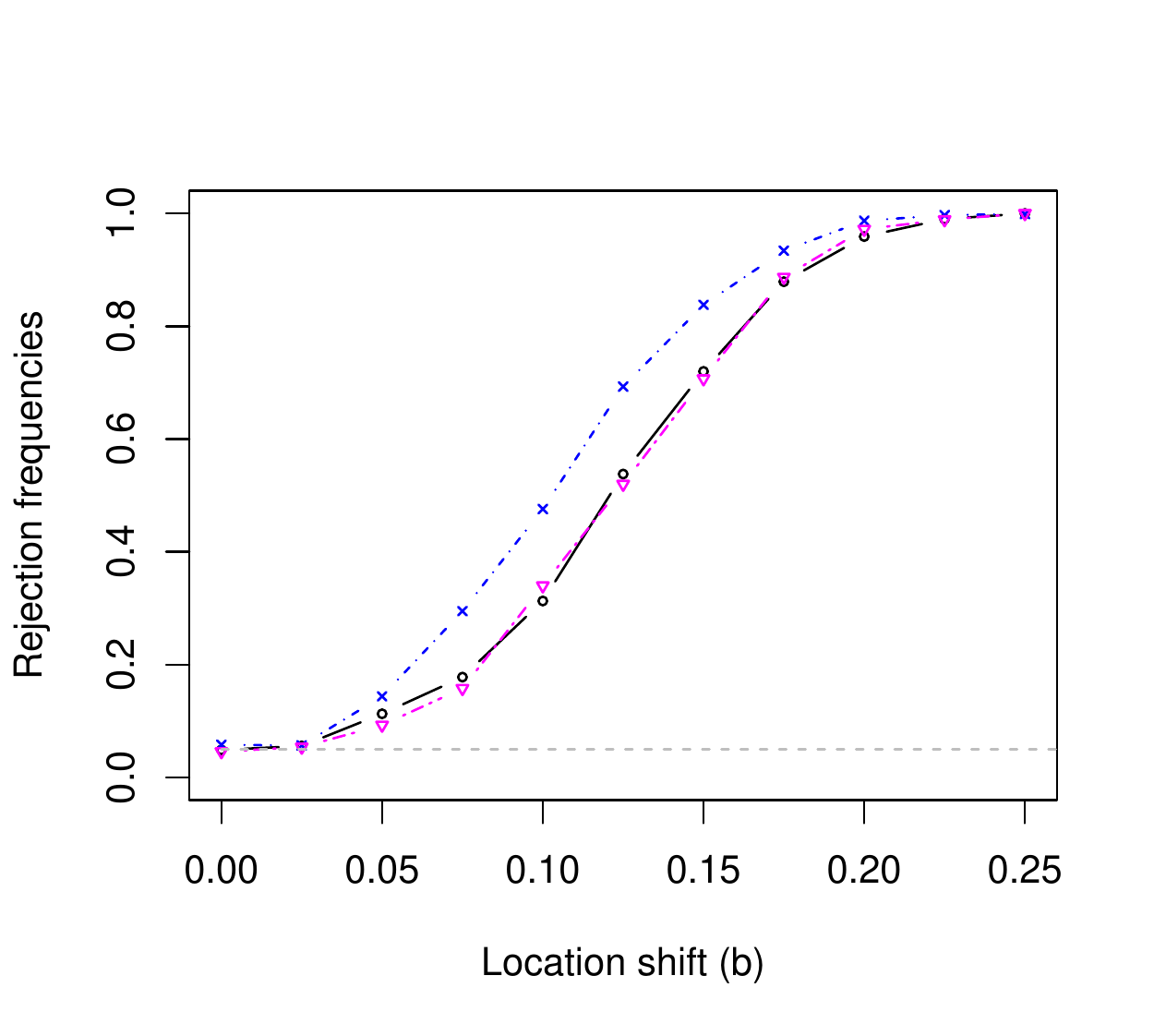} &
\includegraphics[width=0.33\textwidth, height=70mm]{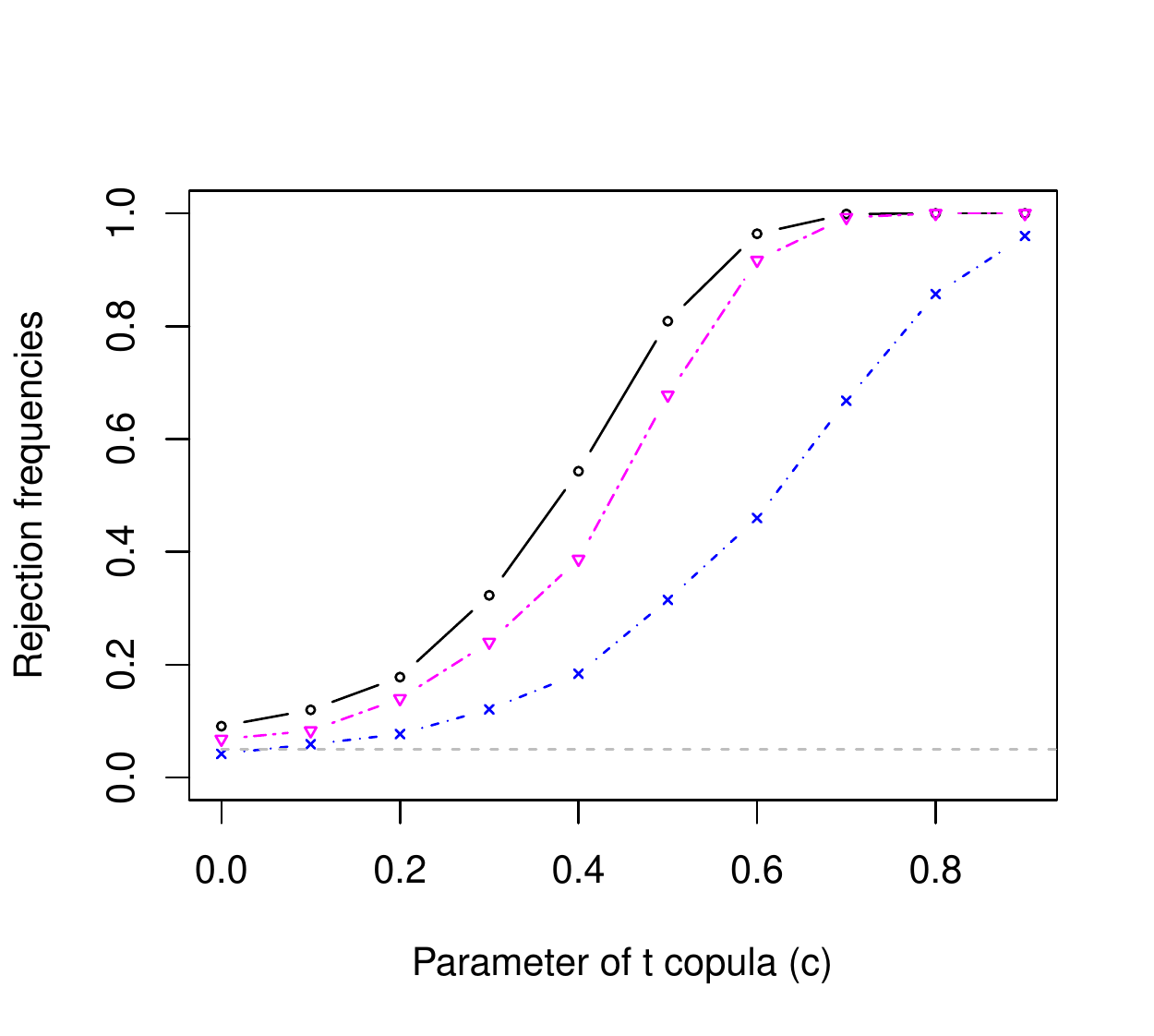}\vspace{-3mm} \\
\multicolumn{3}{ c }{ 
\includegraphics[width=0.5\textwidth]{Legend2B.png}
\vspace{-3mm}}
\end{tabular}
\end{center}
\caption{\label{fig: D5Tmarg} \slshape\small Empirical powers of various GoF tests for the simple hypothesis $\mathcal{H}_0^n$ that $\prob \in \Prob_2(\reals^5)$ is the five-fold product of the Student $t_{25}$ distribution with itself. Shown are rejection frequencies obtained via the Wasserstein test of order $p \in \{1, 2\}$ and the Khmaladze Kolmogorov--Smirnov one.}
	The alternatives (a)--(c) are explained in Section~\ref{sec:simu:simple:5t}. In~(c), no setting corresponds to the null hypothesis.
\end{figure}

\subsection{Elliptical families as group models}
\label{sec:simu:group}

Elliptical distributions arise as group models for the group of affine transformations, see Example~\ref{ex:affine} in Section~\ref{sec:group}. 
Two notable examples are the multivariate normal family and the multivariate Student $t$ distribution with a fixed number of degrees of freedom. 
We assess the finite-sample performance of the Wasserstein test in~\eqref{eq:phi:group} for $p \in \{1, 2\}$ with residuals computed by
\[ 
	\hat{Z}_{n,i} = \hat{L}_{n}^{-1} (X_i - \hat{\mu}_n),
	\qquad i = 1, \ldots, n, 
\]
with $\hat{\mu}_n$ the sample mean vector and $\hat{L}_n$ the lower Cholesky triangle of the empirical covariance matrix of $X_1,\ldots,X_n$.

\subsubsection{Testing for multivariate normality} 
\label{sec:simu:group:Gauss}

Testing for multivariate normality is a well-studied problem for which many tests have been put forward. 
As benchmarks, we will consider here the tests proposed in \citet{royston1983}, \citet{henze1990class}, and \citet{rizzo2016energy}. 
Royston's test is a generalisation of the well-known Shapiro--Wilks test. It only tests whether the margins are Gaussian and ignores the dependence structure.
The Henze--Zirkler test statistic is an integrated weighted squared distance between the characteristic function of the multivariate standard normal distribution and the empirical characteristic function of the empirically standardized data.
Interestingly, \citet{ramdas2017wasserstein} showed that the Wasserstein distance and the energy distance of \citet{rizzo2016energy} are connected, as the so-called entropy-penalized Wasserstein distance interpolates between them two. 
We borrowed the implementation of these tests from the \textsf{R} package \textsf{MVN} \citep{MVN}. 
The test by \citet{rippl2016limit} considered in Section~\ref{sec:simu:simple} does not apply, since it only can handle fully specified Gaussian distributions, while here, the mean vector and covariance matrix are unknown.

In Figure~\ref{fig: GausFam}, we consider dimensions $d = 2$ [top row, panels (a) and (b)] and $d = 5$ [bottom row, panels (c) and (d)].
Here are the alternative distributions~$\prob$:
\begin{enumerate}[(a)]
	\item A bivariate distribution with standard normal margins and a bivariate Gumbel copula with parameter~$\psi \in [1, \infty)$. Rejection frequencies are plotted against $\psi \in [1, \infty)$. 
	\item A bivariate distribution with independent margins, one of which is standard normal while the other one is Student \emph{t} with $\nu > 0$ degrees of freedom. Rejection frequencies are plotted against $\nu>0$.
	\item A five-variate distribution given by $\mathcal{N}_3(0,I_3) \otimes \mathcal{D} $, where $\mathcal{D}$ is a bivariate distribution with Gumbel copula indexed by a parameter $\psi \in [1, \infty)$ and with standard normal margins. Rejection frequencies are plotted against $\psi$.
	\item A five-variate distribution with independent margins, all of which are Student~$t_\nu$. Rejection frequencies are plotted against the common parameter~$\nu$.
\end{enumerate}

\begin{figure}
\begin{center}
\begin{tabular}{@{}c@{}c}
\includegraphics[width=0.5\textwidth, height=70mm]{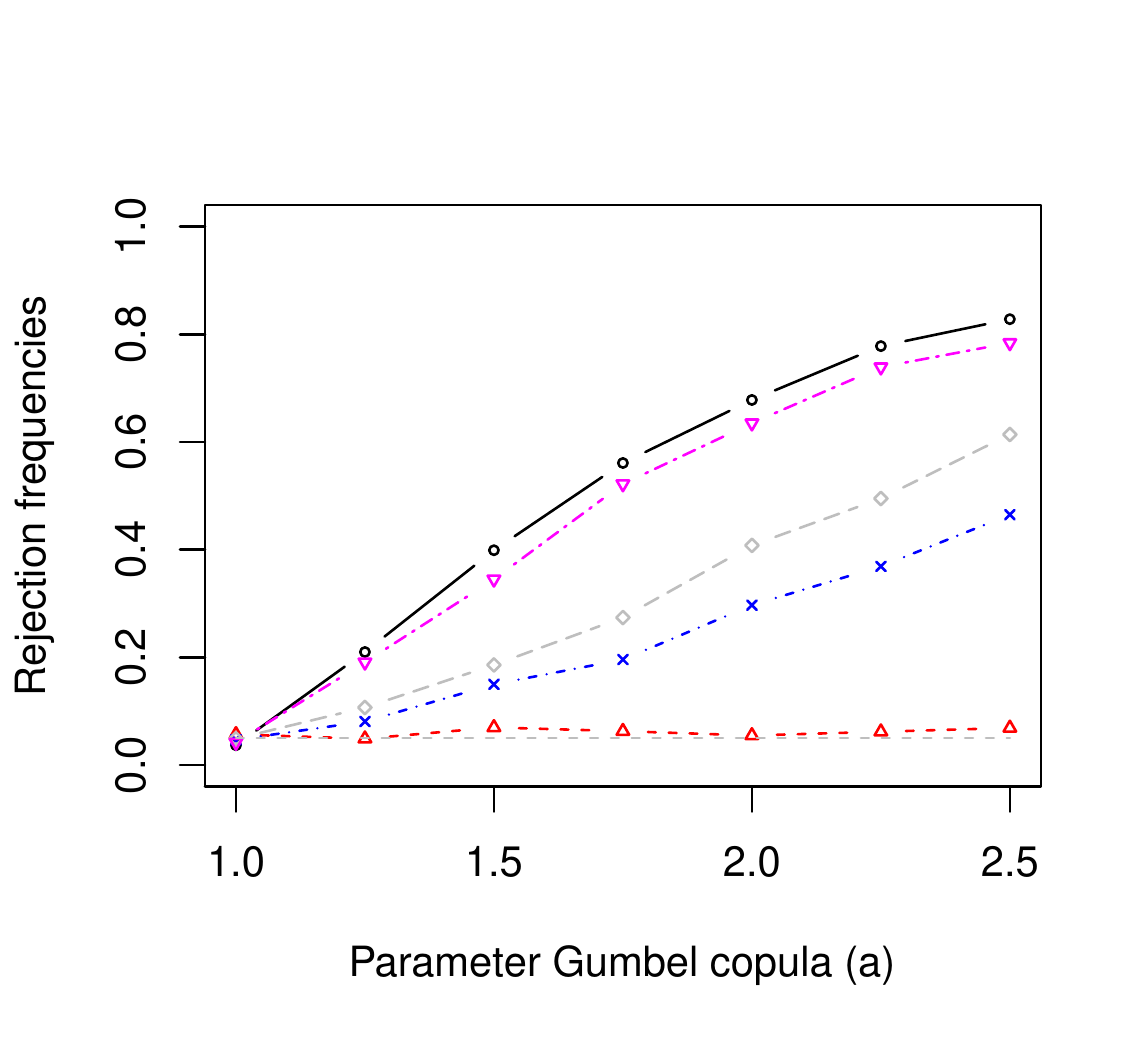}&%
\includegraphics[width=0.5\textwidth, height=70mm]{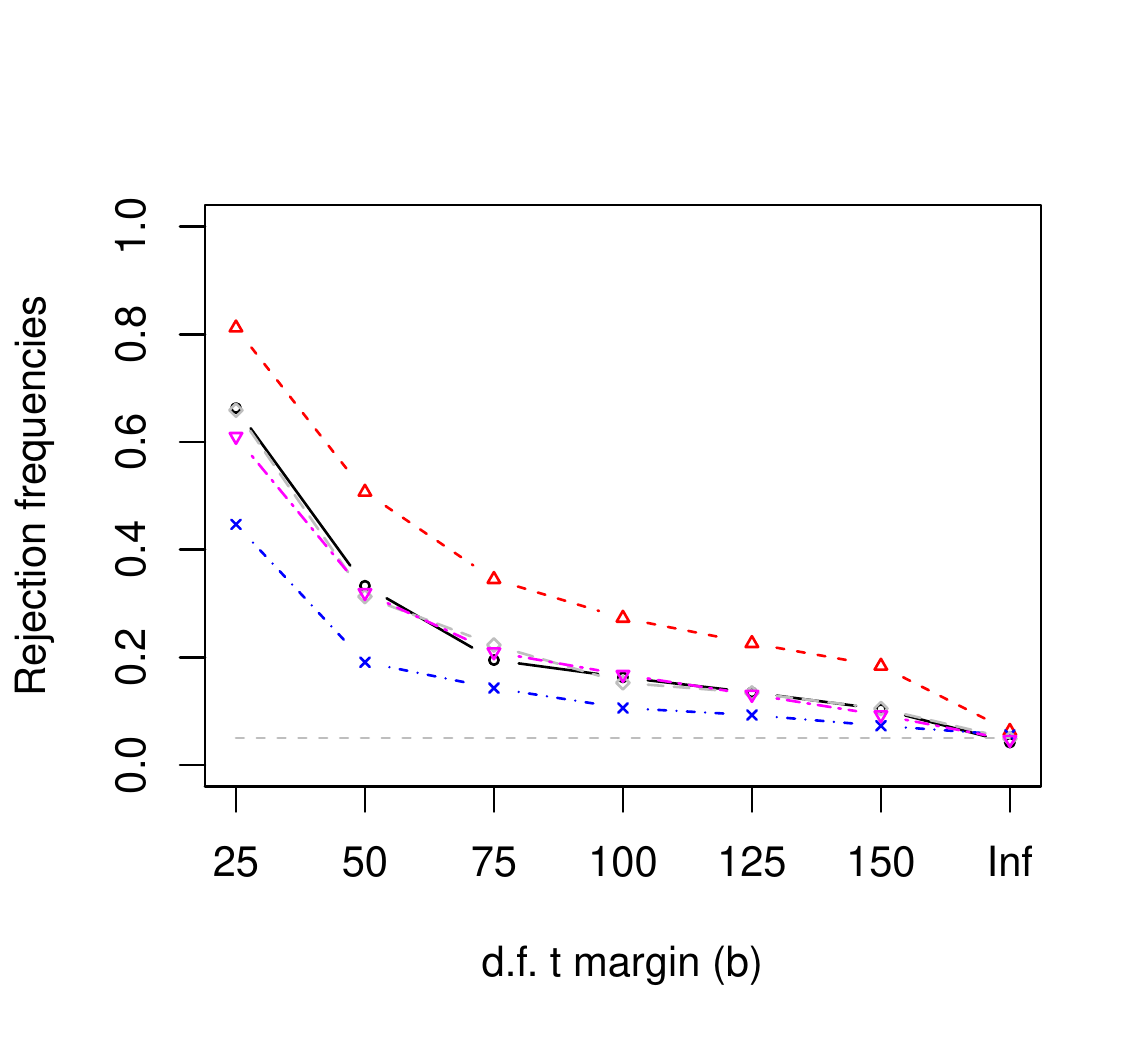}\\[-3mm]
\includegraphics[width=0.5\textwidth, height=70mm]{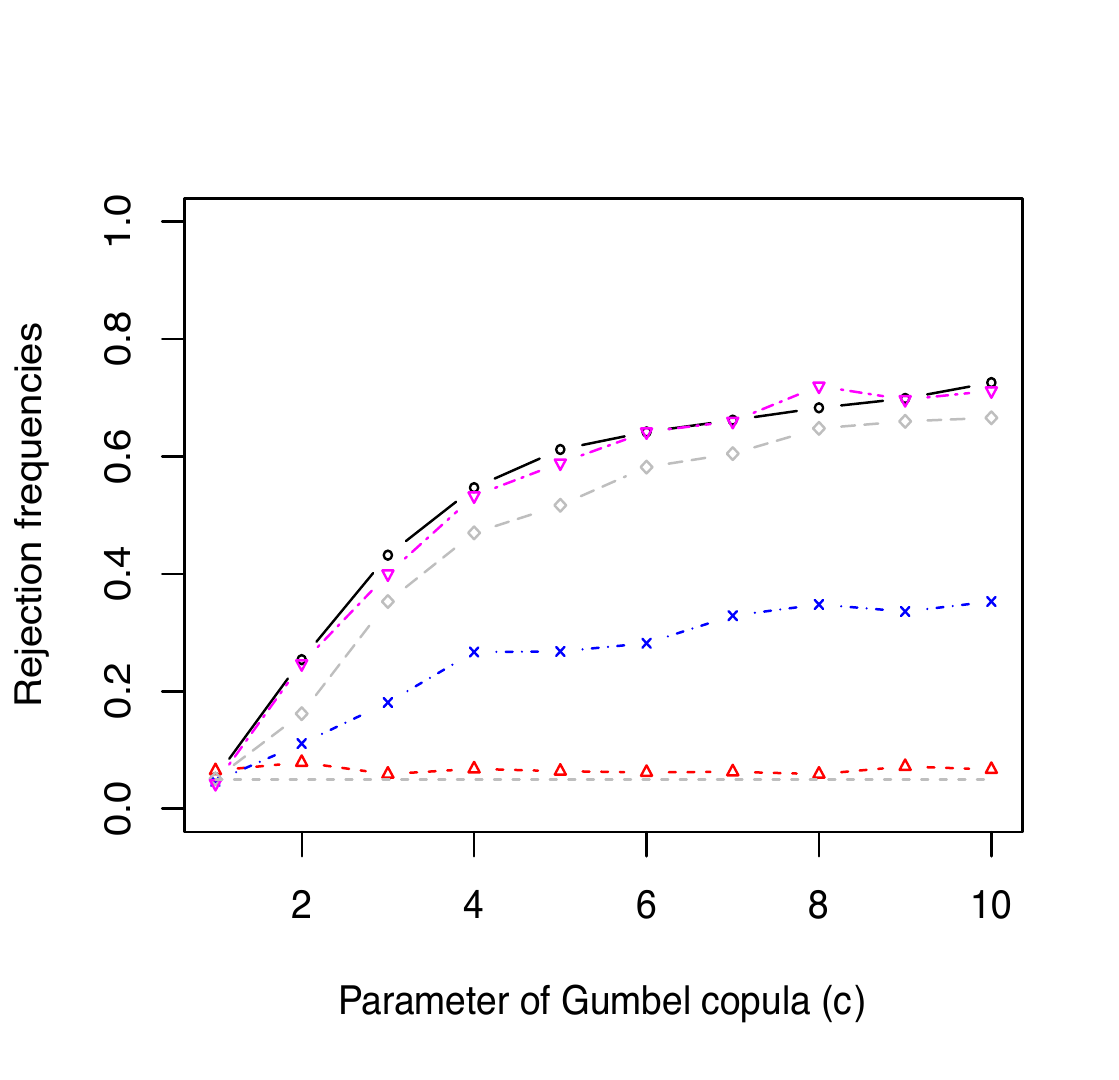}& \includegraphics[width=0.5\textwidth, height=70mm]{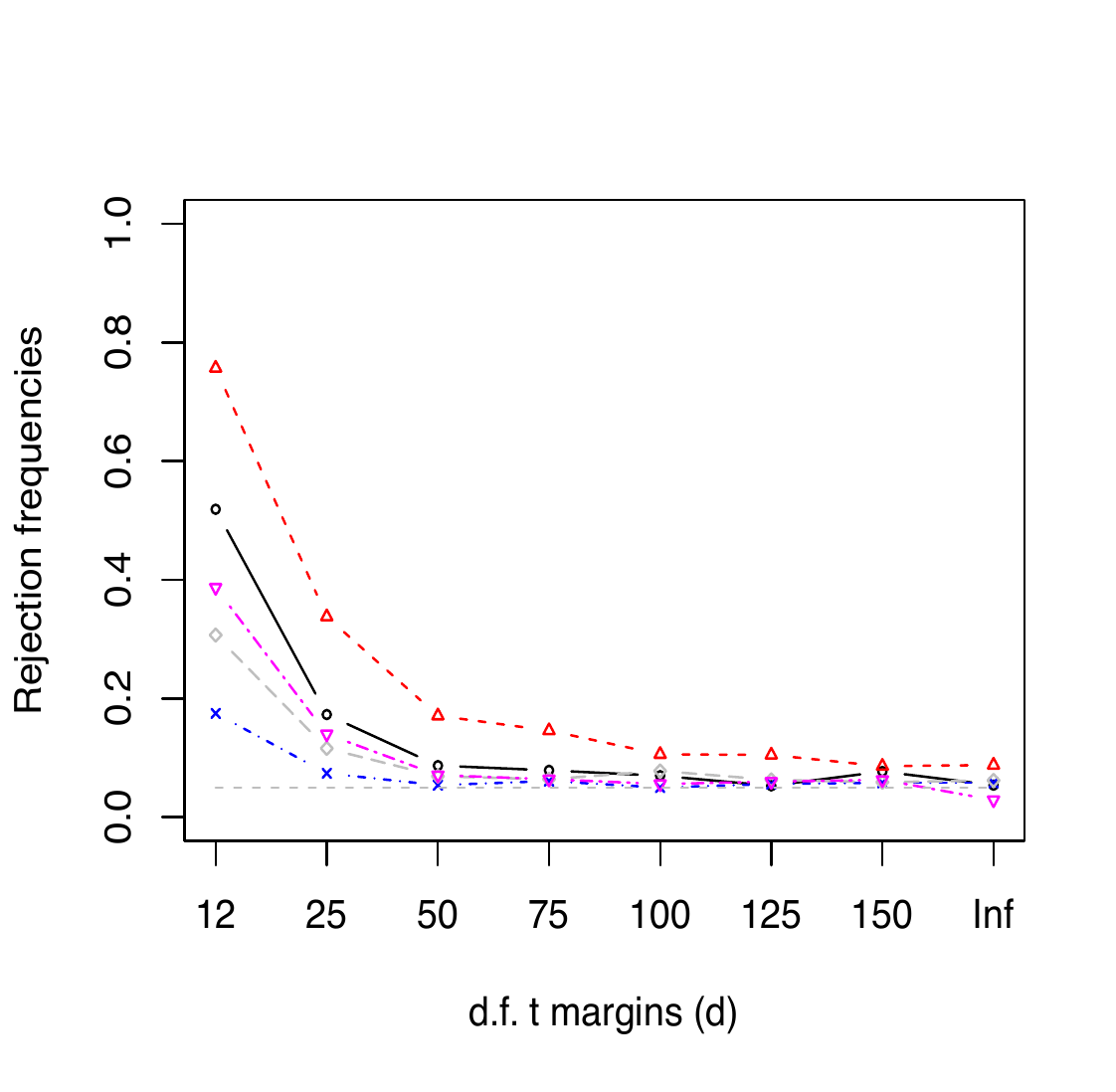}\\[-3mm]
\multicolumn{2}{ c }{ 
\includegraphics[width=0.7\textwidth]{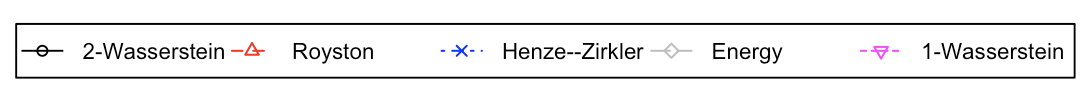}
\vspace{-5mm}}
\end{tabular}
\end{center}
\caption{\label{fig: GausFam} \slshape\small Empirical power curves of various tests that $\prob$ is $d$-variate Gaussian with unknown mean vector and covariance matrix. Top: $d = 2$. Bottom: $d = 5$. The Wasserstein test in Section~\ref{sec:group} is compared to three other multivariate normality tests mentioned in Section~\ref{sec:simu:group:Gauss}, where the alternatives (a)--(d) are described as well.}
\end{figure}

The Wasserstein tests have the highest power against the copula alternatives in~(a) and~(c), while Royston's test has no power at all, as expected.
For the Student \emph{t} alternatives in panels (b) and (d), Royston's test comes out as most powerful, but the Wasserstein and energy tests \citep{rizzo2016energy} perform quite well too.
It is also worth noticing that in panel~(b), Royston's test had a type~I error of 6.3\% and in panel~(d) this rose to 8.8\%. 

\subsubsection{Bivariate elliptical Student $t$ with unknown location and scatter}
\label{sec:simu:group:t}

For fixed scalar $\nu > 0$, the $d$-variate elliptical Student $t$ family with $\nu$ degrees of freedom and unknown location and scatter is generated by the affine transformation group in Example~\ref{ex:affine} applied to a spherical distribution whose radial density is that of the square root of a rescaled Fisher~$F(d,\nu)$ variable.
In dimension $d = 2$, we consider the hypothesis that $\prob$ is of this form with $\nu = 12$. 

Figure~\ref{fig: TSkew} provides a plot of rejection frequencies under bivariate skew-\emph{t} alternatives \citep{azzalini:2014} with marginal skewness parameters~$\alpha_1$ and $\alpha_2$. 
Simulations were based on  
the \textsf{R} package \textsf{sn} \citep{sn}.
In principle, the empirical process approach in \citet{khmaladze2016unitary} leads to test statistics that are asymptotically distribution-free, but their numerical implementation involves a number of multiple integrals, the computation of which remains problematic.

%
%

\begin{figure}
\includegraphics[width=0.70\textwidth, height=0.50\textwidth]{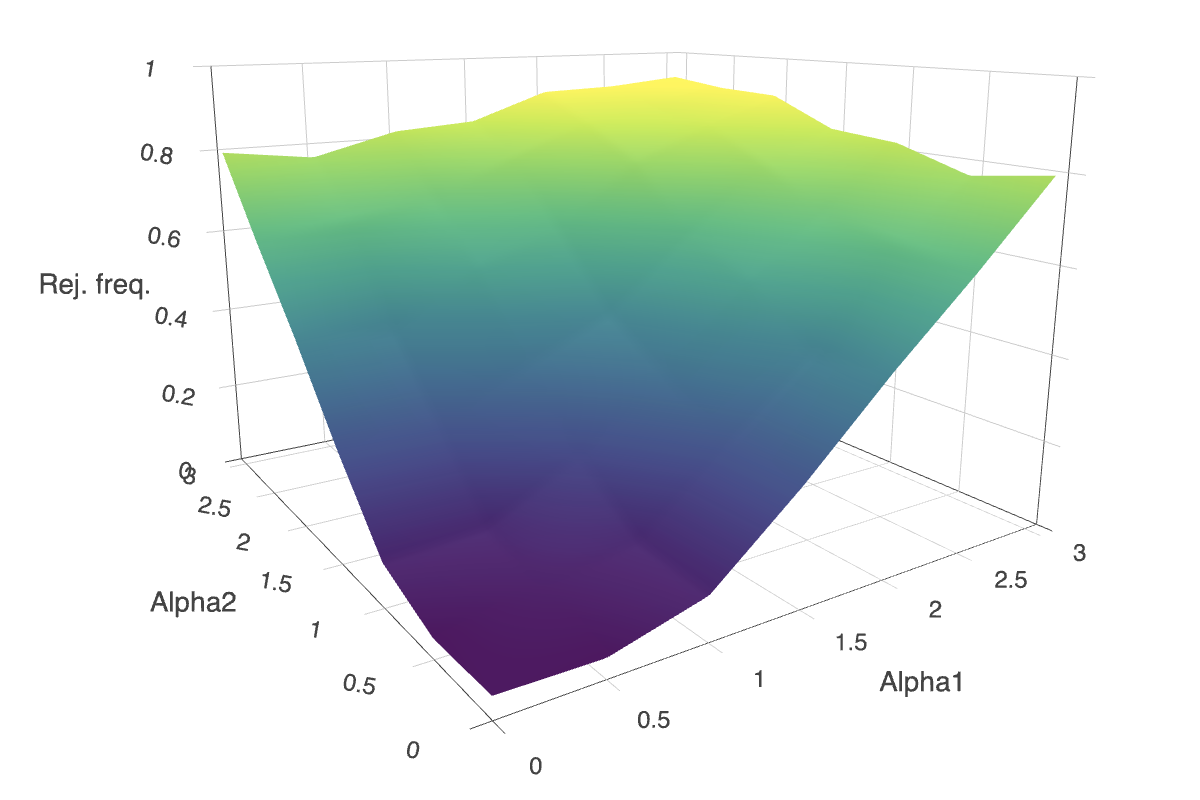}
\caption{\label{fig: TSkew} \slshape\small Empirical power of the Wasserstein test in \eqref{eq:phi:group} for the hypothesis that $\prob$ is bivariate Student~$t$ with $\nu = 12$ degrees and unknown mean vector and covariance matrix. The alternatives $\prob$ are bivariate skew-$t$ with skewness parameters $\alpha_1$ and $\alpha_2$.}
\end{figure}

\subsection{General parametric families}
\label{sec:simu:param}

Turning to more general parametric families $\model = \{ \prob_\theta : \theta \in \Theta\}$, we investigate the finite-sample performance of the Wasserstein-based tests in Section~\ref{sec:param}.
We consider models indexed by location, scale, and shape parameters.
As in Section~\ref{sec:param:group}, the location-scale parameters are treated as stemming from the transformation group in Example~\ref{ex:locscale} of Section~\ref{sec:group}, while for the shape parameters, we apply the parametric bootstrap. The test is thus the one we define in~\eqref{eq:phi:group:param}.
We numerically investigate our conjecture that the test has the correct size asymptotically.
In theory, the \citet{khmaladze2016unitary} approach also applies, but its implementation is intricate and remains unsettled, especially when there are multiple parameters.

\subsubsection{Gaussian margins and AMH copula}
\label{sec:simu:param:GaussAMH}

Let $\model$ consist of the bivariate distributions with Gaussian marginals and an Ali--Mikhail--Haq (AMH) copula, yielding a five-dimensional parameter vector $\theta = (\psi, \mu_1, \sigma_1, \mu_2, \sigma_2)$ with AMH copula parameter $\psi \in \Theta = [-1, 1]$, means $\mu_1, \mu_2 \in \reals$ and standard deviations $\sigma_1, \sigma_2 \in (0, \infty)$. 
The means and standard deviations are estimated by their empirical counterparts, so that the residuals~\eqref{eq:Zin:eta} are $\hat{Z}_{i,n} = (\hat{Z}_{i,1,n}, \hat{Z}_{i,2,n})$ with
\begin{equation}
\label{eq:Znij}
	\hat{Z}_{i,j,n} = (X_{i,j} - \hat{\mu}_{j,n} ) / \hat{\sigma}_{j,n}
\end{equation}
for $i = 1, \ldots, n$ and $j = 1, 2$.
Following \citet{genest+g+r:1995}, the copula parameter $\psi$ is estimated via a rank-based maximum pseudo-likelihood estimator.
Obviously, the component-wise ranks of the data and those of the residuals in \eqref{eq:Znij} coincide, so that $\hat{\psi}_n$, as required, depends on the data only through the residuals. 
The test statistic $T_{\model,n}$ in \eqref{eq:TMn:eta} is the Wasserstein distance between the empirical distribution of the residuals and the bivariate distribution with standard Gaussian margins and AMH copula with the estimated parameter. 

We first checked the validity of the parametric bootstrap procedure of Section~\ref{sec:param}. To do so, we simulated $1\,000$ independent random samples of size $n = 200$ from~$\prob \in \model$ with $\psi = 0.7$. For each sample, we calculated the test statistic $T_{\model,n}$ in~\eqref{eq:TMn:eta} and checked whether or not it exceeds the bootstrapped critical value~$c_{\model}(\alpha, n, \hat{\psi}_n)$ for $\alpha$ equal to multiples of $5\%$.
The critical values were computed as described below~\eqref{eq:phi:group:param}.
The points in Figure~\ref{fig: ParamFamGausAMH}(a) show the empirical type~I errors as a function of~$\alpha$. 
The diagonal line fits the points well, supporting the conjecture that the parametric bootstrap is asymptotically valid.

Figure~\ref{fig: ParamFamGausAMH}(b) similarly displays the rejection frequencies of the Wasserstein test for $p = 2$ under an alternative $\prob$ whose copula belongs to the Frank family. 
If the Frank copula parameter is equal to zero, the Frank copula reduces to the independence copula, which is a member of the AMH family too. 

\begin{figure}
\begin{center}
\begin{tabular}{@{}c@{}c}
\includegraphics[width=0.50\textwidth, height=75mm]{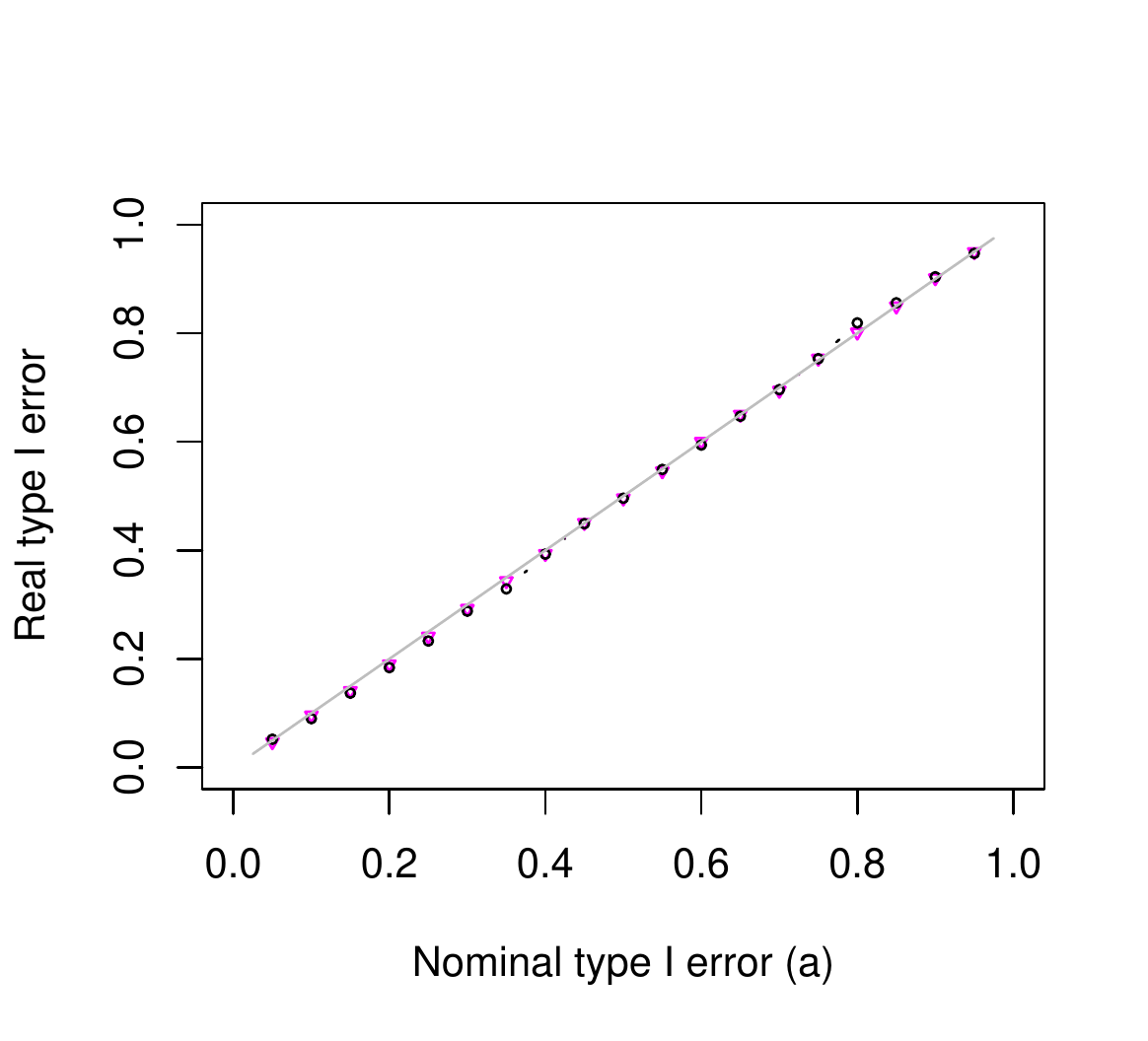}&
\includegraphics[width=0.50\textwidth, height=75mm]{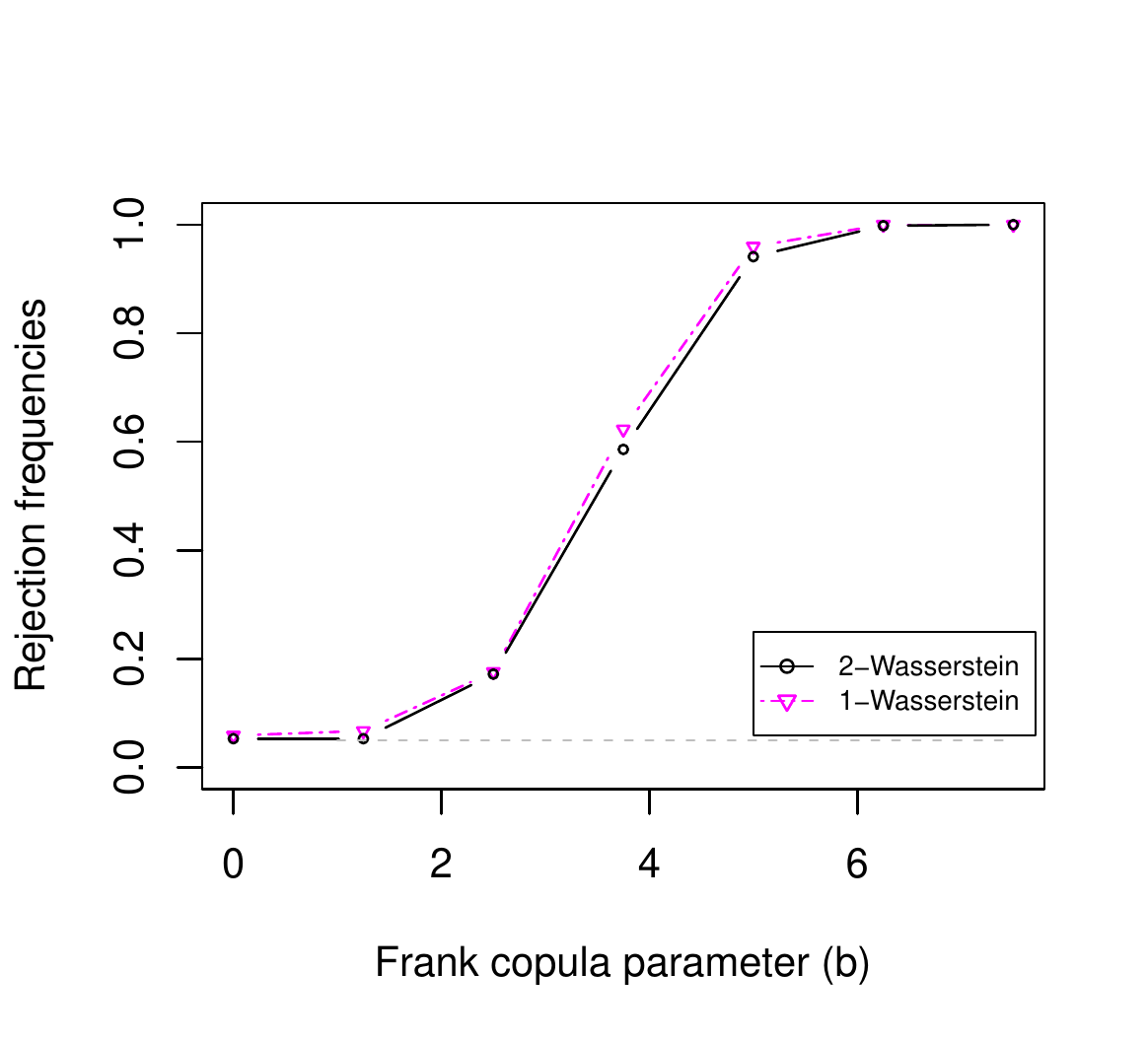}
\vspace{-5mm}\\
\end{tabular}
\end{center}
\caption{\label{fig: ParamFamGausAMH} \slshape\small Wasserstein tests for $\mathcal{H}_0^n : \prob \in \model$ with $\model$ the parametric family of bivariate distributions with Gaussian margins and AMH copula (Section~\ref{sec:simu:param}). 
Test statistics and critical values are computed from estimated residuals via a parametric bootstrap as in Section~\ref{sec:param:group}. Panel~(a): real versus nominal type~I errors $\alpha$ based on $1\,000$ samples of size $n = 200$ drawn from $\prob \in \model$ with $\psi = 0.7$.  Panel~(b): powers against alternatives $\prob$ with Gaussian marginals and Frank copula with varying parameter; if the latter is zero, the Frank copula is the independence one, which belongs to the AMH family too.}
\end{figure}

\subsubsection{A multivariate Gumbel max-stable family}
\label{sec:simu:param:maxstab}

Next, let $\model = \{ \probQ_\theta : \theta \in \Theta\}$ be the family of $d$-variate distributions with Gumbel margins with unknown location and scale parameters $(l_j, s_j) \in \reals \times (0, \infty)$ for $j = 1, \ldots, d$ and a Gumbel copula with unknown shape parameter $\psi \in [1, \infty)$. 
Each $\prob_\theta \in \model$ is thus a $d$-variate max-stable distribution, that is, a possible large-sample limit of the vector of affinely normalized component-wise maxima of an i.i.d.\ sample from a common distribution \citep[Chapter~9]{bgst:2004}. 

There are $2d + 1$ parameters in total. We treat the $2d$ location-scale parameters as indexing the transformation group in Example~\ref{ex:locscale} of Section~\ref{sec:group}. The parameters are estimated in two stages:
\begin{enumerate}
	\item The $2d$ location-scale parameters are estimated separately for each margin $j = 1, \ldots, d$ by maximum likelihood, producing $\hat{l}_j$ and $\hat{s}_j$.
	\item The copula parameter is estimated by maximum likelihood on the basis of the estimated residuals $\hat{Z}_{i,n} = (\hat{Z}_{i,j,n})_{j=1}^d$ with
	\[
		\hat{Z}_{i,j,n} = (X_{i,j} - \hat{l}_j) / \hat{s}_j,
		\qquad i = 1, \ldots, n.
	\]
\end{enumerate}
This two-stage maximum pseudo-likelihood estimation procedure usually enjoys a high relative efficiency with respect to the full maximum likelihood estimator and is computationally much simpler \citep{joe:2005}.
The location-scale estimators are equivariant under location-scale transformations.
The residuals and the copula parameter estimator are thus invariant under such transformations.

We then proceed as in Section~\ref{sec:param:group}.
The goodness-of-fit statistic $T_{\model,n}$ in \eqref{eq:TMn:eta} measures the Wasserstein distance between the empirical distribution of the estimated residuals and the $d$-variate max-stable distribution with standard Gumbel margins and Gumbel copula with the estimated parameter. 
The goodness-of-fit test is carried out as in \eqref{eq:phi:group:param}.

Figure~\ref{fig: ParamFamEVDlog} shows simulation results in dimensions $d = 2$ and $d = 5$ on top and bottom rows, respectively. 
\begin{itemize}
\item 
	On the left, the evaluation of the bootstrap accuracy is carried out as in Figure~\ref{fig: ParamFamGausAMH}(a). 
	Samples are generated from a distribution in the model with Gumbel copula parameter $\psi = 5/3$.
	The results support the conjecture that the Wasserstein-based tests with critical values calculated by the parametric bootstrap have the correct size, at least asymptotically.
\item 
	On the right, the power is calculated against alternative distributions whose margins are Generalized Extreme-Value (GEV) distributions with common shape parameter $\xi$ indicated on the horizontal axis.
	Note that $\xi = 0$ corresponds to the null hypothesis. 
	For $\xi > 0$, the distribution has finite moments up to order $p < 1/\xi$ only.
	This explains perhaps why the power of the Wasserstein test for $p = 2$ is less than for $p = 1$.
\end{itemize}

\begin{figure}
\begin{center}
\begin{tabular}{@{}c@{}c}
\includegraphics[width=0.5\textwidth, height=75mm]{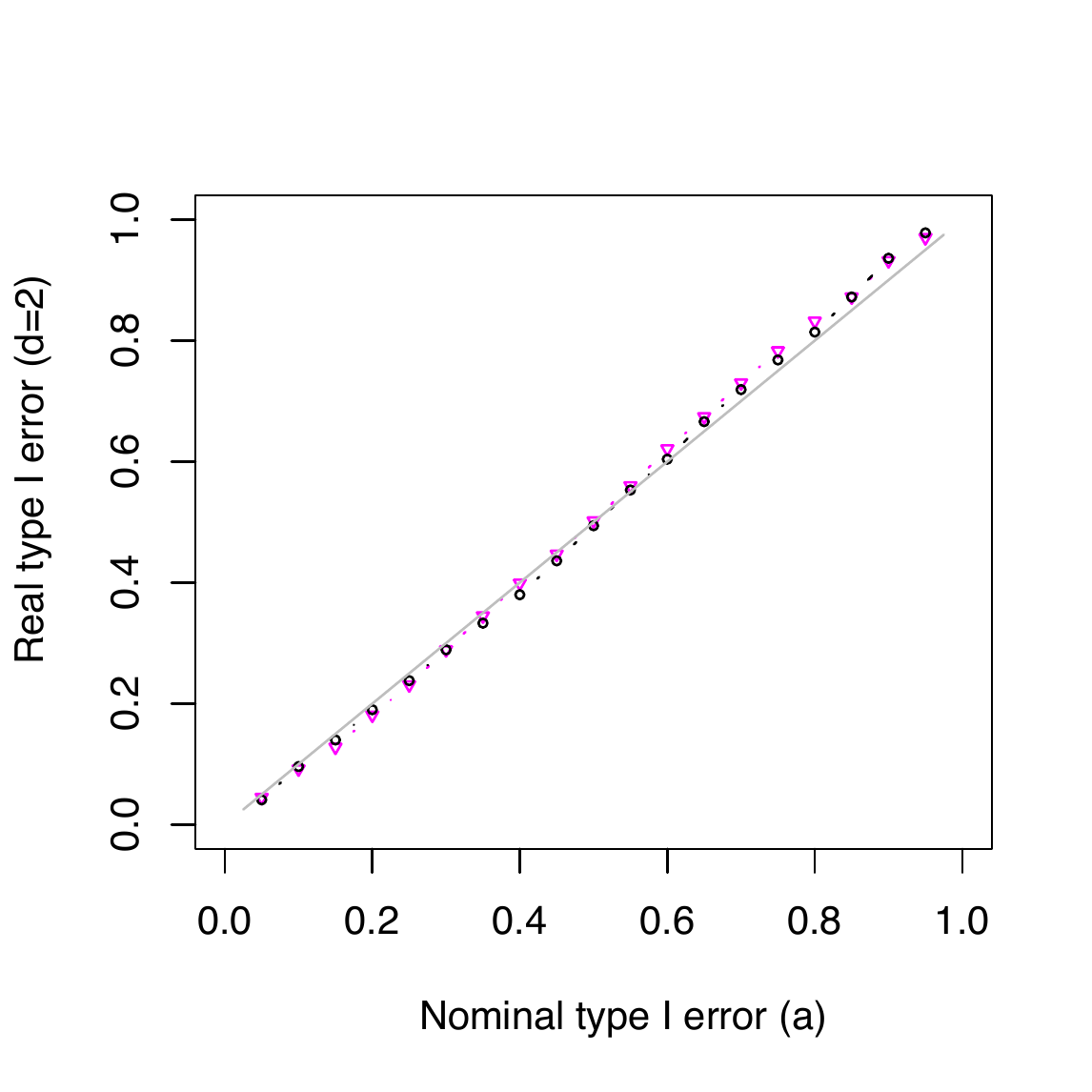}&
\includegraphics[width=0.5\textwidth, height=75mm]{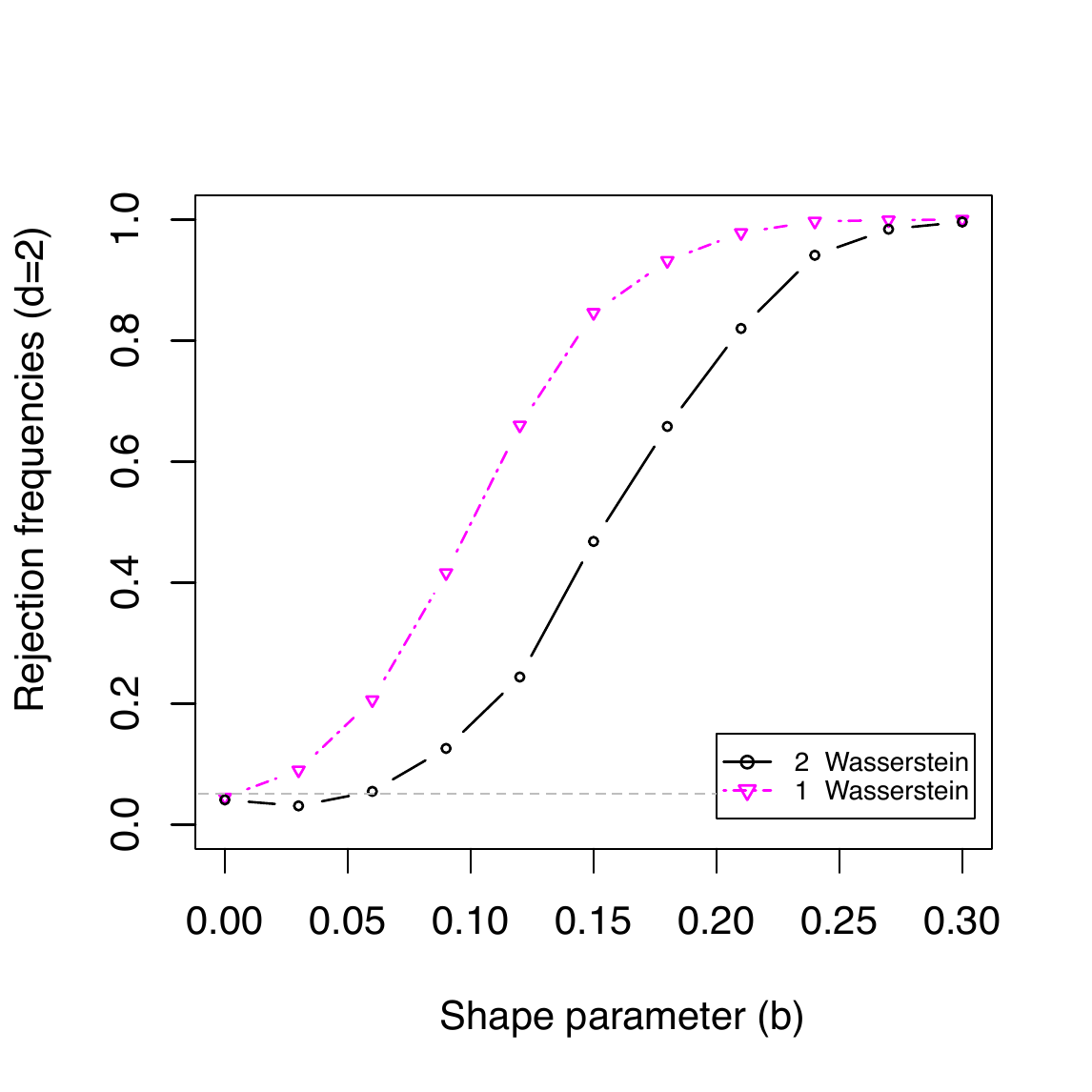}\\
\includegraphics[width=0.5\textwidth, height=75mm]{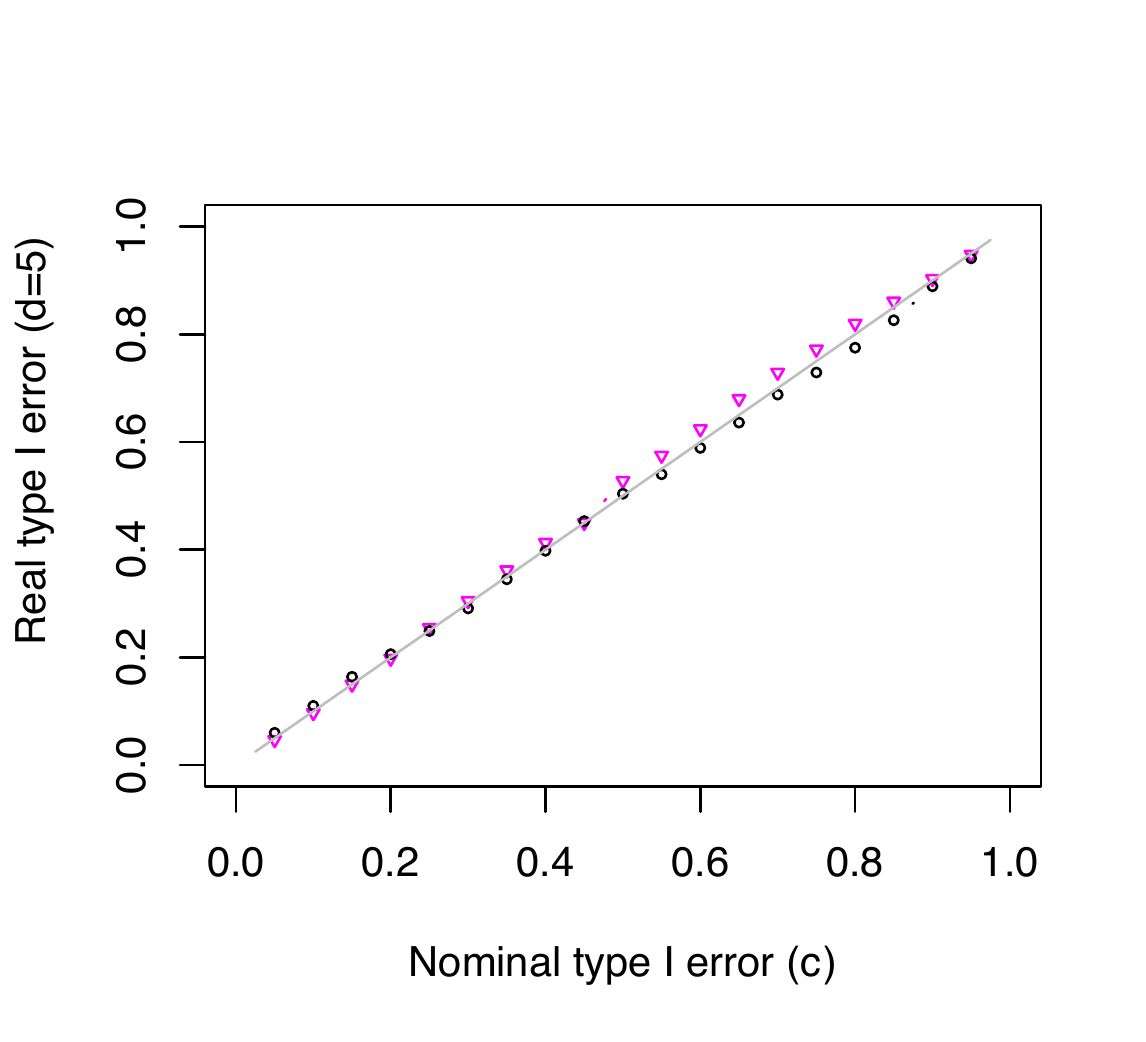}&
\includegraphics[width=0.5\textwidth, height=75mm]{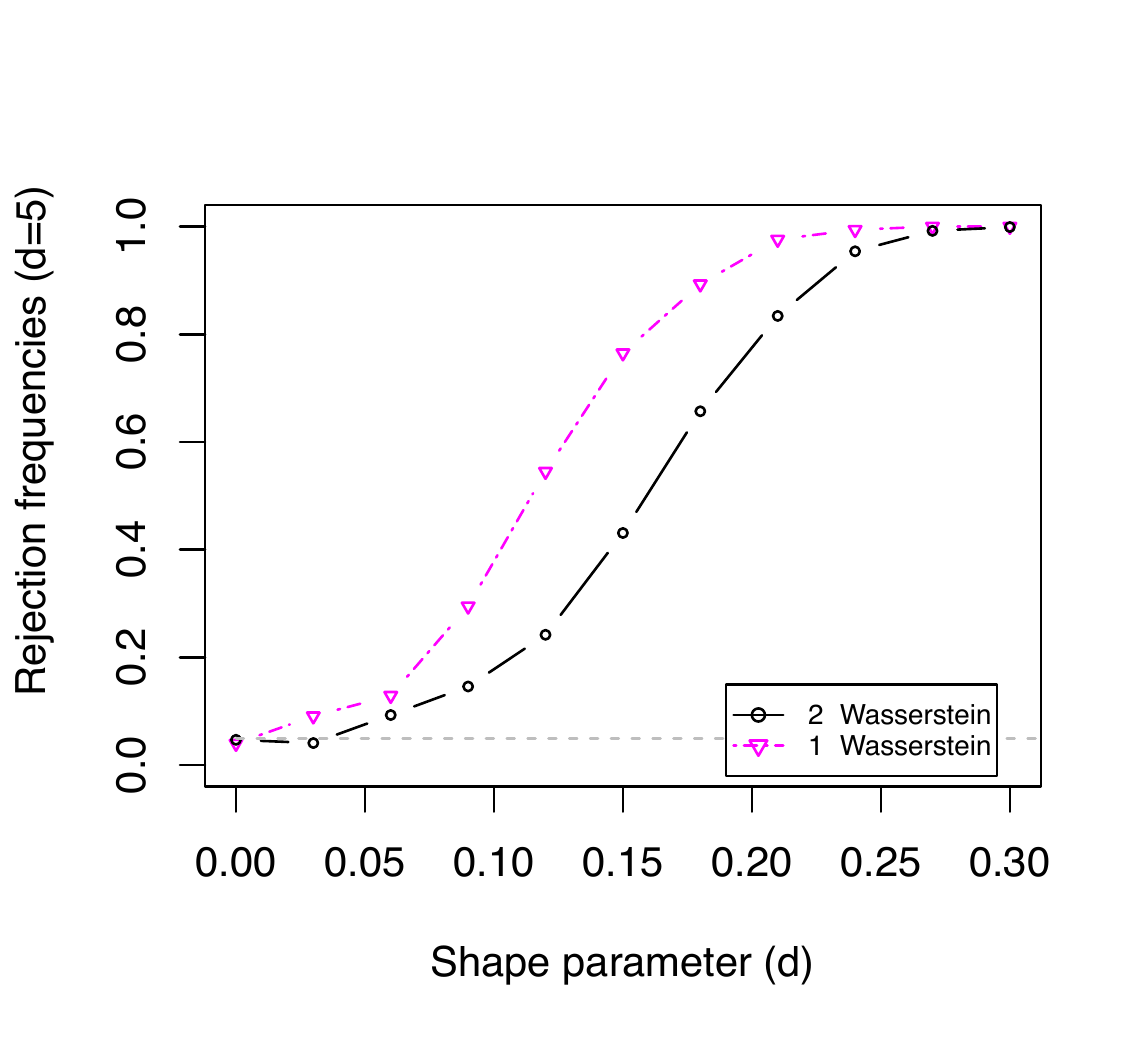}
\end{tabular}
\end{center}
\caption{\label{fig: ParamFamEVDlog} \slshape\small Wasserstein test for $\mathcal{H}_0^n : \prob \in \model$ with $\model$ the family of $d$-variate distributions with Gumbel margins with unknown location--scale parameters and Gumbel copula with unknown shape parameter $\psi \in [1, \infty)$ (Section~\ref{sec:simu:param:maxstab}). 
Test statistic and critical values based on estimated residuals and parametric bootstrap as in Section~\ref{sec:param:group}. 
Top: $d = 2$. Bottom: $d = 5$. 
Left: real versus nominal type~I errors $\alpha$ based on $1\,000$ samples of size $n = 200$ drawn from $\prob \in \model$ with Gumbel copula shape parameter $\psi = 5/3$.
Right: power against alternatives $\prob$ with Gumbel copula and GEV marginals (shape parameter on the horizontal axis).}
\end{figure}

\section*{Acknowledgements}

The authors gratefully acknowledge the remarks and comments made by the reviewers that greatly helped improve the paper. J. Segers gratefully acknowledges funding from FNRS-F.R.S.\ grant CDR~J.0146.19.

\bibliographystyle{imsart-nameyear}
\bibliography{OptimalMatching}

\appendix

\section{Uniform convergence of the empirical Wasserstein distance}
\label{app:empWassUnif}

We establish here the convergence to zero in probability, uniformly in the underlying distribution $\prob \in \model$, of  the empirical Wasserstein distance~$W_p(\emprob, \prob)$ when~$\model \subseteq \Prob_p(\Rd)$ has a compact $W_p$-closure. The result is thus a law of large numbers for the empirical distribution in Wasserstein distance uniformly in the underlying distribution akin to Chung's uniform law of large numbers \citep[Proposition~A.5.1]{vdvw96}.

Actually, Theorem~\ref{thm:empWassUnif} establishes the stronger result that the convergence to zero holds uniformly in the $p$-th mean. The Markov inequality then implies (Corollary~\ref{cor:empWassUnif}) the desired uniform convergence in probability. The  notation is that of~Section~\ref{sec:asy}, with~$\expec_{\prob}$ standing for expectation under an independent random sample from $\prob$.

\begin{thm}
	\label{thm:empWassUnif}
	Let $\model \subseteq \Prob_p(\Rd)$ be such that
	\begin{equation}
	\label{eq:ui}
	\lim_{r \to \infty} \sup_{\prob \in \model}
	\int_{\norm{x} > r} \norm{x}^p \, \diff \prob(x) = 0.
	\end{equation}
	Then we have
	\[
	\lim_{n \to \infty} \sup_{\prob \in \model}
	\expec_{\prob}\big\{W_p^p(\emprob, \prob)\big\} = 0.
	\]
\end{thm}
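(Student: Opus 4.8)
The plan is to combine a radial truncation of the underlying measure to a large ball with a cube-partitioning estimate that is \emph{uniform} over all probability measures supported on that ball. Fix $\eps > 0$ throughout; I will exhibit $n_0$ with $\sup_{\prob \in \model} \expec_{\prob}\{W_p^p(\emprob, \prob)\} \le (2 \cdot 3^{p-1} + 2)\eps$ for all $n \ge n_0$, which suffices since $\eps$ is arbitrary.

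First I would introduce, for $r > 0$, the truncation map $T_r : \Rd \to \Rd$ with $T_r(x) = x$ if $\norm{x} \le r$ and $T_r(x) = rx/\norm{x}$ otherwise, and put $B_r = \{x \in \Rd : \norm{x} \le r\}$. Given an i.i.d.\ sample $X_1, \ldots, X_n$ from $\prob$, write $Y_i = T_r(X_i)$, an i.i.d.\ sample from $\mu_r := (T_r)_\#\prob$, which is supported on $B_r$, with empirical measure $L_n^Y = n^{-1}\sum_{i=1}^n \delta_{Y_i}$. Since $\norm{x - T_r(x)} = (\norm{x} - r)\,\1\{\norm{x} > r\} \le \norm{x}\,\1\{\norm{x} > r\}$, the couplings $(X, T_r(X))$ and $n^{-1}\sum_i \delta_{(X_i, Y_i)}$ give
\[
	W_p^p(\prob, \mu_r) \le \int_{\norm{x} > r} \norm{x}^p \, \diff\prob(x),
	\qquad
	\expec_{\prob}\{W_p^p(\emprob, L_n^Y)\} \le \int_{\norm{x} > r} \norm{x}^p \, \diff\prob(x).
\]
The triangle inequality for $W_p$ and the bound $(a + b + c)^p \le 3^{p-1}(a^p + b^p + c^p)$, valid for $p \ge 1$, then yield
\[
	\expec_{\prob}\{W_p^p(\emprob, \prob)\}
	\le 3^{p-1}\Bigl( 2\int_{\norm{x} > r} \norm{x}^p \, \diff\prob(x) + \expec_{\prob}\{W_p^p(L_n^Y, \mu_r)\} \Bigr).
\]
By \eqref{eq:ui} there is $r = r(\eps)$ making the first term at most $2 \cdot 3^{p-1}\eps$ for every $\prob \in \model$ simultaneously, so it remains to bound $\expec\{W_p^p(L_n^Y, \mu_r)\}$ uniformly over all probability measures on $B_r$.

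For that middle term I would partition $\Rd$ into the half-open cubes of side $\delta > 0$ of the grid $\delta\mathbb{Z}^d$ and retain the finitely many, say $Q_1, \ldots, Q_N$ with $N = N(\delta, r) < \infty$, that meet $B_r$. For any probability measure $\mu$ on $B_r$ and the empirical measure $L_n$ of an i.i.d.\ sample from $\mu$, one can build a coupling by first transporting, inside each cell $Q_j$, the shared mass $\min\{\mu(Q_j), L_n(Q_j)\}$ at a cost of at most $(\sqrt{d}\,\delta)^p$ per unit, and then moving the leftover mass $m := \tfrac12\sum_{j=1}^N \abs{L_n(Q_j) - \mu(Q_j)}$ arbitrarily within $B_r$ at a cost of at most $(2r)^p$ per unit; this gives $W_p^p(L_n, \mu) \le (\sqrt{d}\,\delta)^p + (2r)^p m$. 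Because $nL_n(Q_j)$ is binomial with mean $n\mu(Q_j)$, we have $\expec\abs{L_n(Q_j) - \mu(Q_j)} \le \sqrt{\mu(Q_j)/n}$, whence the Cauchy--Schwarz inequality gives $\expec[m] \le \tfrac12\sqrt{N/n}$. Thus, uniformly over all probability measures $\mu$ on $B_r$,
\[
	\expec\{W_p^p(L_n, \mu)\} \le (\sqrt{d}\,\delta)^p + (2r)^p \frac{\sqrt{N(\delta, r)}}{2\sqrt{n}}.
\]

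To conclude, with $r = r(\eps)$ frozen I would choose $\delta = \delta(\eps, r)$ so small that $3^{p-1}(\sqrt{d}\,\delta)^p \le \eps$, and then, with $r$ and $\delta$ (hence $N(\delta, r)$) now fixed, pick $n_0$ so large that $3^{p-1}(2r)^p\sqrt{N(\delta,r)}/(2\sqrt{n}) \le \eps$ for $n \ge n_0$; adding the three bounds finishes the proof. The one genuinely delicate point is the uniformity in the middle step: the partitioning estimate has to hold for \emph{every} measure on $B_r$ with the same constants, which is precisely why one uses the crude per-unit-mass transport costs rather than sharp convergence rates, and the growth of $N(\delta, r)$ in $\delta$ does no harm because $\delta$, and therefore $N$, is fixed before $n \to \infty$. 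This truncation-plus-partition route also explains structurally why \eqref{eq:ui} is the natural hypothesis: it is exactly what makes the tail term uniformly small, and (via the Markov inequality for tightness) it is equivalent to $\model$ having $W_p$-compact closure.
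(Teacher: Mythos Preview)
Your proof is correct and takes a genuinely different route from the paper's. The paper proceeds by \emph{smoothing}: it convolves both $\emprob$ and $\prob$ with the uniform distribution on a small ball of radius $\sigma$, uses the triangle inequality to reduce to $W_p^p(\emprob\ast\probU_\sigma,\prob\ast\probU_\sigma)$, and then invokes Villani's weighted total variation bound $W_p^p(\mu,\nu)\le 2^{p-1}\int\norm{x}^p\,\diff|\mu-\nu|$ for absolutely continuous measures; the resulting integral is split into $\{\norm{x}\le r\}$ (controlled by a Cauchy--Schwarz variance bound, uniformly in $\prob$) and $\{\norm{x}>r\}$ (controlled by \eqref{eq:ui}), and finally $\sigma\to 0$. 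You instead \emph{truncate} radially to $B_r$ and handle the compactly supported piece by an explicit cube-partitioning coupling with a binomial variance estimate. Your argument is more elementary and self-contained: it avoids the weighted total variation inequality and the smoothing trick needed to apply it, and it makes transparent why the bound is uniform over \emph{all} measures on $B_r$ (the only data entering are the cell counts, whose fluctuations are controlled by $\sqrt{N/n}$ regardless of $\mu$). The paper's route, on the other hand, ties the argument to a general analytic inequality and would adapt more readily to settings where such density-level tools are already in hand. Both proofs hinge on the same structural point you identify at the end: \eqref{eq:ui} is exactly what makes the tail uniformly negligible, after which one only needs a bound on the compact core that is uniform over all probability measures there.
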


%
The condition on $\model$ is equivalent to assuming that the closure of $\model$ in the metric space $(\Prob_p(\Rd), W_p)$ is compact. This follows from Prohorov's theorem and the characterization of $W_p$-convergence in \citet[Lemma~8.3]{bickel+f:1981} or \citet[Theorem~6.9]{villani2008optimal}. 

The convergence rate of $\expec_{\prob}\{W_p^p(\emprob, \prob)\}$ has been studied intensively. 
In \citet[Theorem~1]{fournier2015rate}, for instance, the expectation is bounded by an explicit expression involving $n,p,d$, and the moment of order $q$ of $\prob$ for some $q > p$. 
Bounds on such moments for all $\prob \in \model$ then imply a uniform rate of convergence in $\prob \in \model$.
In contrast, we do not impose the existence of moments of order $q$ higher than $p$, but only the uniform integrability of the $p$-th order moment.

The challenge in the proof of Theorem~\ref{thm:empWassUnif} is to obtain a sufficiently sharp and explicit bound on $\expec_{\prob}\{W_p^p(\emprob, \prob)\}$. Such a bound is known for absolutely continuous measures in terms of a weighted total variation distance \citep[Theorem~6.15]{villani2008optimal}. To apply that bound, an additional smoothing step is needed, and the whole procedure needs to work uniformly in the underlying probability measure, relying only on the uniform integrability condition \eqref{eq:ui}.

\begin{proof}[Proof of Theorem~\ref{thm:empWassUnif}]
	The following smoothing argument is inspired by the proof of Theorem~1.1 in \citet{horowitz1994mean}. Let $\probU_\sigma$ denote the Lebesgue-uniform distribution on the ball $\{x \in \Rd : \norm{x} \le \sigma\}$ in $\Rd$ with radius $\sigma \in (0, \infty)$ and centered at the origin. Denoting by $\ast$ the convolution of probability measures, we have, for any~$\probQ \in \Prob_p(\Rd)$, 
		\[
	W_p(\probQ \ast \probU_\sigma, \probQ) \le \sigma.
	\]
	Indeed, if $X$ and $Y$ are independent random vectors with distributions $\probQ$ and~$\probU_\sigma$, respectively, then $(X+Y, X)$ is a coupling of $\probQ \ast \probU_\sigma$ and $\probQ$, so that 
	\[
	W_p^p(\probQ \ast \probU_\sigma, \probQ) \le \expec[\norm{Y}^p] \le \sigma^p.
	\] By the triangle inequality, it follows that
	\[
	W_p(\emprob, \prob)
	\le 2\sigma + W_p(\emprob \ast \probU_\sigma, \prob \ast \probU_\sigma).
	\]
	Taking expectations and using the elementary inequality 
	$$(a + b)^p \le 2^{p-1}(a^p + b^p)\quad \text{ for } p \geq 1,\  a\geq 0,\ \text{ and } b\geq 0,
	$$
	 we obtain 
	\[
	\expec_{\prob}\big\{W_p^p(\emprob, \prob)\big\}
	\le 2^{p-1} \big[2^p \sigma^p + \expec\big\{W_p^p(\emprob \ast \probU_\sigma, \prob \ast \probU_\sigma)\big\}\big].
	\]
	If we can show that
	\begin{equation}
	\label{eq:empWassUnif:toshow}
	\forall \sigma > 0, \qquad
	\lim_{n \to \infty} \sup_{\prob \in \model}
	\expec\big\{W_p^p(\emprob \ast \probU_\sigma, \prob \ast \probU_\sigma)\big\} = 0,
	\end{equation}
	then it will follow that
	\[
	\forall \sigma > 0, \qquad
	\limsup_{n \to \infty} \sup_{\prob \in \model}
	\expec_{\prob}\big\{W_p^p(\emprob, \prob)\big\}
	\le 2^{2p-1} \sigma^p.
	\]
	But then the $\limsup$ is actually a limit and is equal to zero, as required.
	
	Let us proceed to show \eqref{eq:empWassUnif:toshow}. Fix $\sigma > 0$ for the remainder of the proof. Let~$f_\sigma$ denote the density function of $\probU_\sigma$. The measures $\emprob \ast \probU_\sigma$ and $\prob \ast \probU_\sigma$ are absolutely continuous too and have density functions $x \mapsto n^{-1} \sum_{i=1}^n f_\sigma(x - X_i)$ and $x \mapsto \int_{\Rd} f_\sigma(x - y) \diff \prob(y)$, respectively.
	The Wasserstein distance can be controlled by weighted total variation \citep[Theorem~6.15]{villani2008optimal}:
	\begin{align*}
	\lefteqn{
	W_p^p(\emprob \ast \probU_\sigma, \prob \ast \probU_\sigma)
	} \\
	&\le 2^{p-1} \int_{\Rd} \norm{x}^p \, \diff \lvert \emprob \ast \probU_\sigma - \prob \ast \probU_\sigma \rvert (x) \\
	&= 2^{p-1} \int_{\Rd} \norm{x}^p \, 
	\left|
	\frac{1}{n} \sum_{i=1}^n f_\sigma(x - X_i)
	- \int_{\Rd} f_\sigma(x - y) \, \diff \prob(y)
	\right| \,
	\diff x.
	\end{align*}
	Take expectations and apply Fubini's theorem to see that
	\begin{equation}
	\label{eq:empWassUnif:integral}
	\expec_\prob\big\{
	W_p^p(\emprob \ast \probU_\sigma, 
	\prob \ast \probU_\sigma)
	\big\}
	\le 2^{p-1} 
	\int_{\Rd} \norm{x}^p g_{n}(x; \prob) \, \diff x
	\end{equation}
	where
	\[
	g_{n}(x; \prob) =
	\expec_{\prob} \left[
	\left|
	\frac{1}{n} \sum_{i=1}^n f_\sigma(x - X_i)
	- \int_{\Rd} f_\sigma(x - y) \, \diff \prob(y)
	\right|			
	\right].
	\]
	
	Let $r > \sigma$ and split the integral in \eqref{eq:empWassUnif:integral} according to whether $\norm{x} > r$ or~$\norm{x} \le r$. Note that $f_\sigma(u) = f_{\sigma}(0)$ if $\norm{y} \le \sigma$ and $f_\sigma(u) = 0$ otherwise. For any $\prob \in \Prob(\Rd)$ and any $x \in \Rd$, we have, by the Cauchy--Schwarz inequality,
	\[
	g_{n}(x; \prob) \le n^{-1/2} f_{\sigma}(0).
	\]
	It follows that
	\[
	\lim_{n \to \infty}
	\sup_{\prob \in \Prob(\Rd)}
	\int_{\norm{x} \le r} \norm{x}^p g_{n}(x;\prob) \, \diff x
	= 0.
	\]
	But then, in view of \eqref{eq:empWassUnif:integral}, we have
	\[
	\limsup_{n \to \infty} \sup_{\prob \in \model}
	\expec_{\prob}\big\{W_p^p(\emprob \ast \probU_\sigma, \prob \ast \probU_\sigma)\big\} \\
	\le \limsup_{n \to \infty} \sup_{\prob \in \model}
	2^{p-1} 
	\int_{\norm{x} > r} \norm{x}^p g_{n}(x; \prob) \, \diff x.
	\]
	By the triangle inequality, we also have, for all $n$,
	\[
	g_{n}(x; \prob)
	\le 2 \int_{\Rd} f_\sigma(x - y) \diff \prob(y).
	\]
	Applying Fubini's theorem once more, we obtain
	\begin{align*}
	\int_{\norm{x} > r} \norm{x}^p g_{n}(x; \prob) \, \diff x
	&\le 2 \int_{\norm{x} > r} \norm{x}^p \int_{y \in \Rd} f_\sigma(x - y) \, \diff \prob(y) \, \diff x \\
	&= 2 \int_{y \in \Rd} \int_{\norm{x} > r} \norm{x}^p f_\sigma(x-y) \, \diff x \, \diff \prob(y) \\
	&= 2 \int_{y \in \Rd} \int_{\norm{u+y} > r} \norm{u+y}^p f_\sigma(u) \, \diff u \, \diff \prob(y).
	\end{align*}
	Since $f_\sigma(u) = 0$ whenever $\norm{u} > \sigma$ and since $r > \sigma$, we have
	\[
	\int_{\norm{u+y} > r} \norm{u+y}^p f_{\sigma}(u) \, \diff u
	\le
	\begin{cases}
	2^{p-1}(\sigma^p + \norm{y}^p) & \text{if $\norm{y} > r-\sigma$}, \\
	0 & \text{otherwise}.
	\end{cases}
	\]
	Choosing $r > 2 \sigma$, we get that $\norm{y} > \sigma$ for all $y$ in the non-zero branch above, and thus, for all $n$,
	\[
	\int_{\norm{x} > r} \norm{x}^p g_{n}(x; \prob) \, \diff x
	\le 2^{p+1} \int_{\norm{y} > r-\sigma} \norm{y}^p \, \diff \prob(y).
	\]
	It follows that, for every $r > \sigma$, 
	\[
	\limsup_{n \to \infty} \sup_{\prob \in \model}
	\expec_{\prob}\big\{
	W_p^p(\emprob \ast \probU_\sigma, \prob \ast \probU_\sigma)
	\big\}
	\le 2^{2p} \sup_{\prob \in \model} 
	\int_{\norm{y} > r - \sigma} \norm{y}^p \, \diff \prob(y).
	\]
	The left-hand side does not depend on $r$. The condition on $\model$ implies that the right-hand side converges to zero as $r \to \infty$. It follows that the left-hand side must be equal to zero. But this is exactly \eqref{eq:empWassUnif:toshow}, as required. The proof is complete.
\end{proof}

\begin{cor}
	\label{cor:empWassUnif}
	For $\model$ as in Theorem~\ref{thm:empWassUnif}, we have
	\[
	\forall \eps > 0, \qquad
	\lim_{n \to \infty} \sup_{\prob \in \model}
	\prob^n\big[W_p^p(\emprob, \prob) > \eps\big] = 0,
	\]
	i.e.,   $W_p^p(\emprob, \prob) \to 0$ in probability as $n \to \infty$, uniformly in $\prob \in \model$.
\end{cor}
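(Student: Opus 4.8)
The plan is to deduce the corollary from Theorem~\ref{thm:empWassUnif} by a single application of Markov's inequality, applied uniformly in $\prob$. Fix $\eps > 0$. For each $\prob \in \model$ and each $n \ge 1$, the random variable $W_p^p(\emprob, \prob)$ is nonnegative, so Markov's inequality under the product measure $\prob^n$ gives
\[
	\prob^n\big[ W_p^p(\emprob, \prob) > \eps \big]
	\le \frac{1}{\eps} \, \expec_{\prob}\big\{ W_p^p(\emprob, \prob) \big\}.
\]
This bound is valid for every $\prob$ individually, with the same constant $1/\eps$.

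Next I would take the supremum over $\prob \in \model$ on both sides. Since the right-hand side is bounded by $\eps^{-1} \sup_{\prob \in \model} \expec_{\prob}\{ W_p^p(\emprob, \prob) \}$ for every $\prob$, one obtains
\[
	\sup_{\prob \in \model} \prob^n\big[ W_p^p(\emprob, \prob) > \eps \big]
	\le \frac{1}{\eps} \sup_{\prob \in \model} \expec_{\prob}\big\{ W_p^p(\emprob, \prob) \big\}.
\]
Letting $n \to \infty$ and invoking Theorem~\ref{thm:empWassUnif}, which asserts precisely that the right-hand side tends to zero, we conclude that $\lim_{n \to \infty} \sup_{\prob \in \model} \prob^n[ W_p^p(\emprob, \prob) > \eps ] = 0$. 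As $\eps > 0$ was arbitrary, this is the claimed uniform convergence in probability.

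There is essentially no obstacle here: the corollary is a purely formal consequence of the theorem, the only point to keep track of being that the Markov bound holds with a constant independent of $\prob$, so that taking the supremum commutes harmlessly with the inequality. All the substantive work — the smoothing argument, the weighted total variation bound, and the use of the uniform integrability condition \eqref{eq:ui} — has already been carried out in the proof of Theorem~\ref{thm:empWassUnif}.
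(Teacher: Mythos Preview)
Your proof is correct and essentially identical to the paper's own argument: apply Markov's inequality to the nonnegative random variable $W_p^p(\emprob,\prob)$, take the supremum over $\prob \in \model$, and invoke Theorem~\ref{thm:empWassUnif}. The only cosmetic difference is that the paper phrases the Markov bound as $\prob^n[W_p(\emprob,\prob) > \eps] \le \eps^{-p}\,\expec_\prob\{W_p^p(\emprob,\prob)\}$, which is equivalent.
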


\begin{proof}
	By Markov's inequality, for every $\eps > 0$ and every $\prob \in \Prob_p(\Rd)$, we have 
	\[ 
	\prob^n\big[W_p(\emprob, \prob) > \eps\big] 
	\le \eps^{-p} 
	\expec_{\prob}\big\{W_p^p(\emprob, \prob)\big\}.
	\]
	In view of Theorem~\ref{thm:empWassUnif}, the expectation converges to zero uniformly in $\prob \in \model$.
\end{proof}

For a single $\prob \in \Prob_p(\Rd)$, Lemma~8.4 in \citet{bickel+f:1981} says that $W_p(\emprob, \prob) \to 0$ almost surely as $n \to \infty$. 
Whether Corollary~\ref{cor:empWassUnif} can be strengthened to almost sure convergence uniformly in $\prob$, i.e., whether
\[
	\forall \eps > 0, \qquad
	\lim_{n \to \infty} \sup_{\prob \in \model}
	\prob^n\left[ \sup_{m \ge n} W_p^p(\widehat{\prob}_m, \prob) > \eps \right]
	= 0,
\]
remains an open problem.

\bgroup

\section{Consistency of the parametric bootstrap in the univariate case}
\label{sec:boot:cons}

In Section~\ref{sec:param}, we left open the conjecture of the consistency of the parametric bootstrap procedure for the Wasserstein GoF test in general parametric families.
Proving it requires asymptotic distribution theory for the empirical Wasserstein distance, which, in general, is a difficult and long-standing open problem (Section~\ref{sec:asy}).
In the univariate case, however, the large-sample distribution of the empirical Wasserstein distance is known, enabling a theoretical analysis.

Consider the same notation as in Section~\ref{sec:param} and assume $d = 1$. In the univariate case, the Wasserstein distance can be expressed as the $L_p$ distance between quantile functions, see \citet[Section~1.2.3]{panaretos2019statistical} and the references therein.
Let $\hat{F}_n^{-1}$ denote the empirical quantile function of the sample $X_1,\ldots,X_n$ and let $F_\theta^{-1}$ denote the quantile function of $\prob_\theta$, defined as the (generalized) inverse of the cumulative distribution function $F_\theta$ of $\prob_\theta$.
The normalized Wasserstein GoF test statistic in Section~\ref{sec:param} takes the form
\begin{equation}
\label{eq:Rn}
	R_n 
	:= n^{p/2} T_{\model,n} 
	= n^{p/2} \, W_p^p \bigl( \emprob, \prob_{\hat\theta_n} \bigr)
	= \int_0^1 \abs{\zeta_n(u)}^p \, \diff u,
\end{equation}
where $\zeta_n$ is the empirical quantile process at the estimated parameter:
\begin{equation}
\label{eq:zetan}
	\zeta_n(u) = \sqrt{n} \bigl\{ 
		\hat{F}_n^{-1}(u) - F_{\hat{\theta}_n}^{-1}(u)
	\bigr\},
	\qquad u \in (0, 1).
\end{equation}

We follow the notation and logic of \citet{beran1997}.
Let $H_n(\theta)$ denote the sampling distribution of $R_n$ under $\prob_\theta^n$.
We would like to use $H_n(\theta)$ to draw inference based on the observed value of $R_n$, for instance by comparing the latter to a critical value computed under $H_n(\theta)$.
Since we do not know $\theta$, we do not know $H_n(\theta)$ either. 
The parametric bootstrap consists of estimating the unknown sampling distribution of $R_n$ by the random probability measure $H_n(\hat{\theta}_n)$, the sampling distribution of the statistic $R_n$ under the estimated parameter.
In practice, we calculate relevant quantities related to $H_n(\hat{\theta}_n)$ such as critical values or p-values by Monte Carlo simulation, drawing many bootstrap samples $X_1^*,\ldots,X_n^*$ from $\prob_{\hat{\theta}_n}$ and calculating the test statistic $R_n^*$ from those.

The question is whether inference drawn from $R_n$ based on $H_n(\hat{\theta}_n)$ rather than on $H_n(\theta)$ is still valid, at least asymptotically.
Let $H(\theta)$ denote the limit distribution of $R_n$ under $\prob_\theta^n$ as $n \to \infty$, assuming it exists.
If $H_n(\theta_n)$ converges weakly to the same limit $H(\theta)$ for any sequence $\theta_n \in \Theta$ such that 
\begin{equation} 
\label{eq:thetan}
	\sqrt{n}(\theta_n - \theta) = \Oh(1), \qquad n \to \infty, 
\end{equation}
then it follows that in $\prob_\theta^n$-probability, the estimated sampling distribution $H_n(\hat{\theta}_n)$ converges weakly to $H(\theta)$ for all estimator sequences $\hat{\theta}_n$ such that $\sqrt{n} (\hat{\theta}_n - \theta) = \Oh_{\prob_\theta^n}(1)$ as $n \to \infty$.
[The estimator $\hat{\theta}_n$ in the definition of $R_n$ need not even be the same as the one under which we calculate the sampling distribution $H_n(\hat{\theta}_n)$, but for simplicity, we assume it is.]

The normalized test statistic $R_n$ in Eq.~\eqref{eq:Rn} is a functional of the empirical quantile process $\zeta_n$ in Eq.~\eqref{eq:zetan}.
In Proposition~\ref{prop:boot:cons} below, we will show that the weak limits as $n \to \infty$ of the finite-dimensional distributions of $\zeta_n$ under $\prob_{\theta_n}^n$ for $\theta_n \to \theta \in \Theta$ do not depend on the particular sequence $\theta_n$ as long as Eq.~\eqref{eq:thetan} and certain assumptions on the model and the estimator sequence are fulfilled.

\begin{assumption}
	\label{ass:boot:cons}
	$\model = \{ \prob_\theta : \theta \in \Theta \} \subset \Prob(\reals)$ is a parametric model and $\hat{\theta}_n$ is an estimator sequence satisfying the following properties:
\begin{enumerate}[({A}1)]
	\item $\Theta$ is an open subset of $\reals^k$.
	\item $P_\theta$ has density $f_\theta$ with respect to one-dimensional Lebesgue measure for every $\theta \in \Theta$.
	\item $\model$ is differentiable in quadratic mean at any $\theta \in \Theta$ with score function $\dot{\ell}_\theta = \nabla_\theta \log f_\theta(x) : \reals \to \reals^k$ and non-singular $k \times k$ Fisher information matrix $\mathcal{I}_\theta = \expec_\theta[\dot{\ell}_\theta(X) \dot{\ell}_\theta(X)^\top]$.
	\item The estimator sequence $\hat{\theta}_n$ is regular and asymptotically linear at $\theta \in \Theta$ with influence function $\psi_\theta$.
\end{enumerate}
\end{assumption}

Asymptotic linearity in Assumption~(A4) means that
\begin{equation}
\label{eq:psitheta}
	n^{1/2} (\hat{\theta}_n - \theta) = \frac{1}{\sqrt{n}} \sum_{i=1}^n \psi_\theta(X_i) + \oh_{\prob_\theta^n}(1), \qquad n \to \infty,
\end{equation}
with $\prob_\theta$-square integrable influence function $\psi_\theta : \reals \to \reals^k$ satisfying $\expec_\theta[ \psi_\theta(X) ] = 0$.
Regularity in~(A4) means that the influence function $\psi_\theta$ satisfies
\[
	\psi_\theta - \tilde{\ell}_\theta \perp_\theta \dot{\ell}_\theta
\]
where $\tilde{\ell}_\theta = \mathcal{I}_\theta^{-1} \dot{\ell}_\theta$ is the efficient influence function for estimating $\theta$ and $\perp_\theta$ means orthogonality in $L_2(P_\theta)$; an equivalent criterion is that
\begin{equation}
\label{eq:AEL}
	\expec_\theta[ \psi_\theta(X) \, \dot{\ell}_\theta(X)^\top] = I_k,
\end{equation}
the $k \times k$ identity matrix.
We refer to \citet[Chapters~7--8]{vdv98} and \citet[Chapter~2]{bkrw93} for more background on these assumptions, which are standard. 

\begin{prop}
	\label{prop:boot:cons}
	Let $\model = \{ \prob_\theta : \theta \in \Theta \} \subset \Prob(\reals)$ be a parametric model and $\hat{\theta}_n$ an estimator sequence such that Assumption~\ref{ass:boot:cons} is satisfied.
	Suppose that $f_\theta$ is continuous and strictly positive on the interior of the support of $\prob_\theta$ and that, for every fixed $u \in (0, 1)$, the quantile function $F_\theta^{-1}(u)$ is continuously differentiable as a function of $\theta$.
	Then, for all $\theta, \theta_n \in \Theta$ satisfying Eq.~\eqref{eq:thetan} and for every vector $(u_1, \ldots, u_m) \in (0, 1)^m$, the quantile process $\zeta_n$ in Eq.~\eqref{eq:zetan} satisfies
	\begin{equation}
	\label{eq:zetanlim}
		\bigl( \zeta_n(u_j) \bigr)_{j=1}^m \stackrel{\theta_n}{\rightsquigarrow}
		\normal_m(0, \Gamma_\theta), 
		\qquad n \to \infty,
	\end{equation}
	where $\Gamma_\theta$ is a certain $m \times m$ covariance matrix given in Eq.~\eqref{eq:Gamma} below and where the arrow means weak convergence of the law under $\prob_{\theta_n}^n$ of the random vector on the left-hand side to the law of the random vector on the right-hand side.
\end{prop}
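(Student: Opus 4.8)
The plan is to split $\zeta_n(u_j)$ into an empirical-quantile part and a parameter-estimation part, to expand both with the functional delta method, and then to show that the random drift created by the local perturbation $\theta_n \to \theta$ cancels between the two parts, so that the limit is centred and independent of the particular sequence $\theta_n$. First I would reduce to convergent local parameters: since $\sqrt{n}(\theta_n - \theta) = \Oh(1)$ by \eqref{eq:thetan}, every subsequence of $\NN$ has a further subsequence along which $h_n := \sqrt{n}(\theta_n - \theta)$ converges to some $h \in \reals^k$, and by a standard subsequence argument it suffices to prove \eqref{eq:zetanlim} along such a subsequence, provided the limiting covariance turns out not to depend on $h$. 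Fix such a subsequence and $h$, and abbreviate $q_j = F_\theta^{-1}(u_j)$ and $\dot F_\theta(x) = \nabla_\theta F_\theta(x)$.

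The next step is to set up contiguity. Under differentiability in quadratic mean (A3) and $h_n \to h$, Le Cam's first lemma gives that $\prob_{\theta_n}^n$ and $\prob_\theta^n$ are mutually contiguous; consequently every remainder that is $\oh_{\prob_\theta^n}(1)$ is also $\oh_{\prob_{\theta_n}^n}(1)$, and weak limits under $\prob_{\theta_n}^n$ can be read off from joint weak limits under $\prob_\theta^n$ via Le Cam's third lemma (the only effect being a Gaussian mean shift governed by covariances with the score). I would then write $\zeta_n(u_j) = A_{n,j} - B_{n,j}$ with $A_{n,j} = \sqrt{n}\{\hat F_n^{-1}(u_j) - F_\theta^{-1}(u_j)\}$ and $B_{n,j} = \sqrt{n}\{F_{\hat\theta_n}^{-1}(u_j) - F_\theta^{-1}(u_j)\}$. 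For $A_{n,j}$: since $f_\theta$ is continuous and strictly positive near $q_j$, the map $F \mapsto (F^{-1}(u_j))_{j=1}^m$ is Hadamard differentiable at $F_\theta$ with derivative $\alpha \mapsto (-\alpha(q_j)/f_\theta(q_j))_j$ \citep[Lemma~21.3]{vdv98}; under $\prob_{\theta_n}^n$ one has $\sqrt{n}(\hat F_n - F_\theta) = \sqrt{n}(\hat F_n - F_{\theta_n}) + \sqrt{n}(F_{\theta_n} - F_\theta) \rightsquigarrow \mathbb{G} + \dot F_\theta(\cdot)^{\top} h$, where $\mathbb{G}$ is the $F_\theta$-Brownian bridge (Donsker's theorem, stable under the underlying law tending to $F_\theta$), so the functional delta method yields $A_{n,j} \rightsquigarrow -\{\mathbb{G}(q_j) + \dot F_\theta(q_j)^{\top} h\}/f_\theta(q_j)$ jointly in $j$. (A uniform Bahadur--Kiefer representation over a shrinking neighbourhood of $\theta$, on which the densities stay bounded away from $0$, is an equivalent route.) For $B_{n,j}$: since $\theta \mapsto F_\theta^{-1}(u_j)$ is continuously differentiable and $\sqrt{n}(\hat\theta_n - \theta) = \Oh_{\prob_{\theta_n}^n}(1)$, the ordinary delta method gives $B_{n,j} = \nabla_\theta F_\theta^{-1}(u_j)^{\top}\sqrt{n}(\hat\theta_n - \theta) + \oh_{\prob_{\theta_n}^n}(1)$.

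Then I would assemble the joint limit. By asymptotic linearity \eqref{eq:psitheta}, $\sqrt{n}(\hat\theta_n - \theta) = n^{-1/2}\sum_{i=1}^n \psi_\theta(X_i) + \oh_{\prob_{\theta_n}^n}(1)$, so both $(A_{n,j})_j$ and $\sqrt{n}(\hat\theta_n - \theta)$ are, up to $\oh_P(1)$, smooth functionals of the pair $\bigl(\sqrt{n}(\hat F_n - F_{\theta_n}),\, n^{-1/2}\sum_i \psi_\theta(X_i)\bigr)$, a sum of i.i.d.\ triangular arrays; the multivariate Lindeberg CLT (bounded indicators; $\psi_\theta$ square-integrable, with $\prob_{\theta_n}$-moments converging to the $\prob_\theta$-ones under DQM) together with Donsker's theorem deliver a joint Gaussian limit with cross-covariances evaluated at $\theta$. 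Regularity of $\hat\theta_n$, equivalently \eqref{eq:AEL}, forces the mean of the limit of $\sqrt{n}(\hat\theta_n - \theta)$ under $\prob_{\theta_n}^n$ to be exactly $h$, i.e.\ $\sqrt{n}(\hat\theta_n - \theta) \rightsquigarrow W$ with $\expec[W] = h$. Differentiating $F_\theta(F_\theta^{-1}(u)) \equiv u$ in $\theta$ gives $\nabla_\theta F_\theta^{-1}(u_j) = -\dot F_\theta(q_j)/f_\theta(q_j)$, whence
\[
	\zeta_n(u_j) \rightsquigarrow
	-\frac{\mathbb{G}(q_j)}{f_\theta(q_j)} - \frac{\dot F_\theta(q_j)^{\top} h}{f_\theta(q_j)} + \frac{\dot F_\theta(q_j)^{\top}}{f_\theta(q_j)}\,W
	= -\frac{\mathbb{G}(q_j)}{f_\theta(q_j)} + \frac{\dot F_\theta(q_j)^{\top}}{f_\theta(q_j)}\,(W - h),
\]
jointly in $j$: the $h$-dependent drifts of $A_{n,j}$ and $B_{n,j}$ cancel, and both $\mathbb{G}(q_j)$ and $W - h$ are centred. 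Computing the covariance of the right-hand side from $\operatorname{Cov}(\mathbb{G}(q_j),\mathbb{G}(q_k)) = \min(u_j,u_k) - u_j u_k$, from $\operatorname{Cov}(W) = \expec_\theta[\psi_\theta \psi_\theta^{\top}]$, and from the cross term $\expec_\theta[\1\{X \le q_j\}\psi_\theta(X)]$ produces an $m\times m$ matrix depending only on $\theta$ and $(u_1,\dots,u_m)$ — namely the matrix $\Gamma_\theta$ of the statement. Since it does not depend on $h$, the subsequence argument closes.

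The hard part is the triangular-array structure: the data are drawn from $\prob_{\theta_n}$, not from a fixed law, so Donsker's theorem, the Bahadur--Kiefer (or Hadamard-differentiability) expansion of the quantile map, and the asymptotic linearity of $\hat\theta_n$ must all be invoked in a form that survives $\theta_n \to \theta$; contiguity, together with the continuity and differentiability hypotheses on $f_\theta$ and $F_\theta^{-1}$, is precisely the device that legitimizes this transfer. By contrast, the conceptual core is light: the implicit identity $\nabla_\theta F_\theta^{-1}(u) = -\dot F_\theta/f_\theta$ combined with the regularity of $\hat\theta_n$ (which pins the local drift of $\sqrt{n}(\hat\theta_n - \theta)$ to exactly $h$) makes the two drifts cancel, and this cancellation is exactly the reason the parametric bootstrap stands a chance of being consistent.
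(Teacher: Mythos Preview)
Your proof is correct and uses the same ingredients as the paper: the subsequence reduction to $h_n\to h$, the decomposition $\zeta_n=A_n-B_n$, the functional delta method for the empirical quantile, the asymptotic linearity of $\hat\theta_n$, the implicit-function identity $\nabla_\theta F_\theta^{-1}(u)=-\dot F_\theta(q)/f_\theta(q)$, and regularity of the estimator. The one presentational difference is where Le~Cam's third lemma enters. You work directly under $\prob_{\theta_n}^n$, compute the two drift terms (the $h$-shift in the empirical process and the $h$-shift in $\sqrt{n}(\hat\theta_n-\theta)$ from regularity), and show they cancel. The paper instead works entirely under the fixed $\prob_\theta^n$: it collapses $\zeta_n(u)$ into a single i.i.d.\ sum $-n^{-1/2}\sum_i g_\theta(X_i,u)+\oh_{\prob_\theta^n}(1)$, verifies the orthogonality $\expec_\theta[\dot\ell_\theta(X)\,g_\theta(X,u)]=0$ (using \eqref{eq:AEL} and the identity $\dot F_\theta(q)=\expec_\theta[\dot\ell_\theta(X)\1\{X\le q\}]$), and then invokes Le~Cam's third lemma in its ``zero cross-covariance with the score $\Rightarrow$ no mean shift'' form. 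The paper's route buys you a slightly cleaner argument since all expansions and CLTs are taken under a fixed law, so no triangular-array Donsker theorem or uniform Bahadur--Kiefer bound is needed; your route makes the mechanism of the cancellation more visible. Both are valid and the resulting $\Gamma_\theta$ is the same.
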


The proof of Proposition~\ref{prop:boot:cons} is given below.
Passing from the asymptotics of the finite-dimensional distributions of $\zeta_n$ to those of $R_n$ in Eq.~\eqref{eq:Rn} requires two things: asymptotic tightness of $\zeta_n$ as well as regularity conditions on $\prob_\theta$ controlling the tails of the quantile functions to be integrated. The former is a classical topic in empirical process theory, see for instance Chapters~19 and~21 in \citet{vdv98}. Regarding the latter, see \citet{delbarrio2005} for the case $p = 2$ and \citet{bobkov+l:2019} for general $p \ge 1$.

The important thing in Proposition~\ref{prop:boot:cons} is that the limit~\eqref{eq:zetanlim} does not depend on the sequence $(\theta_n)_n$.
If integration and passage to the limit in Eq.~\eqref{eq:Rn} is permitted, the asymptotic equivariance in law in Eq.~\eqref{eq:zetanlim} for $\zeta_n$ continues to hold for the normalized test statistic $R_n$.
The consistency of the parametric bootstrap for the Wasserstein GoF test then follows as explained in the lines below Eq.~\eqref{eq:thetan}.

\begin{proof}[Proof of Proposition~\ref{prop:boot:cons}]
	By a subsequence argument, we can and will assume that $h_n = \sqrt{n} (\theta_n - \theta) \to h \in \reals^k$ as $n \to \infty$.
	The proof proceeds by Le Cam's third lemma following the strategy in \citet[Section~7.5]{vdv98}.
	
	By Theorem~7.2 in the same reference, Assumption~\ref{ass:boot:cons} implies that the log-likelihood ratio of $\prob_{\theta_n}^n$ with respect to $\prob_\theta^n$ admits the expansion
	\begin{equation}
	\label{eq:LAN}
		\log \prod_{i=1}^n \frac{f_{\theta_n}}{f_\theta}(X_i)
		= \frac{1}{\sqrt{n}} \sum_{i=1}^n h^\top \dot{\ell}_\theta(X_i)
		- \frac{1}{2} h^\top \mathcal{I}_\theta h + \oh_{\prob_\theta^n}(1),
		\qquad n \to \infty,
	\end{equation}
	with $n^{-1/2} \sum_{i=1}^n \dot{\ell}_\theta(X_i)$ asymptotically $\normal_k(0, \mathcal{I}_\theta)$ under $\prob_\theta^n$ as $n \to \infty$.
	This means that the sequence of statistical experiments $\{ \prob_\theta^n : \theta \in \Theta \}$ is locally asymptotically normal.
	
	To show Eq.~\eqref{eq:zetanlim}, we need to find the joint limit distribution under $\theta$ of the finite-dimensional distributions of $\zeta_n$ together with the log-likelihood ratio in Eq.~\eqref{eq:LAN}.
	If the joint limit is Gaussian and if the asymptotic cross-covariance with the term corresponding to the log-likelihood ratio is zero, then by Le Cam's third lemma \citep[Example~6.7]{vdv98}, the asymptotic distribution of $\zeta_n$ under $\theta_n$ is the same as under $\theta$, as required.
	 
	Fix $0 < u < 1$. We derive an asymptotically linear expansion of $\zeta_n(u)$.
	By the delta method \citep[Theorem~3.1]{vdv98} and the asymptotic linearity of $\hat{\theta}_n$ in Eq.~\eqref{eq:psitheta}, we have
	\begin{align*}
		\sqrt{n} \bigl\{ F_{\hat{\theta}_n}^{-1}(u) - F_\theta^{-1}(u) \bigr\}
		&=
		\sqrt{n} ( \hat{\theta}_n - \theta )^\top \, 
		\nabla_\theta F_\theta^{-1}(u)
		+ \oh_{\prob_\theta^n}(1) \\
		&=
		\frac{1}{\sqrt{n}} \sum_{i=1}^n \psi_\theta(X_i)^\top \, \nabla_\theta F_\theta^{-1}(u) + \oh_{\prob_\theta^n}(1), \qquad n \to \infty.
	\end{align*}
	Further, by the functional delta method, the empirical quantile function satisfies
	\[
		\sqrt{n} \bigl\{ \hat{F}_n^{-1}(u) - F_\theta^{-1}(u) \}
		=
		- \frac{1}{\sqrt{n}} \sum_{i=1}^n 
		\frac{\1\{X_i \le F_\theta^{-1}(u)\} - u}{f_\theta(F_\theta^{-1}(u))} + \oh_{\prob_\theta^n}(1), \quad n \to \infty,
	\]
	see \citet[Corollary~21.5]{vdv98}. Subtract both expansions to get
	\begin{equation}
	\label{eq:zetanop1}
		\zeta_n(u) = - \frac{1}{\sqrt{n}} \sum_{i=1}^n g_\theta(X_i, u) + \oh_{\prob_\theta^n}(1), \qquad n \to \infty,
	\end{equation}
	where
	\[
		g_\theta(x, u) =
		\frac{\1\{x \le F_\theta^{-1}(u)\} - u}{f_\theta\bigl(F_\theta^{-1}(u)\bigr)}
		+ \psi_\theta(x)^\top \nabla_\theta F_\theta^{-1}(u).
	\]

	Differentiating the identity $u = \int_{-\infty}^{F_\theta^{-1}(u)} f_\theta(x) \, \diff x$ with respect to $\theta$ using Leibniz' integral rule yields, after some calculations, the identity
	\begin{equation}
	\label{eq:nablaidentity}
	\nabla_\theta F_\theta^{-1}(u)
	= - \frac{1}{f_\theta\bigl( F_\theta^{-1}(u) \bigr)} 
	\expec_\theta[ \dot{\ell}_\theta(X) \, \1\{X \le F_\theta^{-1}(u)\}].
	\end{equation}
	The regularity property~\eqref{eq:AEL} of the influence function $\psi_\theta$, the centering property $\expec_\theta[\dot{\ell}_\theta(X)] = 0$ of the score function in Assumption~\ref{ass:boot:cons} and the identity~\eqref{eq:nablaidentity} for the gradient $\nabla_\theta F_\theta^{-1}(u)$ combine to imply that
	\begin{equation}
	\label{eq:crosscov0}
		\expec_\theta \left[
			\dot{\ell}_\theta(X) \, g_\theta(X, u)
		\right]
		= 0.
	\end{equation}
	
	Let $u_j \in (0, 1)$ for $j = 1, \ldots, m$.
	By the multivariate central limit theorem, the expansions~\eqref{eq:LAN} and~\eqref{eq:zetanop1} combine with Slutsky's lemma to yield the convergence in distribution of the sequence of $(m+1)$-dimensional random vectors
	\[
		\left( 
			\zeta_n(u_1), \ldots, \zeta_n(u_m),
			\log \prod_{i=1}^n \frac{f_{\theta_n}}{f_\theta}(X_i)
		\right)
	\]
	to a certain $(m+1)$-variate normal variable.
	Each of the first $m$ components of the limiting normal random vector is centred and, by Eq.~\eqref{eq:crosscov0}, uncorrelated with the $(m+1)$th one.
	By Le Cam's third lemma \citep[Example~6.7]{vdv98}, the limit distribution of $(\zeta_n(u_j))_{j=1}^m$ under $\theta_n$ is then the same as under $\theta$: an $m$-variate centred normal distribution with covariance matrix $\Gamma_\theta$ having elements
	\begin{equation}
	\label{eq:Gamma}
		\Gamma_\theta(j_1, j_2) =
		\expec_\theta \left[ g_\theta(X, u_{j_1}) \, g_\theta(X, u_{j_2}) \right],
		\qquad j_1, j_2 \in \{1, \ldots, m\}. \qedhere
	\end{equation}
\end{proof}

The above argument for the consistency of the parametric bootstrap in case $d=1$ was made possible by the representation in Eqs~\eqref{eq:Rn}--\eqref{eq:zetan} of the normalized test statistic $R_n$ in terms of the empirical quantile process $\zeta_n$.
This representation made it possible to find the invariant limit distribution of $R_n$ under contiguous alternatives $\theta_n = \theta + \Oh(1/\sqrt{n})$.
In case $d \ge 2$, however, no sufficiently explicit representations of the empirical Wasserstein distance are hitherto known to enable a similar analysis.
\egroup

\section{Some other GoF tests}
\label{sec:simu:simple:other}

We provide details about the tests appearing in the comparisons in Section~\ref{sec:simu:simple}.

\citet*{rippl2016limit} consider the fully specified Gaussian null hypothesis $\mathcal{H}_0^n : \prob = \normal_d(\mu_0, \Sigma_0)$   with given mean and covariance. Recall that the squared $2$-Wasserstein distance between two $d$-variate Gaussian distributions~is  
\[
W_2^2\bigl(\normal_d(\mu_1,\Sigma_1),\normal_d(\mu_2, \Sigma_2)\bigr)
= \norm{\mu_1 -\mu_2}^2 + \tr\bigl\{\Sigma_1 + \Sigma_2 - 2 (\Sigma_1^{1/2}\Sigma_2\Sigma_1^{1/2})^{1/2}\bigr\}.
\]
The Rippl--Munk--Sturm test statistic is $W_2^2\bigl(\normal_d(\overline{X}_n, S_{n,X}), \, \normal_d(\mu_0, \Sigma_0)\bigr)$, with~$\overline{X}_n$ and~$S_{n,X}$ the sample mean and sample covariance matrix, respectively. This test is sensitive to changes in the parameters of the Gaussian distribution but not to other types of alternatives. Calculation of the test statistic is straightforward. To compute critical values, we relied on a Monte Carlo approximation, drawing many samples from the Gaussian null distribution and taking the empirical quantiles of the resulting test statistics.

\citet{khmaladze2016unitary} constructs empirical processes in such a way that they are asymptotically distribution-free, which facilitates their use for hypothesis testing. A special case of the construction is as follows. Let the $d$-variate cumulative distribution function (cdf) $F$ be absolutely continuous with joint density $f$, marginal densities $f_1, \ldots, f_d$, and copula density $c$. Define 
\[
	l(x) = \left\{c\bigl(F_1(x_1), \ldots, F_d(x_d)\bigr)\right\}^{1/2},
	\qquad x \in \Rd,
\]
with $F_1, \ldots, F_d$ the marginal cdfs of $F$. The $d$-variate cdf $G(x) = \prod_{j=1}^d F_j(x_j)$ has the same margins as~$F$, but coupled via the independence copula. Letting 
\[
	\kappa(x) = \int_{(-\infty, x]} l(y) \, f(y) \, \diff y
	\quad\text{ and }\quad \kappa = \int l(y) \, f(y) \, \diff y,
\]
it follows from Corollary~4 in \citet{khmaladze2016unitary}   that the empirical process
\[
\tilde{v}_{F,n}(x) = 
\frac{1}{\sqrt{n}} \sum_{i=1}^n \bigl\{l(X_i) \, \1(X_i \le x) - \kappa(x)\bigr\} 
-
\frac{G(x) - \kappa(x)}{1 - \kappa} \frac{1}{\sqrt{n}} \sum_{i=1}^n \bigl\{l(X_i) - \kappa\bigr\}
\]
of an independent random sample $X_1, \ldots, X_n$ from $F$ converges weakly to\linebreak  a~$G$-Brownian bridge, i.e., the same weak limit of the ordinary empirical process
\[
	v_{G,n}(x) 
	= \frac{1}{\sqrt{n}} \sum_{i=1}^n \bigl\{ \1(Y_i \le x) - G(x) \bigr\}
\]
of an independent random sample $Y_1, \ldots, Y_n$ from $G$. The asymptotic distribution of a test statistic based on $\tilde{v}_{F,n}$ which is invariant with respect to coordinate-wise continuous monotone increasing transformations is thus the same as if $F$ (or $G$) were the uniform distribution on $[0, 1]^d$. This includes the Kolmogorov--Smirnov type statistic~$\sup_{x \in \Rd} |\tilde{v}_{F,n}(x)|$, which (with $F$ the cdf of~$\prob_0$) we consider in Section~\ref{sec:simu:simple} for comparison with the Wasserstein-based test. In case $F$ has independent margins,   $F$ and~$G$ coincide and the procedure reduces to a classical Kolmogorov--Smirnov test. To ensure that the test has the right finite-sample size, we calculate critical values by  Monte Carlo approximation rather than by relying on asymptotic theory. 

\section{Algorithms for the computation of critical values}
\label{app:algo}


%
%
%
%
%
%

Our test statistics involve the Wasserstein distance between an empirical measure and a continuous one. Calculating such a distance requires solving a semi-discrete optimal transport problem (Section~\ref{sec: computations}).

In dimension $d = 2$ and for the Wasserstein distance of order $p = 2$, we relied on the function \textsf{semidiscrete} in the \textsf{R} package \textsf{transport} \citep{transport}, which implements the method of \citet{merigot2011multiscale}. The method starts from a discretization of the source density. The quality of approximation can be set by choosing a sufficiently fine mesh and selecting the tolerance parameter to a low value. The meshes considered here are providing approximately $10^5$ cells.

For the simulations involving the Wasserstein distance of order $p = 1$ or in dimension $d$ larger than two, we resort to our own implementation of the stochastic average gradient algorithm as employed in \citet{genevay2016stochastic}. 
The number of random points chosen for the reference measure was $2\times 10^5$ which corresponds to a thousand times the sample sizes considered in the various simulation settings. The $C$ parameter appearing in their algorithm was set to~$1$. 

The test statistics in \eqref{eq:Tn} and \eqref{eq:Tgroup} involve a fixed continuous measure $\prob_0$ or $\probQ_0$, respectively, but the ones in~\eqref{eq:TMntheta} and~\eqref{eq:TMn:eta} concern a continuous measure with estimated parameter $\hat{\theta}_n$ or $\hat{\psi}_n$. 
Moreover, to calculate critical values with the parametric bootstrap, even a single execution of the test requires a large number of evaluations of the test statistic at random parameter values. 
To speed up the calculations, we perform the following two steps prior to observing the data:
\begin{enumerate}
	\item We compute the discretizations of the target density mentioned above for each value of the unknown parameter in a large but finite subset of the parameter space. We then force the Monte Carlo replications of the parameter estimates to take values in that subset. In this way, we do not need to recompute the discretization of the target density each time.
	\item We compute the critical values at a finite subset of the parameter space, by drawing random samples of the test statistic for each value of the parameter in that finite subset and applying the reduction of Step~1. Then we learn the critical value as a function of the (continuous) parameter by smoothing. See Figure~\ref{fig:c_theta} for an illustration.
\end{enumerate}

\begin{figure}
\begin{center}
\includegraphics[width=100mm, height=60mm]{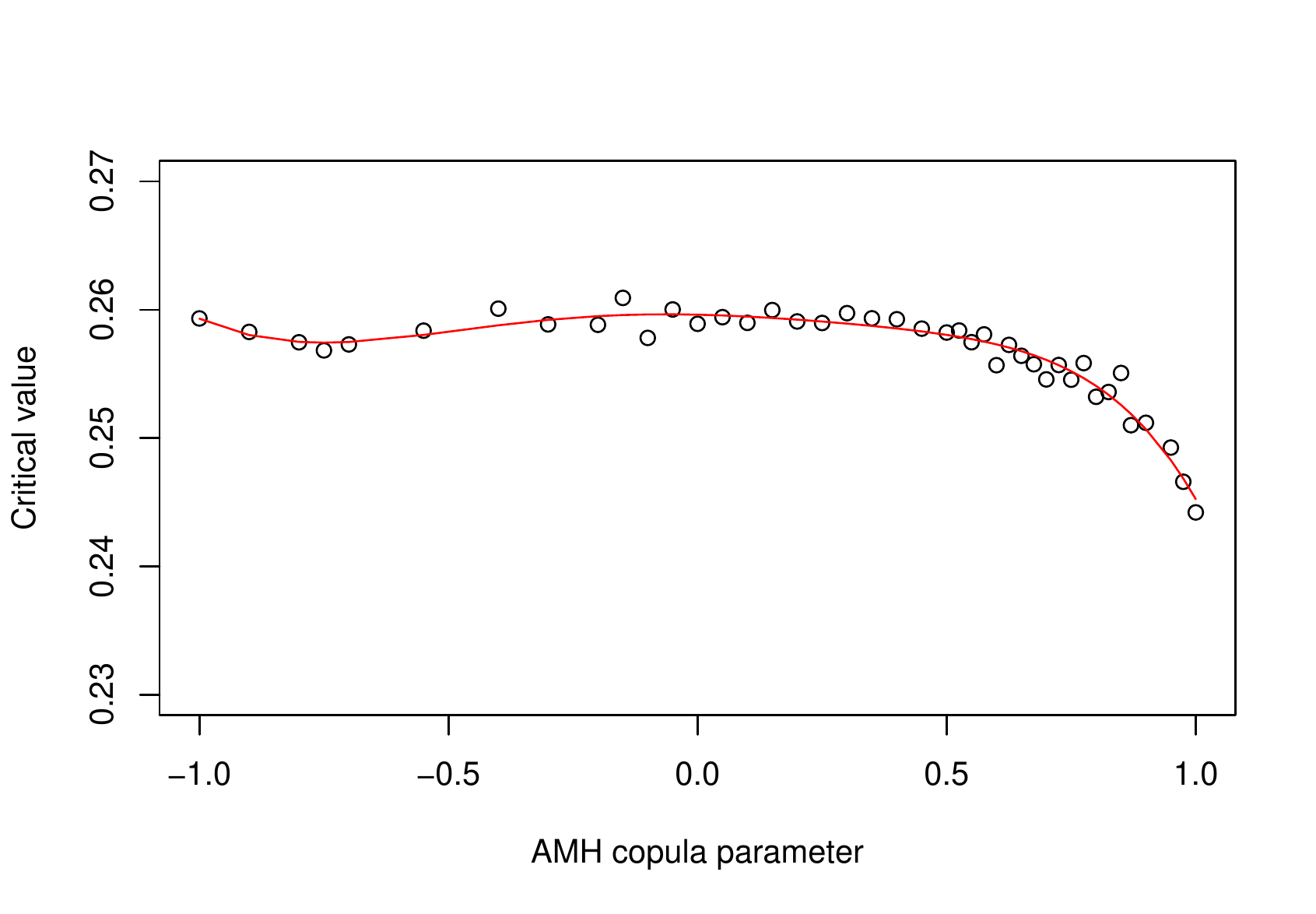}
\end{center}
\caption{\label{fig:c_theta} \slshape\small Illustration of Step~2 in Appendix~\ref{app:algo} for learning the critical value function of the $2$-Wasserstein GoF test for the bivariate five-parameter Gaussian--AMH model in Section~\ref{sec:simu:param:GaussAMH} using the location--scale reduction in Section~\ref{sec:param:group}. The function $\psi \mapsto c_{\model}(\alpha, n, \psi)$ (in red) is constructed by smoothing Monte Carlo estimates (circles) of $c_{\model}(\alpha, n, \psi)$ for $\psi \in \Psi' \subseteq \Psi = [-1, 1]$, with $\alpha = 0.05$, $n = 200$ and $B = 1\,000$ samples per point. The smoother is a 6th-degree polynomial fitted by ordinary least squares.}
\end{figure}

\section{A boomerang-shaped distribution}
\label{sec: bana}

The ``boomerang-shaped'' distribution in Section~\ref{sec:simu:simple} and Figure~\ref{fig: GausNull}(f) is a mixture 
\begin{align}
\nonumber
(1-2p)
 \, \normal_2\left(
\begin{pmatrix} \phantom{-}0\phantom{.7} \\ -0.7 \end{pmatrix}, 
 \begin{pmatrix} 
 0.35^2&0 \\
0& 0.35^2 
 \end{pmatrix} 
  \right)&+ 
p 
 \, \normal_2\left(
\begin{pmatrix} -0.9 \\ \phantom{-}0.3 \end{pmatrix}, 
 \begin{pmatrix} 
 \phantom{-}0.358&-0.55 \\
-0.55\phantom{0}& \phantom{-}1.02 
 \end{pmatrix} 
  \right)\\
\label{eq:banana}
  &+ 
p
   \, \normal_2\left(
\begin{pmatrix} \phantom{-}0.9 \\ \phantom{-}0.3 \end{pmatrix}, 
 \begin{pmatrix} 
 \phantom{-}0.358& \phantom{-}0.55 \\
\phantom{-}0.55\phantom{0}& \phantom{-}1.02 
 \end{pmatrix}
  \right).
\end{align}
of three Gaussian components. 
Figure~\ref{fig:banana} shows a scatterplot for $p=0.35$ of a random sample of size $n=500$ from this distribution.

\begin{figure}
\begin{center}
\includegraphics[width=100mm, height=60mm]{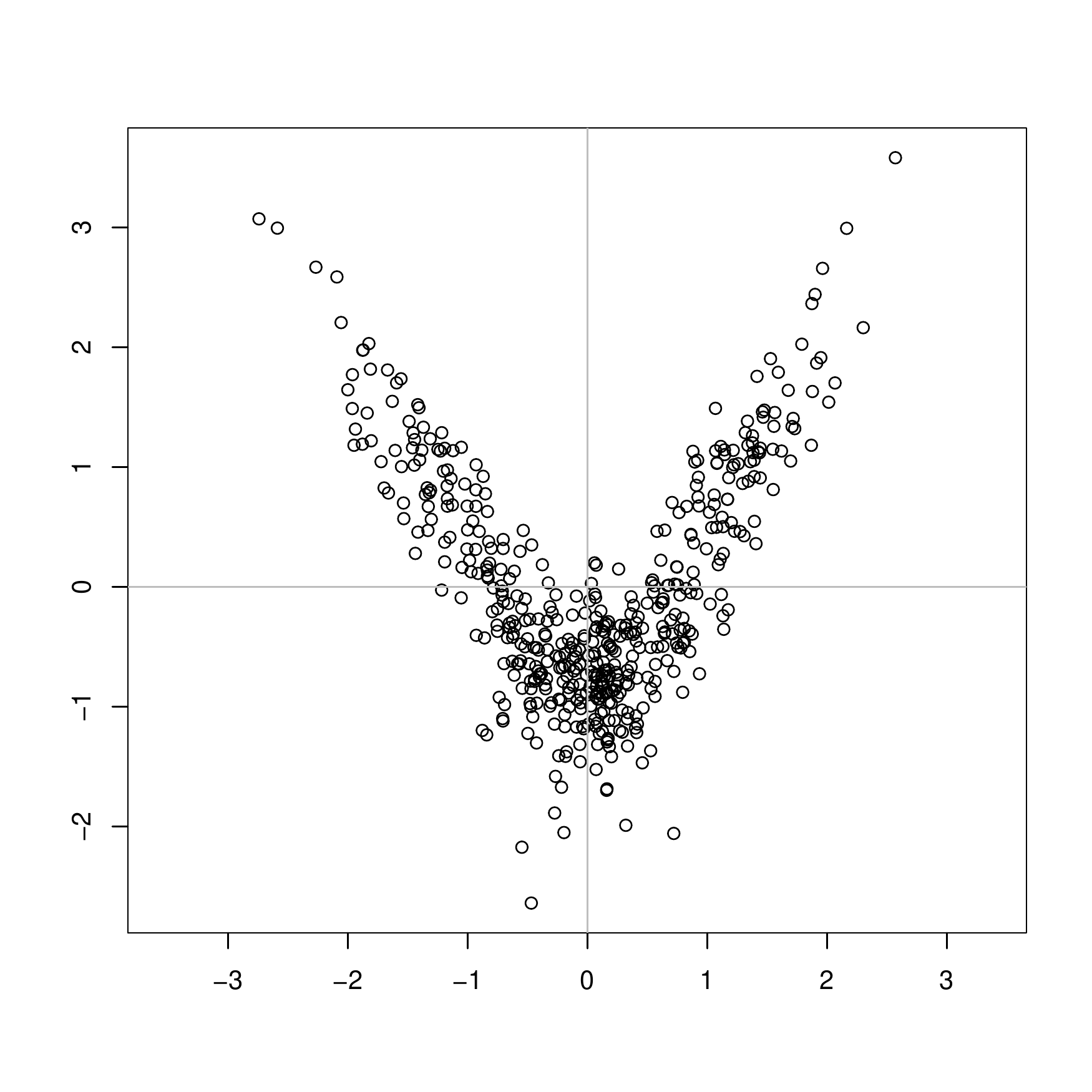}
\end{center}
\caption{\label{fig:banana} \slshape\small Scatterplot of a sample of size $500$ from the ``boomerang-shaped'' mixture~\eqref{eq:banana}.}
\end{figure}

\end{document}